\documentclass[12pt]{article}

\usepackage[english]{babel}
\usepackage{amsmath,amssymb,amsthm,mathtools,mathrsfs,bm}
\usepackage{graphicx,xcolor,booktabs,array,enumitem,cite,microtype}
\usepackage{tikz}
\usetikzlibrary{decorations.markings,decorations.pathmorphing,patterns,
  positioning,backgrounds,arrows.meta,calc}
\usepackage[colorlinks=true,linkcolor=blue,citecolor=blue,urlcolor=blue,
  unicode=true]{hyperref}

\tikzset{partial ellipse/.style args={#1:#2:#3}{insert path={+ (#1:#3) arc (#1:#2:#3)}}}
\tikzset{->-/.style={decoration={markings,mark=at position #1 with {\arrow{>}}},postaction={decorate}}}
\tikzset{-<-/.style={decoration={markings,mark=at position #1 with {\arrow{<}}},postaction={decorate}}}

\hypersetup{
  pdftitle={Long-time asymptotics of a full arbitrary-genus dark soliton gas for the defocusing nonlinear Schrodinger equation},
  pdfauthor={Anonymous Author(s)},
  pdfsubject={Defocusing NLS full soliton gas and Riemann-Hilbert asymptotics},
  pdfkeywords={soliton gas, defocusing NLS, Riemann-Hilbert problem, theta function}
}
\numberwithin{equation}{section}

\newtheorem{theorem}{Theorem}[section]
\newtheorem{lemma}[theorem]{Lemma}

\newtheorem{proposition}[theorem]{Proposition}
\newtheorem{definition}[theorem]{Definition}
\newtheorem{assumption}[theorem]{Assumption}
\newtheorem{RHP}[theorem]{Riemann--Hilbert problem}
\theoremstyle{remark}
\newtheorem{remark}[theorem]{Remark}

\newcommand{\C}{\mathbb C}
\newcommand{\R}{\mathbb R}

\newcommand{\ii}{\mathrm i}
\newcommand{\dd}{\mathrm d}
\newcommand{\ee}{\mathrm e}
\newcommand{\Rea}{\operatorname{Re}}
\newcommand{\Ima}{\operatorname{Im}}
\newcommand{\diag}{\operatorname{diag}}
\newcommand{\Res}{\operatorname*{Res}}

\newcommand{\abs}[1]{\lvert#1\rvert}
\newcommand{\ol}{\overline}

\newcommand{\ThetaN}{\Theta^{[N]}}
\newcommand{\RN}{R_N}
\newcommand{\PN}{P_N}
\newcommand{\QN}{Q_N}

\newcommand{\CN}{\mathcal C_N}
\newcommand{\JN}{J_N}
\newcommand{\tauN}{\tau^{[N]}}
\newcommand{\AN}{A_N}
\newcommand{\TN}{\mathcal T_N}

\title{\Large\bfseries Long-time asymptotics of a full arbitrary-genus dark soliton gas\\[1mm]
	for the defocusing nonlinear Schr\"odinger equation}
\author{\hspace{0.6 cm}{}Dedi Yan$^{b}$, Xianguo Geng$^{a,b}$\footnote{\footnotesize
		Corresponding author.{\sl E-mail address}: xggeng@zzu.edu.cn}, Mingming Chen$^{b}$\\ 
	\leftline{\hspace{0.6 cm}{\small{\sl $^{a}$ School of Mathematics and Statistics, North China University of Water Resources }}}\\
	\leftline{\hspace{0.6 cm}{\small{\sl \quad and Electric Power, Zhengzhou, Henan 450011, People's Republic of China}}}\\
	\leftline{\hspace{0.6 cm}{\small{\sl $^{b}$ School of Mathematics and Statistics, Zhengzhou University, 100 Kexue Road, Zhengzhou, }}}\\
	\leftline{\hspace{0.6 cm}{\small{\sl \quad Henan 450001, People's Republic of China}}}}
\date{}

\begin{document}
\maketitle

\begin{abstract}
We introduce a full arbitrary-genus dark soliton gas for the defocusing nonlinear Schr\"odinger equation with finite-density boundary conditions. Starting from a generalized meromorphic Riemann--Hilbert problem with two alternating residue families on each unit-circle arc, we derive an exact thermodynamic limit whose jump matrix contains two nonzero continuum densities. The limiting Riemann--Hilbert problem is uniquely solvable. In contrast with the half dark-soliton gas, every spectral arc of the full gas carries both oscillatory exponentials. We analyze the resulting problem by the Deift--Zhou nonlinear steepest-descent method on a fixed genus-$N$ spectral curve. The moving point in each mixed sector is a stationary factorization-switching point, not a branch point. The active arc is split into two parts and opened crosswise, while lenses are opened around every remaining arc. After removal of exponentially small lens jumps, the model contour therefore retains all $N$ spectral arcs in every self-similar sector. A quotient-curve zero-counting argument proves strict monotonicity of the characteristic velocity and the global ordering of all endpoint velocities, so the self-similar line is divided into $2N+1$ nonempty sectors. The leading term is an $N$-dimensional Riemann-theta finite-gap solution. The error is $O(t^{-1})$ in the $N+1$ pure sectors and $O(t^{-1/2})$ in the $N$ mixed sectors, uniformly away from the critical rays.
\end{abstract}

\noindent\textbf{Keywords.} full soliton gas; defocusing nonlinear Schr\"odinger equation; dark soliton; Riemann--Hilbert problem; nonlinear steepest descent; arbitrary genus; theta function.\\
\textbf{Mathematics Subject Classification.} 35Q55, 35B40, 37K15, 30E25.

\tableofcontents

\section{Introduction}\label{sec:intro}

We consider the defocusing nonlinear Schr\"odinger equation in finite-density normalization,
\begin{equation}\label{eq:dnls}
 \ii q_t+q_{xx}-2\bigl(\abs{q}^2-1\bigr)q=0,
 \qquad \abs{q(x,t)}\longrightarrow1\quad (\abs{x}\to\infty).
\end{equation}
Its localized coherent structures are dark solitons, namely density depressions propagating on a nonzero background.  Their basic form goes back to Tsuzuki \cite{Tsuzuki1971}; physical realizations and surveys in Bose--Einstein condensates and nonlinear optics may be found in \cite{BurgerEtAl1999,Frantzeskakis2010,KevrekidisEtAl2015,PitaevskiiStringari2003,KivsharAgrawal2003}.  The inverse-scattering and long-time stability theory for dark multi-solitons under finite-density boundary conditions is well developed; see \cite{FaddeevTakhtajan1987,ZakharovShabat1973,CuccagnaJenkins2016}.  Related dNLS hydrodynamic Riemann problems and optical experiments are studied in \cite{ElEtAl1995,Jenkins2015,XuEtAl2017,SprengerHoeferEl2018,BendahmaneEtAl2022,WangYan2025}.  The associated spectral theory is naturally written in the Joukowski variable
\begin{equation}\label{eq:j-map}
 k=k(z)=\frac12\left(z+z^{-1}\right),
\end{equation}
which separates the two sheets of the nonzero-background spectral plane.  The price of this uniformization is the strong involution $z\mapsto z^{-1}$ and a prescribed singularity of the matrix Riemann--Hilbert unknown at $z=0$; these features also play a central role in the arbitrary-genus half dark-soliton gas \cite{BertolaWangYanZhu2026}.

The notion of a soliton gas goes back to Zakharov's kinetic description \cite{Zakharov1971}.  Its thermodynamic and kinetic theory was developed in \cite{El2003,ElKamchatnov2005,ElKamchatnovPavlovZykov2011,ElTovbis2020,CongyElRoberti2021}; for dark dNLS gases and condensates, see \cite{TovbisWang2025}.  On the Riemann--Hilbert side, primitive potentials provide deterministic full-jump realizations of soliton-gas-type states \cite{DyachenkoZakharovZakharov2016,Nabelek2020}.  Rigorous nonlinear steepest-descent asymptotics for soliton gases were initiated for KdV in \cite{GirottiGravaJenkinsMcLaughlin2021} and further developed in \cite{GirottiEtAl2023,GengYanJia2025,WangZhuZhu2026,HanZhangDong2025,ZhangLing2025}.  Condensate limits and full-gas scattering backgrounds for focusing NLS have recently been investigated in \cite{GkogkouMazzucaMcLaughlin2025,GravaJenkinsZhangZhang2026a,GravaJenkinsZhangZhang2026b}.  Arbitrary-genus and nonzero-background gas asymptotics also appear in \cite{YanGengWang2026,YanGengJiao2026}.  A two-density continuum limit obtained from two interlacing residue families was developed for a new KdV soliton gas in \cite{YanGengWei2026}, and the corresponding full-jump mechanism was extended to the Camassa--Holm setting in \cite{FullCH2026}.

A half dark-soliton gas is obtained by filling finitely many arcs of the unit circle with one family of dark-soliton poles \cite{BertolaWangYanZhu2026}.  Its continuum Riemann--Hilbert problem has one triangular jump on each filled arc.  The corresponding long-time analysis produces a genus cascade because only part of the spectrum contributes to the leading model in a given self-similar sector.  The full gas studied here has a different continuum limit: two interlacing pole families produce two nonzero spectral densities on every arc, in direct analogy with the full-jump constructions in \cite{YanGengWei2026,FullCH2026}.  Thus the upper jump is
\begin{equation}\label{eq:intro-full-jump}
 J_M(z)=\frac{1}{1+r(z)\rho(z)}
 \begin{pmatrix}
 1-r(z)\rho(z)&-2\ii\rho(z)\ee^{-\ii\Phi(z;x,t)}\\
 -2\ii r(z)\ee^{\ii\Phi(z;x,t)}&1-r(z)\rho(z)
 \end{pmatrix},
\end{equation}
where
\begin{equation}\label{eq:phase-intro}
 \Phi(z;x,t)=x(z-z^{-1})-t(z^2-z^{-2}).
\end{equation}
Both oscillatory exponentials occur simultaneously.  Therefore one must use two different triangular factorizations, selected according to the sign of the fixed-genus phase.  The long-time analysis is carried out by the Deift--Zhou nonlinear steepest-descent method \cite{DeiftZhou1993,DeiftItsZhou1997,DeiftZhou2002}, with a fixed-curve $g$-function and a region-dependent Szeg\H{o} factor.  We use standard Fredholm and vanishing-lemma theory for Riemann--Hilbert problems \cite{Zhou1989}, and the finite-gap model is written with the classical theta-function machinery of \cite{FarkasKra1992,Fay1973}.  Closely related nonlinear steepest-descent analyses for NLS equations with nonzero or step-like backgrounds include \cite{BiondiniMantzavinos2017,BiondiniLiMantzavinos2021,BoutetItsKotlyarov2009,BoutetLenellsShepelsky2021,BoutetLenellsShepelsky2022}.  Hyperelliptic and step-like steepest-descent constructions for KdV-, mKdV-, and Gross--Pitaevskii-type systems that are useful points of comparison are developed in \cite{GrunertTeschl2009,GravaMinakov2020,KotlyarovMinakov2012,GengWangChen2021}.

Let
\begin{equation}\label{eq:endpoints-intro}
 \eta_m=\ee^{\ii\varphi_m},\qquad
 0<\varphi_1<\varphi_2<\cdots<\varphi_{2N}<\pi,
\end{equation}
and define the $N$ upper spectral arcs
\begin{equation}\label{eq:bands-intro}
 \Gamma_j=(\eta_{2j-1},\eta_{2j}),\qquad j=1,\ldots,N,
\end{equation}
with their reciprocal reflections $\ol\Gamma_j=(\eta_{2j}^{-1},\eta_{2j-1}^{-1})$ in the lower half-plane.  Put
\begin{equation}\label{eq:e-intro}
 e_m=\Rea\eta_m,
 \qquad 1>e_1>e_2>\cdots>e_{2N}>-1.
\end{equation}
The fixed quotient curve in the $k$-plane is
\begin{equation}\label{eq:curve-intro}
 \CN:\qquad y^2=(k^2-1)\prod_{m=1}^{2N}(k-e_m),
\end{equation}
which has genus $N$.

The normalized second-kind differentials introduced in Section~\ref{sec:g} determine the velocity function
\begin{equation}\label{eq:V-intro}
 V_N(z)=\frac{2\QN(z)}{\PN(z)}.
\end{equation}
We set
\begin{equation}\label{eq:v-intro}
 v_m=V_N(\eta_m),\qquad m=1,\ldots,2N,
\end{equation}
Theorem~\ref{thm:velocity-ordering} below proves, for every strictly ordered endpoint set \eqref{eq:e-intro}, that
\begin{equation}\label{eq:v-order-intro}
 v_1>v_2>\cdots>v_{2N}.
\end{equation}
Thus no additional velocity-ordering assumption is imposed.
The self-similar line $\xi=x/t$ is then partitioned into $2N+1$ open sectors.  The pure sectors are
\begin{align}
 \mathcal P_0&=(v_1,+\infty),\label{eq:P0-intro}\\
 \mathcal P_\ell&=(v_{2\ell+1},v_{2\ell}),\qquad \ell=1,\ldots,N-1,\label{eq:Pl-intro}\\
 \mathcal P_N&=(-\infty,v_{2N}),\label{eq:PN-intro}
\end{align}
and the mixed sectors are
\begin{equation}\label{eq:Ml-intro}
 \mathcal M_\ell=(v_{2\ell},v_{2\ell-1}),\qquad \ell=1,\ldots,N.
\end{equation}
In $\mathcal P_\ell$, the first $\ell$ arcs use the $r$-factorization and the remaining $N-\ell$ arcs use the $\rho$-factorization.  In $\mathcal M_\ell$, the arcs with $j<\ell$ use the $r$-factorization, the arcs with $j>\ell$ use the $\rho$-factorization, and $\Gamma_\ell$ is split at the unique point $\alpha_\ell(\xi)$ satisfying
\begin{equation}\label{eq:alpha-intro}
 V_N(\alpha_\ell(\xi))=\xi,
 \qquad \alpha_\ell(\xi)\in\Gamma_\ell.
\end{equation}

The main geometric point of this paper is that lenses are opened around \emph{all} the arcs, not only around the arcs preceding the active one.  In a mixed sector, the two lens systems on $\Gamma_\ell$ are joined crosswise at $\alpha_\ell$, while every $\Gamma_j$, $j>\ell$, retains its own $\rho$-lens.  After the lens-lip jumps become exponentially small, the central jump $-\ii\sigma_1$ survives on the whole of every upper arc $\Gamma_j$.  Hence the outer model has fixed genus $N$ in every sector.

To state the leading term, let $\ThetaN(\cdot;\tauN)$ be the Riemann theta function of \eqref{eq:curve-intro}, let $\JN$ be the normalized Abel map, and set
\begin{equation}\label{eq:AN-intro}
 \AN=\frac12\sum_{j=0}^{N}\bigl(e_{2j}-e_{2j+1}\bigr),
 \qquad e_0=1,\quad e_{2N+1}=-1.
\end{equation}
Then, away from the critical rays, the leading term has the universal form
\begin{equation}\label{eq:main-intro}
 q(x,t)=\AN\ee^{-2\ii t g_{N,\infty}(\xi)}\delta_{N,\infty}(\xi)^{-2}
 \TN\!\left(\frac{t\Omega_N(\xi)+\Delta_N(\xi)}{2\pi}\right)
 +\mathcal E_N(x,t),
\end{equation}
where
\begin{equation}\label{eq:TN-intro}
 \TN(\mathbf Z)=
 \frac{\ThetaN(-\mathbf Z+2\JN(\infty_+);\tauN)\ThetaN(0;\tauN)}
 {\ThetaN(2\JN(\infty_+);\tauN)\ThetaN(\mathbf Z;\tauN)}.
\end{equation}
Here $\Omega_N(\xi)$ is the vector of real $b$-periods of the fixed-curve phase differential, defined componentwise in \eqref{eq:OmegaNm-explicit}--\eqref{eq:OmegaN-vector}.  The phase vector $\Delta_N$ and the scalar $\delta_{N,\infty}$ are determined by the region-dependent Szeg\H{o} function.  The error is $O(t^{-1})$ in $\mathcal P_\ell$ and $O(t^{-1/2})$ in $\mathcal M_\ell$.

The paper is organized as follows.  Section~\ref{sec:RHP} formulates the full arbitrary-genus gas Riemann--Hilbert problem.  Section~\ref{sec:g} constructs the fixed genus-$N$ $g$-function and proves strict monotonicity and the complete ordering of its endpoint velocities.  Section~\ref{sec:factor} gives the two factorizations and the Szeg\H{o} function.  Section~\ref{sec:lens} performs the all-band lens opening, including the crossed opening on the active arc.  Section~\ref{sec:model} solves the fixed genus-$N$ model problem.  The long-time theorem and its proof are given in Sections~\ref{sec:theorem} and~\ref{sec:proof}.

\section{From two interlacing pole families to the full arbitrary-genus dNLS gas}\label{sec:RHP}

We reserve $N$ for the number of spectral arcs, and hence for the genus of the fixed quotient curve.  The thermodynamic limit is taken with $N$ fixed and with the number of poles on each arc tending to infinity.  More precisely, if $n_j$ denotes the number of pole pairs of each type on $\Gamma_j$, then
\begin{equation}\label{eq:total-discrete-number}
 \min_{1\le j\le N}n_j\longrightarrow\infty,
 \qquad
 \mathcal N:=\sum_{j=1}^{N}n_j\longrightarrow\infty.
\end{equation}
Thus the phrase ``$\mathcal N\to\infty$'' below does not change the prescribed genus $N$.

Throughout the paper every upper arc $\Gamma_j$ is oriented counterclockwise, from $\eta_{2j-1}$ to $\eta_{2j}$.  Its reciprocal image $\ol\Gamma_j$ is oriented clockwise, from $\eta_{2j}^{-1}$ to $\eta_{2j-1}^{-1}$.  The $+$ boundary value is always taken from the left side of the oriented contour.  These conventions are used in all residue-to-jump calculations and in the lower-arc symmetry formulas.

We use
\begin{equation*}
 \sigma_1=\begin{pmatrix}0&1\\1&0\end{pmatrix},\qquad
 \sigma_2=\begin{pmatrix}0&-\ii\\ \ii&0\end{pmatrix},\qquad
 \sigma_3=\begin{pmatrix}1&0\\0&-1\end{pmatrix},
\end{equation*}
and the elementary triangular matrices
\begin{equation}\label{eq:LU-def}
 L(a):=I+aE_{21}=\begin{pmatrix}1&0\\a&1\end{pmatrix},
 \qquad
 U(b):=I+bE_{12}=\begin{pmatrix}1&b\\0&1\end{pmatrix}.
\end{equation}

\subsection{Alternating discrete spectrum on all \texorpdfstring{$N$}{N} arcs}

Write
\begin{equation}\label{eq:angle-endpoints}
 \hat{a}_j:=\arg\eta_{2j-1},\qquad \hat{b}_j:=\arg\eta_{2j},\qquad
 h_j:=\frac{\hat{b}_j-\hat{a}_j}{n_j},\qquad j=1,\ldots,N.
\end{equation}
On the $j$th upper arc we choose two interlacing sets
\begin{align}
 w_{m,j}&:=\exp\!\left(\ii\left[\hat{a}_j+\left(m-\frac34\right)h_j\right]\right),
 &&m=1,\ldots,n_j,\label{eq:w-points}\\
 z_{m,j}&:=\exp\!\left(\ii\left[\hat{a}_j+\left(m-\frac14\right)h_j\right]\right),
 &&m=1,\ldots,n_j.\label{eq:z-points}
\end{align}
The blue points $w_{m,j}$ carry upper-triangular, or $\rho$-type, residues, whereas the red points $z_{m,j}$ carry lower-triangular, or $r$-type, residues.  In the counterclockwise order on $\Gamma_j$,
\begin{equation}\label{eq:interlacing-order}
 \eta_{2j-1}\prec w_{1,j}\prec z_{1,j}\prec w_{2,j}\prec z_{2,j}
 \prec\cdots\prec w_{n_j,j}\prec z_{n_j,j}\prec\eta_{2j}.
\end{equation}
The lower-half-plane points are the reciprocal-conjugate points $w_{m,j}^{-1}=\ol{w_{m,j}}$ and $z_{m,j}^{-1}=\ol{z_{m,j}}$.  The complete alternating configuration on representative arcs is shown in Figure~\ref{fig:alternating-discrete-spectrum}.

\begin{figure}[htbp]
\centering
\begin{tikzpicture}[scale=2.65,line cap=round,line join=round,>=Stealth]
  \draw[gray!65,dashed,line width=.55pt] (0,0) circle (1);
  \fill (0,0) circle (.65pt);
  \node[scale=.54,below] at (0,-.06) {$0$};
  \node[scale=.56,right] at (1,0) {$1$};
  \node[scale=.56,left] at (-1,0) {$-1$};
  \tikzset{specarc/.style={black,line width=1.05pt},
           rdot/.style={red!78!black},
           rhodot/.style={blue!76!black}}
  % Four representative arcs: Gamma_1, Gamma_2, Gamma_{N-1}, Gamma_N.
  \foreach \aa/\bb/\lab in {11/29/1,45/64/2,91/111/{N-1},132/153/N}{
    \draw[specarc] (\aa:1) arc[start angle=\aa,end angle=\bb,radius=1];
    \draw[specarc] (-\bb:1) arc[start angle=-\bb,end angle=-\aa,radius=1];
    \foreach \m in {1,...,8}{
      \pgfmathsetmacro{\ang}{\aa+(\m-.5)*(\bb-\aa)/8}
      \ifodd\m
        \fill[rhodot] (\ang:1) circle (.88pt);
        \fill[rhodot] (-\ang:1) circle (.88pt);
      \else
        \fill[rdot] (\ang:1) circle (.88pt);
        \fill[rdot] (-\ang:1) circle (.88pt);
      \fi
    }
    \node[scale=.50] at ({(\aa+\bb)/2}:1.18) {$\Gamma_{\lab}$};
  }
  \node[scale=.68] at (75:1.04) {$\cdots$};
  \node[scale=.68] at (-75:1.04) {$\cdots$};
  \node[scale=.48,align=left,anchor=west] at (1.18,.43)
  {\textcolor{red!78!black}{$\bullet$} $z_{m,j}$: $r$-type residue\\
   \textcolor{blue!76!black}{$\bullet$} $w_{m,j}$: $\rho$-type residue};
  \draw[->,gray!75,line width=.45pt] (8:.88) arc[start angle=8,end angle=31,radius=.88];
  \node[scale=.46] at (20:.68) {counterclockwise};
\end{tikzpicture}
\caption{Two alternating discrete pole families on the $N$ upper unit-circle arcs and their reciprocal-conjugate copies on the lower arcs.  The drawing shows representative arcs $\Gamma_1,\Gamma_2,\Gamma_{N-1},\Gamma_N$; the same red-blue alternation is imposed on every $\Gamma_j$.  Red points carry lower-triangular $r$-type residues and blue points carry upper-triangular $\rho$-type residues.}
\label{fig:alternating-discrete-spectrum}
\end{figure}

\subsection{A generalized finite-pole dNLS Riemann--Hilbert problem}

The finite precursor below is a primitive-potential-type meromorphic problem.  It is not the standard reflectionless dark $\mathcal N$-soliton RHP, because the latter has only one triangular residue family on the upper unit circle.

\begin{RHP}[Two-family finite-pole precursor]\label{rhp:finite-full}
Find a $2\times2$ matrix $M^{(\mathbf n)}(z)=M^{(\mathbf n)}(z;x,t)$, where $\mathbf n=(n_1,\ldots,n_N)$, with the following properties.
\begin{enumerate}[label=(\roman*)]
\item $M^{(\mathbf n)}$ is meromorphic in $\C$ with simple poles at
\begin{equation}\label{eq:finite-pole-set}
 \bigcup_{j=1}^{N}\bigcup_{m=1}^{n_j}
 \{z_{m,j},w_{m,j},z_{m,j}^{-1},w_{m,j}^{-1}\}.
\end{equation}
\item As $z\to\infty$ and $z\to0$,
\begin{equation}\label{eq:finite-normalization}
 M^{(\mathbf n)}(z)=I+O(z^{-1}),\qquad
 M^{(\mathbf n)}(z)=\frac{\sigma_1}{z}+O(1),
\end{equation}
respectively.
\item The dNLS reductions are
\begin{equation}\label{eq:finite-symmetries}
 M^{(\mathbf n)}(z)=\sigma_1\ol{M^{(\mathbf n)}(\ol z)}\sigma_1
 =z^{-1}M^{(\mathbf n)}(z^{-1})\sigma_1.
\end{equation}
\item At the upper red and blue poles,
\begin{align}
 \Res_{z=z_{m,j}}M^{(\mathbf n)}(z)
 &=\lim_{z\to z_{m,j}}M^{(\mathbf n)}(z)
 \begin{pmatrix}0&0\\
 \chi_{m,j}\ee^{\ii\Phi(z_{m,j};x,t)}&0
 \end{pmatrix},\label{eq:red-residue}\\
 \Res_{z=w_{m,j}}M^{(\mathbf n)}(z)
 &=\lim_{z\to w_{m,j}}M^{(\mathbf n)}(z)
 \begin{pmatrix}0&
 \mu_{m,j}\ee^{-\ii\Phi(w_{m,j};x,t)}\\0&0
 \end{pmatrix}.
 \label{eq:blue-residue}
\end{align}
At the reciprocal lower poles,
\begin{align}
 \Res_{z=z_{m,j}^{-1}}M^{(\mathbf n)}(z)
 &=\lim_{z\to z_{m,j}^{-1}}M^{(\mathbf n)}(z)
 \begin{pmatrix}0&
 \ol{\chi_{m,j}}\ee^{-\ii\Phi(z_{m,j}^{-1};x,t)}\\0&0
 \end{pmatrix},\label{eq:red-lower-residue}\\
 \Res_{z=w_{m,j}^{-1}}M^{(\mathbf n)}(z)
 &=\lim_{z\to w_{m,j}^{-1}}M^{(\mathbf n)}(z)
 \begin{pmatrix}0&0\\
 \ol{\mu_{m,j}}\ee^{\ii\Phi(w_{m,j}^{-1};x,t)}&0
 \end{pmatrix}.
 \label{eq:blue-lower-residue}
\end{align}
\end{enumerate}
\end{RHP}

The two reductions in \eqref{eq:finite-symmetries} are compatible with the residue conditions if and only if
\begin{equation}\label{eq:discrete-reduction-coefficients}
 \ol{\chi_{m,j}}=-z_{m,j}^{-2}\chi_{m,j},\qquad
 \ol{\mu_{m,j}}=-w_{m,j}^{-2}\mu_{m,j}.
\end{equation}
Equivalently,
\begin{equation}\label{eq:real-discrete-weights}
 \chi_{m,j}=\ii z_{m,j}\kappa_{m,j},\qquad
 \mu_{m,j}=\ii w_{m,j}\lambda_{m,j},
 \qquad \kappa_{m,j},\lambda_{m,j}\in\R.
\end{equation}
No sign restriction on $\kappa_{m,j}$ or $\lambda_{m,j}$ is needed at the level of the meromorphic RHP.

Whenever RHP~\ref{rhp:finite-full} is solvable, the usual Lax-pair/Liouville argument gives a dNLS potential through
\begin{equation}\label{eq:finite-reconstruction}
 q^{(\mathbf n)}(x,t)=\lim_{z\to\infty}zM^{(\mathbf n)}_{21}(z;x,t).
\end{equation}

\subsection{Pole removal and the interlacing product}

For every $j$, define
\begin{equation}\label{eq:Bnj}
 B_{\mathbf n,j}(z):=\prod_{m=1}^{n_j}
 \frac{z-w_{m,j}}{z-z_{m,j}}.
\end{equation}
We assume that 
\begin{align}
 \chi_{m,j}
 &=a_{j}(z_{m,j})
 \Res_{z=z_{m,j}}B_{\mathbf n,j}(z),\label{eq:chi-interpolation}\\
 \mu_{m,j}
 &=-\frac{b_{j}(w_{m,j})}
 {1-a_{j}(w_{m,j})b_{j}(w_{m,j})}
 \Res_{z=w_{m,j}}B_{\mathbf n,j}(z)^{-1}.\label{eq:mu-interpolation}
\end{align}
The denominator in \eqref{eq:mu-interpolation} is essential: it is the finite-$\mathbf n$ cross term produced by the ordered product of the two triangular pole-removing factors.  We further assume
\[
a_j(z)=O(z-\eta_{2j-1}),\qquad k\to \eta_{2j-1},\ a_j(z)=O(z-\eta_{2j}),\qquad k\to \eta_{2j},
\]
\[
b_j(z)=O(z-\eta_{2j-1}),\qquad k\to \eta_{2j-1},\ b_j(z)=O(z-\eta_{2j}),\qquad k\to \eta_{2j},.
\]  Conditions \eqref{eq:discrete-reduction-coefficients} are imposed on the discrete values in \eqref{eq:chi-interpolation}--\eqref{eq:mu-interpolation}; in the limit they become the dNLS reduction conditions on the continuum densities.

Let $\mathscr C_j$ be a counterclockwise loop enclosing $\Gamma_j$ and no other spectral arc, and let $\mathscr C_j^\sharp$ enclose $\ol\Gamma_j$.  Set
\begin{equation}\label{eq:Ffinite-upper}
 F_{\mathbf n,j}(z):=
 L\!\left(-a_{j}(z)B_{\mathbf n,j}(z)\ee^{\ii\Phi(z;x,t)}\right)
 U\!\left(b_{j}(z)B_{\mathbf n,j}(z)^{-1}\ee^{-\ii\Phi(z;x,t)}\right).
\end{equation}
For a function $f$ defined near the upper arcs, use the Schwarz-reflected notation
\begin{equation}\label{eq:sharp-def-prelimit}
 f^\sharp(z):=\ol{f(\ol z)},\qquad z\text{ near a lower arc},
\end{equation}
and define
\begin{equation}\label{eq:Ffinite-lower}
 F_{\mathbf n,j}^\sharp(z)
 :=\sigma_1\ol{F_{\mathbf n,j}(\ol z)}\sigma_1
 =U\!\left(-a_{j}^\sharp(z)B_{\mathbf n,j}^\sharp(z)
 \ee^{-\ii\Phi(z;x,t)}\right)
 L\!\left(b_{j}^\sharp(z)(B_{\mathbf n,j}^\sharp(z))^{-1}
 \ee^{\ii\Phi(z;x,t)}\right),
\end{equation}
where $B_{\mathbf n,j}^\sharp(z)=\ol{B_{\mathbf n,j}(\ol z)}$.  This is exactly the factor that cancels the two lower residue families in \eqref{eq:red-lower-residue}--\eqref{eq:blue-lower-residue}.
The pole-removing transformation is
\begin{equation}\label{eq:Dfinite}
 D^{(\mathbf n)}(z):=M^{(\mathbf n)}(z)
 \begin{cases}
 F_{\mathbf n,j}(z),&z\text{ inside }\mathscr C_j,\\
 F_{\mathbf n,j}^\sharp(z),&z\text{ inside }\mathscr C_j^\sharp,\\
 I,&\text{elsewhere}.
 \end{cases}
\end{equation}
To check the cancellation, write $\varepsilon=z-w_{m,j}$ near a blue pole.  Since $B_{\mathbf n,j}(z)=B'_{\mathbf n,j}(w_{m,j})\varepsilon+O(\varepsilon^2)$, the singular coefficient of $F_{\mathbf n,j}$ is
\[
 b_{j}(w_{m,j})\Res_{z=w_{m,j}}B_{\mathbf n,j}(z)^{-1}
 \ee^{-\ii\Phi(w_{m,j};x,t)}E_{12}.
\]
The constant lower-right entry of the ordered product in \eqref{eq:Ffinite-upper} is
$1-a_{j}(w_{m,j})b_{j}(w_{m,j})$.  Hence, if
$M^{(\mathbf n)}=M_0(I+\mu_{m,j}\ee^{-\ii\Phi(w_{m,j})}E_{12}/\varepsilon)$ locally, the coefficient of $\varepsilon^{-1}$ in $M^{(\mathbf n)}F_{\mathbf n,j}$ is
\[
 \left[b_{j}\Res B_{\mathbf n,j}^{-1}
 +\mu_{m,j}(1-a_{j}b_{j})\right]
 \ee^{-\ii\Phi(w_{m,j})}M_0E_{12}=0
\]
by \eqref{eq:mu-interpolation}.  At a red pole, $B_{\mathbf n,j}$ has a simple pole and $B_{\mathbf n,j}^{-1}$ a simple zero; the singular coefficient is
$-a_{j}(z_{m,j})\Res B_{\mathbf n,j}\,\ee^{\ii\Phi(z_{m,j})}E_{21}$, which cancels \eqref{eq:red-residue} by \eqref{eq:chi-interpolation}.  The reflected calculation treats the two lower families.  Thus all poles are removed exactly.  The only jumps of $D^{(\mathbf n)}$ are on the loops $\mathscr C_j\cup\mathscr C_j^\sharp$, with upper-loop jump $F_{\mathbf n,j}$ and lower-loop jump $F_{\mathbf n,j}^\sharp$.

Denote by $\Gamma_j^{\rm cl}$ the closed unit-circle arc from $\eta_{2j-1}$ to $\eta_{2j}$.

\begin{proposition}[Limit of the alternating product]\label{prop:B-limit}
For every $j=1,\ldots,N$,
\begin{equation}\label{eq:B-limit}
 B_{\mathbf n,j}(z)\longrightarrow
 \beta_j(z):=
 \left(\frac{z-\eta_{2j-1}}{z-\eta_{2j}}\right)^{1/2},
 \qquad z\in\C\setminus\Gamma_j^{\rm cl},
\end{equation}
uniformly on compact subsets.  The branch is fixed by $\beta_j(z)\to1$ as $z\to\infty$, and
\begin{equation}\label{eq:beta-boundary}
 \beta_{j,+}(z)=-\beta_{j,-}(z),\qquad z\in\Gamma_j.
\end{equation}
\end{proposition}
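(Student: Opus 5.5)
Take logarithms and recognise a Riemann sum. Fix a compact set $K\subset\C\setminus\Gamma_j^{\rm cl}$. Parametrise the arc by angle, $\zeta(\theta)=\ee^{\ii\theta}$, and put $\theta_m^{w}=\hat a_j+(m-\tfrac34)h_j$ and $\theta_m^{z}=\hat a_j+(m-\tfrac14)h_j$. Then
\[
\log B_{\mathbf n,j}(z)=\sum_{m=1}^{n_j}\Bigl[\log\bigl(z-\ee^{\ii\theta_m^{w}}\bigr)-\log\bigl(z-\ee^{\ii\theta_m^{z}}\bigr)\Bigr]
=\sum_{m=1}^{n_j}\bigl[f(\theta_m^{w})-f(\theta_m^{z})\bigr],
\]
where $f(\theta)=\log(z-\ee^{\ii\theta})$. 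On the simply connected region $\C\setminus\{z\ :\ z=\ee^{\ii\theta}+s\,\ee^{\ii\theta}\tau,\ \dots\}$ no global branch is needed. Each summand is the logarithm of $(z-w_{m,j})/(z-z_{m,j})$, and this ratio is $1+O(h_j)$ uniformly for $z\in K$, because $\operatorname{dist}(K,\Gamma_j^{\rm cl})>0$. So I take the principal branch termwise.

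Taylor expansion in $\theta$ around the midpoint $\theta_m=\hat a_j+(m-\tfrac12)h_j$ gives
\[
f(\theta_m^{w})-f(\theta_m^{z})=-\tfrac12h_j\,f'(\theta_m)+O(h_j^3),
\]
uniformly on $K$. The error term is uniform because $f'''$ is bounded on $K\times[\hat a_j,\hat b_j]$; the odd-order symmetry kills the $h_j^2$ term. Summing, and using that the midpoint rule has error $O(h_j^2)$,
\[
\log B_{\mathbf n,j}(z)=-\tfrac12\int_{\hat a_j}^{\hat b_j}f'(\theta)\,\dd\theta+O(h_j^2)
=-\tfrac12\bigl[f(\hat b_j)-f(\hat a_j)\bigr]+O(h_j^2).
\]
The integral has to be understood as the continuous integral of $f'$, whose antiderivative is $\log(z-\ee^{\ii\theta})$ continued along $\theta$. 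This gives
\[
B_{\mathbf n,j}(z)\to\Bigl(\frac{z-\eta_{2j-1}}{z-\eta_{2j}}\Bigr)^{1/2}
\]
uniformly on $K$, with $O(n_j^{-2})$ convergence for the logarithm.

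The branch of the limit is the one obtained by continuously integrating $\partial_\theta\log(z-\ee^{\ii\theta})$ along the arc. That function is single valued and analytic on $\C\setminus\Gamma_j^{\rm cl}$, since for $z$ off the arc the path $\theta\mapsto z-\ee^{\ii\theta}$ avoids $0$. As $z\to\infty$ the integral tends to $0$, because $f'(\theta)=-\ii\ee^{\ii\theta}/(z-\ee^{\ii\theta})=O(1/z)$. So $\beta_j\to1$, which fixes the branch as claimed. Equivalently, each finite product tends to $1$ at $\infty$, and the uniform limit inherits this normalisation. Analyticity of $\beta_j$ off the arc also follows from uniform convergence of analytic functions.

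For the jump relation, note that $\beta_j^2=(z-\eta_{2j-1})/(z-\eta_{2j})$ is meromorphic with no jump, so $\beta_{j,+}=\pm\beta_{j,-}$ on $\Gamma_j$. To identify the sign, compare the continuous arguments of $z-\ee^{\ii\theta}$ as $z$ approaches a point $\ee^{\ii\theta_0}\in\Gamma_j$ from the two sides. The argument of $z-\ee^{\ii\theta}$, integrated over $\theta$ across $\theta_0$, differs by $\pm2\pi$ between the two limits. This is the winding difference of a small loop around the crossing point. Multiplied by $-\tfrac12$ it gives a phase difference $\ee^{\mp\ii\pi}=-1$, so $\beta_{j,+}=-\beta_{j,-}$. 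An equivalent route is to observe that a loop around exactly one endpoint $\eta_{2j-1}$ in $\C\setminus\Gamma_j^{\rm cl}$ is impossible without crossing the arc, and monodromy $-1$ of the square root around a single endpoint forces the sign change.

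The main obstacle is uniformity near the arc: the estimates degrade as $\operatorname{dist}(K,\Gamma_j^{\rm cl})\to0$. The statement only asks for compact subsets of the complement, though, so the constants depending on that distance are harmless. The only other delicate point is choosing the termwise principal logarithm consistently with the continuous branch. I would handle this by noting that both equal $\log$ of the exact product modulo $2\pi\ii$. Their difference is continuous on $K$, integer-valued, and tends to $0$ at infinity on the unbounded component, and it can be checked directly at one point on each bounded component (e.g.\ $z=0$).
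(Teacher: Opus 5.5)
Your proposal is correct and follows the same route as the paper: you take logarithms termwise, recognise a Riemann sum for $\frac12\int_{\hat a_j}^{\hat b_j}\ii\ee^{\ii\vartheta}(z-\ee^{\ii\vartheta})^{-1}\,\dd\vartheta$, and exponentiate. Your symmetric expansion about the midpoints $\theta_m$ sharpens the paper's $O(n_j^{-1})$ error to $O(n_j^{-2})$, and your winding argument spells out the sign relation \eqref{eq:beta-boundary}, which the paper simply calls standard. (The half-finished sentence about a ``simply connected region'' and the final branch-consistency paragraph can be dropped, because the termwise identity $\operatorname{Log}\frac{z-w_{m,j}}{z-z_{m,j}}=-\int_{\theta_m^w}^{\theta_m^z}f'(\theta)\,\dd\theta$, valid for small $h_j$ uniformly on $K$, already removes any $2\pi\ii$ ambiguity.)
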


\begin{proof}
For $z$ in a compact subset of $\C\setminus\Gamma_j^{\rm cl}$,
\begin{align*}
 \log B_{\mathbf n,j}(z)
 &=\sum_{m=1}^{n_j}
 \log\!\left(1+\frac{z_{m,j}-w_{m,j}}{z-z_{m,j}}\right)\\
 &=\frac12\sum_{m=1}^{n_j}
 \frac{\ii\ee^{\ii\vartheta_{m,j}}h_j}
 {z-\ee^{\ii\vartheta_{m,j}}}+O(n_jh_j^2),
\end{align*}
where $\vartheta_{m,j}$ may be chosen between the two adjacent angular nodes.  Since $n_jh_j^2=O(n_j^{-1})$ and the sum is a Riemann sum,
\begin{equation*}
 \log B_{\mathbf n,j}(z)\longrightarrow
 \frac12\int_{\hat{a}_j}^{\hat{b}_j}
 \frac{\ii\ee^{\ii\vartheta}}{z-\ee^{\ii\vartheta}}\,\dd\vartheta
 =\frac12\log\frac{z-\eta_{2j-1}}{z-\eta_{2j}}.
\end{equation*}
Exponentiation proves \eqref{eq:B-limit}; \eqref{eq:beta-boundary} is the standard square-root boundary relation.
\end{proof}

\begin{proposition}[Thermodynamic convergence at the RHP level]
\label{prop:RHP-convergence}
On every fixed pole-removal loop, the jump matrices of $D^{(\mathbf n)}$ converge locally uniformly, and hence in every $L^p$, $1\le p\le\infty$, to the jumps obtained by replacing $B_{\mathbf n,j}$ with $\beta_j$.  If the limiting Beals--Coifman operator is invertible, then the corresponding RHP solutions converge locally uniformly away from the contours; in particular, the reconstructed potentials converge on compact subsets of the $(x,t)$-plane.
\end{proposition}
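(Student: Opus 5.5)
The argument has two parts. First we show the jump matrices converge on the fixed loops. Then we pass this to the solutions by standard small-norm and Beals--Coifman perturbation theory.

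\textbf{Jump convergence.} Fix the loops $\mathscr C_j$ and $\mathscr C_j^\sharp$ once and for all. Each $\mathscr C_j$ is a compact subset of $\C\setminus\Gamma_j^{\rm cl}$ at positive distance from $\Gamma_j^{\rm cl}$, and each $\mathscr C_j^\sharp$ is at positive distance from $\ol\Gamma_j$. By Proposition~\ref{prop:B-limit}, $B_{\mathbf n,j}\to\beta_j$ uniformly on $\mathscr C_j$. The limit $\beta_j$ has neither zeros nor poles on $\mathscr C_j$, so $|\beta_j|$ is bounded above and below there. Hence $B_{\mathbf n,j}^{-1}\to\beta_j^{-1}$ uniformly on $\mathscr C_j$ as well. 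The reflected functions $B^\sharp_{\mathbf n,j}$ converge on $\mathscr C_j^\sharp$ by conjugating the same statement.

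The remaining ingredients of $F_{\mathbf n,j}$ in \eqref{eq:Ffinite-upper} do not depend on $\mathbf n$. These are $a_j$, $b_j$ and $\ee^{\pm\ii\Phi(z;x,t)}$. They are bounded on the loop, uniformly for $(x,t)$ in compact sets, because the loops avoid $0$. So the entries of $F_{\mathbf n,j}$ and $F^\sharp_{\mathbf n,j}$ are polynomials in uniformly convergent, uniformly bounded quantities. They therefore converge uniformly on the loops, and uniformly in $(x,t)$ on compacta, to the matrices with $\beta_j$ in place of $B_{\mathbf n,j}$. Since the loops have finite length, uniform convergence gives $L^p$ convergence for every $1\le p\le\infty$. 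The reduction \eqref{eq:Ffinite-lower} persists in the limit, because $\beta_j^\sharp$ is the limit of $B^\sharp_{\mathbf n,j}$.

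\textbf{Solution convergence.} Write $\Sigma=\bigcup_j(\mathscr C_j\cup\mathscr C_j^\sharp)$. The singular behaviour at $z=0$ should be handled first. One option is to multiply by the explicit factor $(I+\sigma_1/z)$ in the standard way, or to use the symmetry \eqref{eq:finite-symmetries} to reduce to an RHP with normalization $I$ at $\infty$ and a fixed, $\mathbf n$-independent behaviour at $0$. Either way, the problem becomes one with an $L^2$-bounded Cauchy operator on the fixed contour $\Sigma$.

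Write the triangular factors as $v^{(\mathbf n)}=(w^-)^{-1}w^+$, and let $C_{w^{(\mathbf n)}}f=C_+(fw^-)+C_-(fw^+)$ on $L^2(\Sigma)$. Since $\Sigma$ is a finite union of smooth closed curves, $C_\pm$ are bounded on $L^2(\Sigma)$. This gives
\[
\|C_{w^{(\mathbf n)}}-C_w\|_{L^2\to L^2}\le C\,\|w^{(\mathbf n)}-w\|_{L^\infty(\Sigma)}\to0 .
\]
If $I-C_w$ is invertible, then by a Neumann series $I-C_{w^{(\mathbf n)}}$ is invertible for large $\mathbf n$, with uniformly bounded inverse. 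Then $\mu^{(\mathbf n)}-I\to\mu-I$ in $L^2(\Sigma)$. The solution is recovered by
\[
D^{(\mathbf n)}(z)=I+\frac1{2\pi\ii}\int_\Sigma\frac{\mu^{(\mathbf n)}(s)\,(w^{(\mathbf n)}_++w^{(\mathbf n)}_-)(s)}{s-z}\,\dd s .
\]
By Cauchy--Schwarz, $D^{(\mathbf n)}\to D$ uniformly on compact subsets of $\C\setminus\Sigma$. Inside the loops, $M^{(\mathbf n)}=D^{(\mathbf n)}F_{\mathbf n,j}^{-1}$, and $F_{\mathbf n,j}^{-1}$ converges there away from $\Gamma_j$, so $M^{(\mathbf n)}$ converges locally uniformly off the contours.

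For the potential, the residue at infinity is $\frac{-1}{2\pi\ii}\int_\Sigma \mu^{(\mathbf n)}(w_+^{(\mathbf n)}+w_-^{(\mathbf n)})\,\dd s$. It converges by the same estimate, so \eqref{eq:finite-reconstruction} gives $q^{(\mathbf n)}\to q$. All bounds are uniform for $(x,t)$ in compacta, because the jump convergence was. Invertibility persists on a compact set by continuity of $(x,t)\mapsto C_w$ in operator norm and openness of invertibility.

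\textbf{Main obstacle.} The delicate step is the normalization at $z=0$. The $\sigma_1/z$ singularity and the $z\mapsto z^{-1}$ symmetry must be converted into a standard $L^2$ problem on $\Sigma$ without introducing $\mathbf n$-dependence. I would first try the explicit factorization $M=\widetilde M\,(I+\sigma_1/z)$-type ansatz used for dark solitons, and check that the residual condition at $0$ is independent of $\mathbf n$. Everything else is routine perturbation theory.
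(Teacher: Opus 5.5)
Your proposal is correct and follows the same route as the paper. The paper likewise obtains jump convergence on the fixed loops from Proposition~\ref{prop:B-limit}, then uses operator-norm convergence of the Beals--Coifman operators and stability of invertibility, and concludes convergence of the densities, their Cauchy transforms and the $z^{-1}$ coefficient, citing the standard small-norm theorem without further detail. The $z=0$ issue you flag is not discussed in the paper, and it needs no multiplicative ansatz: since $0$ lies outside all loops, write $D^{(\mathbf n)}=I+\sigma_1/z+C\bigl(\mu^{(\mathbf n)}(w^{(\mathbf n)}_++w^{(\mathbf n)}_-)\bigr)$. This gives $(I-C_{w^{(\mathbf n)}})\mu^{(\mathbf n)}=I+\sigma_1/s$ on $\Sigma$, which is the same operator with an $\mathbf n$-independent bounded right-hand side, and injectivity of $I-C_w$ yields uniqueness and hence the symmetries.
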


\begin{proof}
The jump convergence follows from Proposition~\ref{prop:B-limit}.  The Cauchy singular-integral operators therefore converge in operator norm on $L^2$ of the fixed loops.  Invertibility of the limiting operator is stable under a sufficiently small operator-norm perturbation, and the standard resolvent identity gives convergence of the boundary densities and of their Cauchy transforms.  The coefficient at infinity then converges as well.  This is the usual small-norm  theorem for normalized Riemann--Hilbert problems; see \cite{Zhou1989,DeiftZhou2002}.
\end{proof}

\subsection{Collapse of the loops and emergence of the full jump}

Passing to the limit in \eqref{eq:Dfinite} gives loop jumps
\begin{equation}\label{eq:F-continuum}
 F_j(z)=L\!\left(-a_j(z)\beta_j(z)\ee^{\ii\Phi(z;x,t)}\right)
 U\!\left(b_j(z)\beta_j(z)^{-1}\ee^{-\ii\Phi(z;x,t)}\right).
\end{equation}
Remove these loop jumps by setting, inside $\mathscr C_j$,
\begin{equation}\label{eq:collapse-transform}
 X(z):=D^{(\infty)}(z)F_j(z)^{-1}
 =D^{(\infty)}(z)
 U\!\left(-b_j\beta_j^{-1}\ee^{-\ii\Phi}\right)
 L\!\left(a_j\beta_j\ee^{\ii\Phi}\right),
\end{equation}
with the Schwarz-reflected definition inside $\mathscr C_j^\sharp$, and $X=D^{(\infty)}$ elsewhere.  Since $\beta_{j,-}=-\beta_{j,+}$, the jump remaining on $\Gamma_j$ is
\begin{equation}\label{eq:raw-LUL-jump}
 X_+(z)=X_-(z)
 L\!\left(a_j\beta_{j,+}\ee^{\ii\Phi}\right)
 U\!\left(-2b_j\beta_{j,+}^{-1}\ee^{-\ii\Phi}\right)
 L\!\left(a_j\beta_{j,+}\ee^{\ii\Phi}\right).
\end{equation}
This three-factor $LUL$ structure, rather than a product of only two triangular factors, is the essential consequence of the alternating discrete spectrum.

Define the final continuum densities on $\Gamma_j$ by
\begin{equation}\label{eq:raw-to-final-data}
 a_j(z)\beta_{j,+}(z)=-\ii r(z),
 \qquad
 b_j(z)\beta_{j,+}(z)^{-1}
 =\ii\frac{\rho(z)}{1+r(z)\rho(z)}.
\end{equation}
The second relation is meaningful whenever $1+r\rho\neq0$.  Substitution into \eqref{eq:raw-LUL-jump} gives
\begin{align}
 &L\!\left(-\ii r\ee^{\ii\Phi}\right)
 U\!\left(-\frac{2\ii\rho}{1+r\rho}\ee^{-\ii\Phi}\right)
 L\!\left(-\ii r\ee^{\ii\Phi}\right)\nonumber\\
 &\hspace{1cm}=
 \frac{1}{1+r\rho}
 \begin{pmatrix}
 1-r\rho&-2\ii\rho\ee^{-\ii\Phi}\\
 -2\ii r\ee^{\ii\Phi}&1-r\rho
 \end{pmatrix}.
 \label{eq:LUL-full-matrix}
\end{align}

\subsection{Continuum reduction data and the full gas RHP}

For lower-arc formulas, define
\begin{equation}\label{eq:sharp-data}
 r^\sharp(z):=\ol{r(\ol z)},\qquad
 \rho^\sharp(z):=\ol{\rho(\ol z)},
 \qquad z\in\bigcup_{j=1}^{N}\ol\Gamma_j.
\end{equation}
The Schwarz reduction gives the lower jump by conjugating the upper one.  The reciprocal reduction gives the same lower jump precisely when
\begin{equation}\label{eq:continuum-reduction-condition}
 r^\sharp(z)=r(z^{-1}),\qquad
 \rho^\sharp(z)=\rho(z^{-1}),
 \qquad z\in\bigcup_{j=1}^{N}\ol\Gamma_j.
\end{equation}
Because $z^{-1}=\ol z$ on the unit circle, \eqref{eq:continuum-reduction-condition} is equivalent to
\begin{equation}\label{eq:real-upper-data}
 r(z),\rho(z)\in\R,
 \qquad z\in\bigcup_{j=1}^{N}\Gamma_j.
\end{equation}
Thus explicit complex conjugation is required in the lower jump unless real-valued upper data have already been imposed.

\begin{assumption}[Minimal continuum data]\label{ass:data}
For every $j=1,\ldots,N$, the functions $r$ and $\rho$ extend analytically to a neighborhood of $\Gamma_j$, are real-valued on $\Gamma_j$, and satisfy
\begin{equation}\label{eq:minimal-data}
 1+r(z)\rho(z)\neq0,
 \qquad z\in\Gamma_j.
\end{equation}
For a genuinely full gas we additionally assume $r(z)\rho(z)\neq0$ on every spectral arc.  No positivity assumption is made in this definition.
\end{assumption}

\begin{RHP}[Full genus-$N$ dark soliton gas]\label{rhp:full}
Find a $2\times2$ matrix $M(z)=M(z;x,t)$ with the following properties.
\begin{enumerate}[label=(\roman*)]
\item $M$ is analytic for
\begin{equation*}
 z\in\C\setminus\left(\{0\}\cup\bigcup_{j=1}^N(\Gamma_j\cup\ol\Gamma_j)\right).
\end{equation*}
At every endpoint $\eta_m^{\pm1}$, both $M$ and $M^{-1}$ have at most fourth-root growth.  Equivalently, the RHP is understood in the standard $L^2$ endpoint class.
\item As $z\to\infty$ and $z\to0$,
\begin{equation}\label{eq:M-normalization}
 M(z)=I+O(z^{-1}),\qquad
 M(z)=\frac{\sigma_1}{z}+O(1),
\end{equation}
respectively.
\item The strong symmetries are
\begin{equation}\label{eq:M-symmetry}
 M(z)=\sigma_1\ol{M(\ol z)}\sigma_1
 =z^{-1}M(z^{-1})\sigma_1.
\end{equation}
\item The jump relation is $M_+=M_-J_M$.  On the upper arcs,
\begin{equation}\label{eq:upper-jump}
 J_M(z)=\frac{1}{1+r(z)\rho(z)}
 \begin{pmatrix}
 1-r(z)\rho(z)&-2\ii\rho(z)\ee^{-\ii\Phi(z;x,t)}\\
 -2\ii r(z)\ee^{\ii\Phi(z;x,t)}&1-r(z)\rho(z)
 \end{pmatrix},
 \quad z\in\bigcup_{j=1}^N\Gamma_j.
\end{equation}
On the lower arcs,
\begin{equation}\label{eq:lower-jump}
 J_M(z)=\frac{1}{1+r^\sharp(z)\rho^\sharp(z)}
 \begin{pmatrix}
 1-r^\sharp(z)\rho^\sharp(z)&2\ii r^\sharp(z)\ee^{-\ii\Phi(z;x,t)}\\
 2\ii\rho^\sharp(z)\ee^{\ii\Phi(z;x,t)}&1-r^\sharp(z)\rho^\sharp(z)
 \end{pmatrix},
 \quad z\in\bigcup_{j=1}^N\ol\Gamma_j.
\end{equation}
Under Assumption~\ref{ass:data}, one may equivalently replace $r^\sharp(z),\rho^\sharp(z)$ by $r(z^{-1}),\rho(z^{-1})$.
\end{enumerate}
\end{RHP}

Every jump matrix has determinant one.  The potential is reconstructed by
\begin{equation}\label{eq:reconstruction}
 q(x,t)=\lim_{z\to\infty}zM_{21}(z;x,t).
\end{equation}

\begin{remark}[Positivity versus the minimal RHP condition]\label{rem:positivity}
For the algebraic construction of RHP~\ref{rhp:full}, the only denominator condition is $1+r\rho\neq0$.  Individual nonvanishing of $r$ and $\rho$ is needed only if one wants both triangular factorizations everywhere.  Positivity of $r$ and $\rho$ is therefore not part of the definition of the full gas.  The nonlinear steepest-descent proof below uses a stronger admissibility condition to choose single-valued logarithms, apply a convenient vanishing lemma, and keep the parabolic-cylinder parameter real.  Positivity is a simple sufficient condition for those later steps, not for the continuum-limit RHP itself.
\end{remark}

\begin{assumption}[Steepest-descent admissibility]\label{ass:steepest}
In addition to Assumption~\ref{ass:data}, suppose that
\begin{equation}\label{eq:positive-product-data}
 r(z)>0,\qquad \rho(z)>0,\qquad 0<r(z)\rho(z)<1
\end{equation}
on every upper spectral arc.  In particular,
\begin{equation}\label{eq:crcrho-positive}
 c_r(z):=\frac{2r(z)}{1+r(z)\rho(z)}>0,
 \qquad
 c_\rho(z):=\frac{2\rho(z)}{1+r(z)\rho(z)}>0,
\end{equation}
and
\begin{equation}\label{eq:crcrho-product}
 0<c_r(z)c_\rho(z)<1.
\end{equation}
Both functions are further assumed to have nonvanishing analytic continuations to the lens neighborhoods, and that the chosen logarithms have zero winding there.  This is the sufficient hypothesis used in the long-time theorem.  More general real or complex zero-index data require a corresponding modification of the scalar and local model problems.
\end{assumption}

\begin{proposition}[Well-posedness under the admissibility hypothesis]
\label{prop:solvability}
Under Assumption~\ref{ass:steepest}, RHP~\ref{rhp:full} has a unique solution for every $(x,t)\in\mathbb R^2$.  The solution depends smoothly on $x,t$, and the reconstruction formula \eqref{eq:reconstruction} produces a solution of \eqref{eq:dnls}.
\end{proposition}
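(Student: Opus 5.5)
The plan is the standard Fredholm/vanishing-lemma route of \cite{Zhou1989}, adapted to the singularity at $z=0$ and the two symmetries.

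\textbf{Step 1: remove the singularity at $z=0$.} Write $M$ as a regular RHP plus a known rank-one correction. Concretely, set
\[
M(z)=\Bigl(I+\frac{\sigma_1}{z}\,C\Bigr)\ldots
\]
or, more simply, note the following. Under the reciprocal symmetry $z\mapsto z^{-1}$, the behaviour $M\sim\sigma_1/z$ at $0$ is equivalent to $M\to I$ at $\infty$. Hence the difference of two solutions has no singularity at $0$ once it is multiplied by the inverse of one of them. So the $z=0$ condition is treated through the symmetry rather than as an independent constraint.

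\textbf{Step 2: Fredholm property and uniqueness.}
\begin{itemize}[label=--]
\item The jump $J_M$ is analytic near each arc, has determinant one, and equals $I$ at the endpoints, because $a_j,b_j$ vanish there and so $r,\rho$ vanish there. Therefore the Beals--Coifman operator $1-C_w$ on $L^2(\Sigma)$, with $\Sigma=\bigcup_j(\Gamma_j\cup\ol\Gamma_j)$, is Fredholm of index zero. The index is zero because the jump can be deformed to $I$ within the class of jumps with these properties.
\item Uniqueness in the $L^2$ endpoint class follows from Liouville. If $M,\tilde M$ are two solutions, then $\det M\equiv1$ (no jump, at most $|z-\eta|^{-1/2}$ endpoint singularities, which are removable). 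So $\tilde M M^{-1}$ is entire, tends to $I$ at infinity, and its possible singularity at $0$ is removable by the reciprocal symmetry.
\end{itemize}

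\textbf{Step 3: existence via a vanishing lemma (the main obstacle).} Existence reduces to showing that $\ker(1-C_w)$ is trivial, i.e.\ that the homogeneous problem with $M_0\to0$ at infinity has only the zero solution. Since the contour is not symmetric under $z\mapsto\bar z$ with the usual Schwarz-reflection-positivity structure, I would first transform to the real line of the $k$-variable, or use the unit-circle version of Zhou's lemma.
\begin{itemize}[label=--]
\item Consider $H(z)=M_0(z)\,\sigma_1\ol{M_0(1/\bar z)}^{T}$-type combinations, built so that $H$ is analytic off the unit circle and decays suitably.
\item Integrate $H$ over the unit circle, using the jump relation.
\item The jump matrix restricted to the circle, multiplied by the appropriate constant matrix, should have positive-definite Hermitian part. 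On $\Gamma_j$ we have $|e^{\pm i\Phi}|=1$, since $\Phi$ is real on the circle: $z-z^{-1}=2i\sin\theta$. The Hermitian part then has diagonal entries $(1-r\rho)/(1+r\rho)>0$ and off-diagonal modulus of order $\max(c_r,c_\rho)$ up to the rotation. Positive definiteness needs the determinant condition $(1-r\rho)^2>(r-\rho)^2$-type, which I expect to follow from $0<c_rc_\rho<1$ after choosing the right constant conjugation, for example $\mathrm{diag}(\sqrt{\rho},\sqrt{r})$ scaling.
\end{itemize}
Verifying this positivity with the correct conjugation is the crux, and it is exactly where Assumption~\ref{ass:steepest} enters. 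Positivity forces $M_{0,-}=0$ on $\Sigma$, hence $M_0\equiv0$.

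\textbf{Step 4: smoothness and the equation.}
\begin{itemize}[label=--]
\item Smooth dependence on $(x,t)$ follows because $C_w$ depends smoothly (analytically) on $(x,t)$ and its inverse exists for every $(x,t)$.
\item The dressing argument gives the NLS equation. The matrices $\Psi=Me^{-\frac{i}{2}\Phi\sigma_3}$ have $(x,t)$-independent jumps, so $\Psi_x\Psi^{-1}$ and $\Psi_t\Psi^{-1}$ are rational in $z$ with poles only at $0$ and $\infty$. Their form is fixed by the normalizations and symmetries as the dNLS Lax pair with $q$ given by \eqref{eq:reconstruction}. Compatibility of the two then yields \eqref{eq:dnls}.
\end{itemize}
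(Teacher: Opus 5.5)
There is a genuine gap in Step~3, exactly at the point you identify as the crux. Your positivity heuristic assumes $\abs{\ee^{\pm\ii\Phi}}=1$ on the arcs, but the identity you quote shows the opposite. For $z=\ee^{\ii\vartheta}$ one has $z-z^{-1}=2\ii\sin\vartheta$ and $z^2-z^{-2}=2\ii\sin2\vartheta$. So $\Phi=2\ii[x\sin\vartheta-t\sin2\vartheta]$ is purely imaginary, and $E:=\ee^{\ii\Phi}$ is a positive real number which, for fixed $z\in\Gamma_j$, sweeps out all of $(0,\infty)$ as $x$ varies. Conjugating $J_M$ by any diagonal matrix $\diag(\lambda,\lambda^{-1})$ leaves the diagonal of the Hermitian part equal to $2(1-s)/(1+s)$, where $s=r\rho$. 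The off-diagonal entry acquires modulus $\frac{2}{1+s}\bigl|r\abs{\lambda}^2E-\rho\abs{\lambda}^{-2}E^{-1}\bigr|$, which is smaller than the diagonal only for $\abs{\lambda}^2E$ in a bounded window around $\sqrt{\rho/r}$. Hence no rescaling independent of $(x,t)$, such as your $\diag(\sqrt\rho,\sqrt r)$, can work for all $(x,t)$. Moreover, the inequality $(1-r\rho)^2>(r-\rho)^2$ you hope to extract does not follow from $0<c_rc_\rho<1$ (take $r=1/10$, $\rho=5$).

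The missing idea, which is the one the paper uses, is the $(x,t)$- and $z$-dependent real gauge $d=\bigl((\rho/r)\ee^{-2\ii\Phi}\bigr)^{1/4}$. It is positive on the arcs and, by the zero-winding part of Assumption~\ref{ass:steepest}, single valued and analytic nearby. With $D=\diag(d,d^{-1})$ one gets $D^{-1}J_MD=\begin{pmatrix}a&-\ii b\\-\ii b&a\end{pmatrix}$, where $a=\frac{1-s}{1+s}$ and $b=\frac{2\sqrt s}{1+s}$. This matrix is unitary ($a^2+b^2=1$) and its Hermitian part is $2aI$. Positivity therefore holds precisely because $r,\rho>0$ (so $d$ is real) and $r\rho<1$ (so $a>0$). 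Only after this reduction, implemented analytically near the arcs compatibly with the reflection $z\mapsto1/\bar z$ (or after Joukowski folding), does the vanishing lemma close.

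A second, smaller error affects Steps~2 and~4: $\det M\not\equiv1$. Because $M=\sigma_1/z+O(1)$ at $z=0$, $\det M$ has a double pole there. The reciprocal symmetry $\det M(z)=-z^{-2}\det M(z^{-1})$ then forces $\det M(z)=1-z^{-2}$. Hence $M^{-1}$ has poles at $z=\pm1$, which lie off the arcs. Both your Liouville argument for $\tilde MM^{-1}$ and your claim that $\Psi_x\Psi^{-1}$, $\Psi_t\Psi^{-1}$ are singular only at $0$ and $\infty$ require removing these poles. The symmetry does this: $M(\pm1)=\pm M(\pm1)\sigma_1$, and likewise for $\tilde M$, $M_x$ and $M_t$. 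Consequently $\tilde M(\pm1)\operatorname{adj}M(\pm1)=0$, which cancels the simple zero of the determinant. For the Lax pair one also uses that $\Phi_x=z-z^{-1}$ and $\Phi_t=-(z^2-z^{-2})$ vanish at $z=\pm1$. This is standard for dark-soliton problems, but your uniqueness step is incorrect as written.
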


\begin{proof}
We record the coercive reduction because the full jump is not triangular.  On an upper arc put $s=r\rho>0$.  Since $z=\ee^{\ii\vartheta}$ implies
$\Phi(z;x,t)=2\ii[x\sin\vartheta-t\sin(2\vartheta)]$, the quantity
$\ee^{-2\ii\Phi}$ is positive there.  Hence
\[
 d(z)=\left(\frac{\rho(z)}{r(z)}\ee^{-2\ii\Phi(z;x,t)}\right)^{1/4}>0,
 \qquad D(z)=\diag(d(z),d(z)^{-1}).
\]
The chosen zero-index analytic continuations make this gauge single valued in a neighborhood of each arc.  A direct calculation gives
\begin{equation}\label{eq:unitary-gauge}
 D^{-1}J_MD=
 \begin{pmatrix}
 \dfrac{1-s}{1+s}&-\ii\dfrac{2\sqrt{s}}{1+s}\\[2mm]
 -\ii\dfrac{2\sqrt{s}}{1+s}&\dfrac{1-s}{1+s}
 \end{pmatrix},
\end{equation}
which is unitary because the squares of the two real coefficients sum to one.  Moreover, by $0<s<1$ its Hermitian part is strictly positive:
\[
 (D^{-1}J_MD)+(D^{-1}J_MD)^*
 =2\frac{1-s}{1+s}I>0.
\]
The lower-arc jump has the reflected unitary reduction.  The associated Beals--Coifman operator is Fredholm of index zero: the jumps are continuously homotopic to the identity by replacing $(r,\rho)$ with $(\lambda r,\lambda\rho)$, $0\le\lambda\le1$, without crossing a singular jump.  The positive-Hermitian-part vanishing lemma, applied after the Joukowski folding (or equivalently by Schwarz reflection across the unit circle), shows that the homogeneous problem has only the zero solution.  The Fredholm alternative yields existence, and the determinant/Liouville argument yields uniqueness.  This is the standard vanishing-lemma mechanism of \cite{Zhou1989}; the finite-density normalization at $z=0$ is handled exactly as in the dark-soliton RHP of \cite{CuccagnaJenkins2016,BertolaWangYanZhu2026}.  Differentiating the RHP with respect to $x,t$ and using Liouville's theorem gives the dNLS Lax pair and the reconstruction statement.
\end{proof}

Combining Propositions~\ref{prop:RHP-convergence} and~\ref{prop:solvability} shows that the exact finite-pole precursors converge to the full gas on compact $(x,t)$ sets whenever the discrete data satisfy the above admissibility conditions for all sufficiently large $\mathbf n$.

\section{The fixed genus-\texorpdfstring{$N$}{N} \texorpdfstring{$g$}{g}-function}\label{sec:g}

Set
\begin{equation}\label{eq:xi-def}
 \xi=\frac{x}{t},\qquad
 \Phi(z;x,t)=t\theta(z;\xi),\qquad
 \theta(z;\xi)=\xi(z-z^{-1})-(z^2-z^{-2}).
\end{equation}
Define
\begin{equation}\label{eq:RN}
 \RN(z)=\prod_{m=1}^{2N}\sqrt{(z-\eta_m)(z-\eta_m^{-1})},
\end{equation}
with cuts on all $2N$ spectral arcs, and choose the branch by
\begin{equation}\label{eq:RN-normalization}
 \RN(z)\sim z^{2N}\quad(z\to\infty),\qquad \RN(0)=1.
\end{equation}

Let $d\phi_N^{(1)}$ and $d\phi_N^{(2)}$ be the unique residue-free, real-normalized second-kind differentials invariant under the reciprocal symmetry and satisfying
\begin{align}
 d\phi_N^{(1)}(z)&=\frac{\PN(z)}{\RN(z)}\dd z,\label{eq:dphi1}\\
 d\phi_N^{(2)}(z)&=\frac{\QN(z)}{\RN(z)}\dd z.\label{eq:dphi2}
\end{align}
Their principal parts on the main sheet are
\begin{align}
 d\phi_N^{(1)}(z)&=(1+O(z^{-2}))\dd z,&&z\to\infty,
 &d\phi_N^{(1)}(z)&=(z^{-2}+O(1))\dd z,&&z\to0,\label{eq:dphi1-principal}\\
 d\phi_N^{(2)}(z)&=(z+O(z^{-2}))\dd z,&&z\to\infty,
 &d\phi_N^{(2)}(z)&=(z^{-3}+O(1))\dd z,&&z\to0.\label{eq:dphi2-principal}
\end{align}
The $N$ independent $a$-periods vanish.  Equivalently,
\begin{equation}\label{eq:PN-form}
 \PN(z)=\frac{\displaystyle
 \sum_{m=0}^{N}B_m z^{2N+2-m}+B_{N+1}z^{N+1}+\sum_{m=0}^{N}B_mz^m}{z^2},
\end{equation}
\begin{equation}\label{eq:QN-form}
 \QN(z)=\frac{\displaystyle
 \sum_{m=0}^{N+1}C_m z^{2N+4-m}+C_{N+2}z^{N+2}+\sum_{m=0}^{N+1}C_mz^m}{z^3},
\end{equation}
where the real coefficients are fixed by \eqref{eq:dphi1-principal}--\eqref{eq:dphi2-principal}, absence of residues, and the zero-period conditions.

For $a=1,2$, introduce the Abelian integrals
\begin{equation}\label{eq:phiN-components}
 \phi_N^{(a)}(z)=\int_{\eta_1}^{z}d\phi_N^{(a)},
\end{equation}
where the integration path is taken on the main sheet in the cut plane fixed below.  Define
\begin{equation}\label{eq:phiN}
 \phi_N(z;\xi)=\frac{\xi}{2}\phi_N^{(1)}(z)-\phi_N^{(2)}(z),
\end{equation}
and
\begin{equation}\label{eq:gN}
 g_N(z;\xi)=-\frac12\theta(z;\xi)+\phi_N(z;\xi).
\end{equation}
Then
\begin{align}
 g_N(z;\xi)&=g_{N,\infty}(\xi)+\frac{g_{N,1}(\xi)}{z}+O(z^{-2}),&&z\to\infty,\label{eq:gN-infty}\\
 g_N(z;\xi)&=-g_{N,\infty}(\xi)+O(z),&&z\to0.\label{eq:gN-zero}
\end{align}
On every upper or lower spectral arc,
\begin{equation}\label{eq:gN-band}
 g_{N,+}(z;\xi)+g_{N,-}(z;\xi)=-\theta(z;\xi).
\end{equation}
Define the upper gaps
\begin{equation}\label{eq:gaps}
 \mathfrak g_m=(\eta_{2m},\eta_{2m+1}),\quad m=1,\ldots,N-1,
 \qquad
 \mathfrak g_N=(\eta_{2N},-1),
\end{equation}
and their lower reflections.  On these gaps,
\begin{equation}\label{eq:gN-gap}
 g_{N,+}(z;\xi)-g_{N,-}(z;\xi)=\Omega_{N,m}(\xi),
 \qquad z\in\mathfrak g_m\cup\ol{\mathfrak g}_m.
\end{equation}
The constants in \eqref{eq:gN-gap} are the $b$-periods of the two normalized second-kind differentials.  More precisely, orient the canonical cycle $b_m$ so that its projection on the main sheet runs outside the unit circle from $\eta_1$ to $\eta_{2m}$ and returns on the second sheet.  We set
\begin{equation}\label{eq:OmegaNm-components}
 \Omega_{N,m}^{(a)}
 :=\oint_{b_m}d\phi_N^{(a)}
 =2\int_{\eta_1}^{\eta_{2m}}d\phi_N^{(a)},
 \qquad a=1,2,\quad m=1,\ldots,N,
\end{equation}
where the displayed open integral is taken along the exterior path on the main sheet.  The vanishing $a$-periods make this prescription independent of deformations of the path that do not cross a branch cut.  Consequently,
\begin{align}
 \Omega_{N,m}(\xi)
 &=\oint_{b_m}\left(\frac{\xi}{2}d\phi_N^{(1)}-d\phi_N^{(2)}\right)\label{eq:OmegaNm-period}\\
 &=\frac{\xi}{2}\Omega_{N,m}^{(1)}-\Omega_{N,m}^{(2)}\notag\\
 &=2\int_{\eta_1}^{\eta_{2m}}
 \left(\frac{\xi}{2}\frac{\PN(s)}{\RN(s)}-\frac{\QN(s)}{\RN(s)}\right)\dd s.
 \label{eq:OmegaNm-explicit}
\end{align}
All quantities in \eqref{eq:OmegaNm-components}--\eqref{eq:OmegaNm-explicit} are real because the differentials are real normalized.  With
\begin{equation}\label{eq:OmegaN-vectors}
 \Omega_N^{(a)}=
 \bigl(\Omega_{N,1}^{(a)},\ldots,\Omega_{N,N}^{(a)}\bigr)^T,
 \qquad a=1,2,
\end{equation}
we therefore have the explicit affine vector relation
\begin{equation}\label{eq:OmegaN-vector}
 \Omega_N(\xi)=\bigl(\Omega_{N,1}(\xi),\ldots,\Omega_{N,N}(\xi)\bigr)^T
 =\frac{\xi}{2}\Omega_N^{(1)}-\Omega_N^{(2)}.
\end{equation}
The reciprocal and Schwarz symmetries imply that the same constant $\Omega_{N,m}(\xi)$ occurs on the reflected lower gap.

The phase derivative is
\begin{equation}\label{eq:phiN-prime}
 \phi_N'(z;\xi)=\frac{\xi\PN(z)-2\QN(z)}{2\RN(z)}
 =\frac{\PN(z)}{2\RN(z)}\bigl(\xi-V_N(z)\bigr),
 \qquad V_N(z)=\frac{2\QN(z)}{\PN(z)}.
\end{equation}

\subsection{Quotient representation and zero distribution}\label{subsec:velocity-proof}

Put
\begin{equation}\label{eq:e-extended}
 e_0=1,\qquad e_{2N+1}=-1,
\end{equation}
and introduce the quotient radical
\begin{equation}\label{eq:Rscript}
 \mathscr R_N(k)^2=\prod_{m=0}^{2N+1}(k-e_m)
 =(k^2-1)\prod_{m=1}^{2N}(k-e_m),
 \qquad \mathscr R_N(k)\sim k^{N+1}\quad(k\to\infty).
\end{equation}
The $N+1$ real cuts of the quotient curve and the $N$ complementary spectral gaps are
\begin{equation}\label{eq:quotient-bands-gaps}
 \mathscr B_j=(e_{2j+1},e_{2j}),\quad j=0,\ldots,N,
 \qquad
 \mathscr G_\ell=(e_{2\ell},e_{2\ell-1}),\quad \ell=1,\ldots,N.
\end{equation}
Under the Joukowski map, $\Gamma_\ell$ is mapped bijectively onto $\mathscr G_\ell$, with $k=\cos\vartheta$ decreasing as $\vartheta$ increases.

\begin{lemma}[Descent of the normalized differentials]\label{lem:quotient-differentials}
There exist real polynomials $\mathsf P_N$ and $\mathsf Q_N$ such that
\begin{equation}\label{eq:PQ-pushdown}
 \PN(z)=2^N z^{N-1}\mathsf P_N(k(z)),
 \qquad
 \QN(z)=2^N z^{N-1}\mathsf Q_N(k(z)),
\end{equation}
and
\begin{equation}\label{eq:differential-pushdown}
 d\phi_N^{(1)}=\frac{\mathsf P_N(k)}{\mathscr R_N(k)}\,dk,
 \qquad
 d\phi_N^{(2)}=\frac{\mathsf Q_N(k)}{\mathscr R_N(k)}\,dk.
\end{equation}
Their degrees and leading coefficients are
\begin{equation}\label{eq:PQ-leading}
 \mathsf P_N(k)=2k^{N+1}+O(k^N),
 \qquad
 \mathsf Q_N(k)=4k^{N+2}+O(k^{N+1}).
\end{equation}
Moreover, for every $j=0,\ldots,N$,
\begin{equation}\label{eq:PQ-zero-periods-k}
 \int_{\mathscr B_j}\frac{\mathsf P_N(k)}{\mathscr R_{N,+}(k)}\,dk=0,
 \qquad
 \int_{\mathscr B_j}\frac{\mathsf Q_N(k)}{\mathscr R_{N,+}(k)}\,dk=0.
\end{equation}
The $N+1$ identities in each family contain one homological dependence.
\end{lemma}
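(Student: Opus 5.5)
The plan is to compute everything through the Joukowski substitution \eqref{eq:j-map}. Three elementary identities drive the proof:
\[
(z-\eta_m)(z-\eta_m^{-1})=z^2-2e_mz+1=2z\,(k-e_m),
\qquad
k^2-1=\frac{(z^2-1)^2}{4z^2},
\qquad
\dd k=\frac{z^2-1}{2z^2}\,\dd z .
\]
From the first, $\RN(z)^2=2^{2N}z^{2N}\prod_{m=1}^{2N}(k-e_m)$. The second then gives
\[
\mathscr R_N(k)=\pm\frac{z^2-1}{2z}\,\frac{\RN(z)}{2^Nz^N}.
\]
I will fix the sign once by comparing $z\to\infty$ (where $\RN\sim z^{2N}$ and $\mathscr R_N\sim k^{N+1}\sim z^{N+1}/2^{N+1}$); the sign is $+$. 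Continuity in the cut plane then propagates it, and $\RN(0)=1$ serves as a consistency check at the other end. Substituting the third identity shows that if $\PN=2^Nz^{N-1}\mathsf P_N(k)$, then
\[
\frac{\PN}{\RN}\,\dd z=\mathsf P_N(k)\,\frac{2^{N+1}z^{N+1}}{(z^2-1)\RN(z)}\,\dd k=\frac{\mathsf P_N(k)}{\mathscr R_N(k)}\,\dd k .
\]
The same holds for $\QN$, which gives \eqref{eq:differential-pushdown} once \eqref{eq:PQ-pushdown} is established.

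To prove \eqref{eq:PQ-pushdown}, I use the palindromic forms \eqref{eq:PN-form}--\eqref{eq:QN-form}. First, $\PN/(2^Nz^{N-1})=z^{-(N+1)}\,\mathrm{num}_P(z)/2^N$, where $\mathrm{num}_P$ is the self-reciprocal numerator of degree $2N+2$. This is therefore a Laurent polynomial invariant under $z\mapsto z^{-1}$, i.e.\ a real combination of $z^n+z^{-n}$ for $0\le n\le N+1$. Each $z^n+z^{-n}=2T_n(k)$ is a real polynomial in $k$ of degree $n$ with leading coefficient $2^n$. Hence $\mathsf P_N$ is a real polynomial of degree $N+1$ with leading coefficient $2^{N+1}B_0/2^N=2B_0$. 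The normalization $d\phi_N^{(1)}\sim\dd z$ at infinity forces $B_0=1$, giving $2k^{N+1}$. The same argument with $\mathrm{num}_Q$ of degree $2N+4$ over $z^{N+2}$ gives $4C_0k^{N+2}$, and $C_0=1$ follows from $d\phi_N^{(2)}\sim z\,\dd z$. This proves \eqref{eq:PQ-leading}. The reciprocal-invariance hypothesis on $d\phi_N^{(a)}$ is exactly what guarantees that the differential is a pullback from the quotient curve, so no further compatibility needs to be checked.

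For \eqref{eq:PQ-zero-periods-k}, the key observation is that a small loop around the real cut $\mathscr B_j$ in the $k$-plane lifts under the Joukowski map to a closed cycle on the genus-$N$ $z$-curve. For $1\le j\le N-1$, the neighborhood of $\mathscr B_j$ has two preimages near $\mathfrak g_j$ and $\ol{\mathfrak g}_j$, and one component of the lift is a closed cycle around $\mathfrak g_j$. For $j=0$ and $j=N$, which contain $k=\pm1$, the lift is a single loop through the branch point $z=\pm1$ of the Joukowski map, joining the upper and lower gaps. I will identify each such lifted cycle, in homology, with an integer combination of the $a$-cycles (the loops around the arcs $\Gamma_i\cup\ol\Gamma_i$) together with the small loops around $z=0$ and $z=\infty$. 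Since $d\phi_N^{(a)}$ has vanishing $a$-periods and is residue-free, its integral over that combination vanishes. Because $\mathscr R_{N,+}=-\mathscr R_{N,-}$ on $\mathscr B_j$, the loop integral equals $2\int_{\mathscr B_j}\mathsf P_N/\mathscr R_{N,+}\,\dd k$, which yields \eqref{eq:PQ-zero-periods-k}. As a backup, I would use real normalization directly: on the unit circle $\ol z=z^{-1}$, and reality of the coefficients makes each gap integral either real or purely imaginary. I would then match this with the vanishing real parts of the periods.

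Finally, the homological dependence comes from the fact that the sum of all $N+1$ band loops is homologous to a large circle in the $k$-plane. That integral equals $2\pi\ii$ times the residue of $\mathsf P_N/\mathscr R_N$ at $k=\infty$, which is zero because the differentials are residue-free at $\infty_\pm$, i.e.\ at $z=\infty$ and $z=0$. I expect the main obstacle to be the homology bookkeeping for the lifted band cycles, particularly $\mathscr B_0$ and $\mathscr B_N$, whose lifts pass through $z=\pm1$, together with tracking the orientation and sign conventions for $\mathscr R_{N,+}$. I would handle this by checking $N=1$ explicitly first.
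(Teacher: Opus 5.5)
Your derivation of \eqref{eq:PQ-pushdown}--\eqref{eq:PQ-leading} matches the paper: self-reciprocal numerators, $z^n+z^{-n}=2T_n(k)$, and leading coefficients $2B_0=2$ and $4C_0=4$. So does your account of the single homological dependence.

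The genuine gap is in the period identities \eqref{eq:PQ-zero-periods-k}. You plan to write the lift of the oval around $\mathscr B_j$ as an integer combination of the loops around $\Gamma_i\cup\ol\Gamma_i$ and the small loops about $0,\infty$. That cannot work. For $1\le j\le N-1$ the oval meets the real $k$-axis in the adjacent quotient gaps $\mathscr G_j$ and $\mathscr G_{j+1}$. Its $z$-lift therefore crosses the unit circle on the arcs $\Gamma_j$ and $\Gamma_{j+1}$, near $\eta_{2j}$ and $\eta_{2j+1}$. These arcs are cuts of $\RN$, so the lift changes sheet twice: it is the gap cycle around $\mathfrak g_j$, a $b$-type cycle for the cut system on the arcs. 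Its intersection number with the loop around $\Gamma_j$ is $\pm1$. Loops around the arcs have zero mutual intersection and the puncture loops intersect nothing, so no such combination is homologous to it.

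Worse, your reading of the normalization makes the statement false. The loop around $\Gamma_i$ projects onto a cycle encircling $\mathscr G_i$, so the vanishing of its period means $\int_{\mathscr G_i}\mathsf P_N/\mathscr R_N\,\dd k=0$. Since $\mathscr R_N$ is real and of fixed sign on $\mathscr G_i$, $\mathsf P_N$ would then change sign in every $\mathscr G_i$. Adding the one sign change per cut forced by \eqref{eq:PQ-zero-periods-k} gives $2N+1$ zeros for a polynomial of degree $N+1$. In fact the loops around the arcs carry the nonzero $\Omega_{N,m}$, and \eqref{eq:gN-band} requires the vanishing periods to be those of the gap cycles.

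The missing input is exactly this identification, and the paper uses it directly. For $1\le j\le N$, the lift of the oval around $\mathscr B_j$ is the reciprocal pair of normalized gap cycles. For $j=N$ it is twice the oval that lifts to the closed loop around $\mathfrak g_N\cup\ol{\mathfrak g}_N$. The case $j=0$ then follows from the dependence.

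Your backup also works if carried out correctly, and is cleaner. On $\mathscr B_j$ one has $\mathscr R_{N,+}\in\ii\R$ while $\mathsf P_N$ is real, so each lifted period is purely imaginary. Real normalization means all periods are real, which kills these periods for every $j=0,\dots,N$ at once, with no separate step for $j=0$. As you wrote it, with ``vanishing real parts,'' nothing follows.

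A related slip concerns the sign of the radical. The identity $\mathscr R_N(k(z))=\frac{z-z^{-1}}{2^{N+1}z^N}\RN(z)$ holds only for $|z|>1$. For $|z|<1$ the sign is reversed, because $\RN$ is continuous across $\mathfrak g_j$ while $\mathscr R_N$ flips across $\mathscr B_j$; compare $\mathscr R_N^{\rm in}\sim-k^{N+1}$ in Lemma~\ref{lem:kappa-positive}. Your proposed check at $z=0$ therefore fails: the right side behaves like $-(2z)^{-N-1}$, while $k^{N+1}\sim(2z)^{-N-1}$. This is harmless for \eqref{eq:differential-pushdown} read as an identity on the curve, but it is the same sheet swap that makes the lifted ovals gap cycles. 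Note also that the $z$-curve has genus $2N-1$, not $N$.
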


\begin{proof}
Let $\mathscr S_N$ be the compactification of the reciprocal $z$-curve
\begin{equation*}
 W^2=\RN(z)^2=\prod_{m=1}^{2N}(z-\eta_m)(z-\eta_m^{-1}).
\end{equation*}
The involution whose quotient is the $k$-curve is
\begin{equation}\label{eq:quotient-involution}
 \iota(z,W)=\bigl(z^{-1},-z^{-2N}W\bigr).
\end{equation}
Indeed, the two functions
\begin{equation}\label{eq:quotient-map-rigorous}
 k=\frac12(z+z^{-1}),
 \qquad
 y=\frac{z-z^{-1}}{2^{N+1}z^N}W
\end{equation}
are invariant under $\iota$, and direct multiplication of reciprocal factors gives
\begin{equation}\label{eq:R-pushdown}
 y^2=(k^2-1)\prod_{m=1}^{2N}(k-e_m)=\mathscr R_N(k)^2,
 \qquad
 \mathscr R_N(k(z))
 =\frac{z-z^{-1}}{2^{N+1}z^N}\RN(z).
\end{equation}
The sign in the last identity is fixed at the point $z=\infty$, where both sides are asymptotic to $z^{N+1}/2^{N+1}$.

The palindromic forms \eqref{eq:PN-form}--\eqref{eq:QN-form} imply
\begin{equation}\label{eq:PQ-reciprocal-relations}
 \PN(z^{-1})=z^{2-2N}\PN(z),
 \qquad
 \QN(z^{-1})=z^{2-2N}\QN(z).
\end{equation}
Consequently the differentials are invariant under the quotient involution.  For example,
\begin{equation*}
 \iota^*\!\left(\frac{\PN(z)}{W}\,dz\right)
 =\frac{\PN(z^{-1})}{-z^{-2N}W}\,d(z^{-1})
 =\frac{\PN(z)}{W}\,dz,
\end{equation*}
and the same computation applies to $\QN$.  Equivalently,
$2^{-N}z^{1-N}\PN(z)$ and $2^{-N}z^{1-N}\QN(z)$ are finite Laurent polynomials invariant under $z\mapsto z^{-1}$.  Every such Laurent polynomial is a polynomial in $z+z^{-1}=2k$.  This defines the real polynomials $\mathsf P_N$ and $\mathsf Q_N$ and proves \eqref{eq:PQ-pushdown}.  Combining \eqref{eq:R-pushdown} with
\begin{equation*}
 dk=\frac{z-z^{-1}}{2z}\,dz
\end{equation*}
then gives \eqref{eq:differential-pushdown}.  The principal parts
\eqref{eq:dphi1-principal}--\eqref{eq:dphi2-principal} show that the degrees are $N+1$ and $N+2$ and give the leading coefficients in \eqref{eq:PQ-leading}.

It remains to identify the periods.  Let $\widetilde a_j$ be the positively oriented oval of the quotient surface surrounding the cut $\mathscr B_j$.  For $j=1,\ldots,N$, its lift is the reciprocal pair of normalized gap cycles on $\mathscr S_N$, and hence its period vanishes.  The remaining oval $\widetilde a_0$ is not independent: on the compact quotient surface,
\begin{equation*}
 \widetilde a_0+\widetilde a_1+\cdots+\widetilde a_N
\end{equation*}
is homologous to the union of small circles about $\infty_+$ and $\infty_-$.  Both descended differentials are of the second kind and therefore have zero residues at the two points at infinity, so the $\widetilde a_0$-period vanishes as well.  Finally,
\begin{equation*}
 \oint_{\widetilde a_j}\frac{\mathsf P_N(k)}{\mathscr R_N(k)}\,dk
 =2\int_{\mathscr B_j}\frac{\mathsf P_N(k)}{\mathscr R_{N,+}(k)}\,dk,
\end{equation*}
up to the common orientation sign, and analogously for $\mathsf Q_N$.  This proves \eqref{eq:PQ-zero-periods-k} and also explains the single homological dependence.
\end{proof}

The following quotient-curve zero count is a fixed-curve analogue of the real-normalized-differential argument used in Whitham theory; compare \cite{Levermore1988,BertolaWangYanZhu2026}.  The quotient velocity is
\begin{equation}\label{eq:Vk-quotient}
 \mathsf V_N(k)=\frac{2\mathsf Q_N(k)}{\mathsf P_N(k)},
 \qquad
 V_N(z)=\mathsf V_N(k(z)).
\end{equation}

\begin{lemma}[One zero on every quotient cut]\label{lem:band-zeros}
The polynomial $\mathsf P_N$ has exactly one simple zero $p_j$ in each
$\mathscr B_j$, $j=0,\ldots,N$.  Here and throughout Section~\ref{sec:g},
$p_j$ denotes a zero of the second-kind numerator $\mathsf P_N$; it is
unrelated to the Baker--Akhiezer divisor points introduced in
Section~\ref{subsec:BA-model}.  Let $k_0$ belong to the closure of one of the
spectral gaps $\mathscr G_\ell$ and set
\begin{equation}\label{eq:xi0-Vk0}
 \xi_0=\mathsf V_N(k_0),
 \qquad
 \mathsf G_{k_0}(k)=2\mathsf Q_N(k)-\xi_0\mathsf P_N(k).
\end{equation}
Then $\mathsf G_{k_0}$ has exactly one simple zero $s_j(k_0)$ in each
$\mathscr B_j$ and one additional simple zero at $k=k_0$.
\end{lemma}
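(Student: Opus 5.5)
Both claims follow from the zero-period identities \eqref{eq:PQ-zero-periods-k} of Lemma~\ref{lem:quotient-differentials}, combined with a degree count. On each cut $\mathscr B_j$ the boundary value $\mathscr R_{N,+}(k)$ is purely imaginary and has constant argument: it does not vanish inside the cut and is real-analytic there. So $\mathsf P_N(k)/\mathscr R_{N,+}(k)$ equals a fixed imaginary unit times a real function whose sign is that of $\mathsf P_N$. A vanishing integral over $\mathscr B_j$ therefore forces $\mathsf P_N$ to change sign in the interior of $\mathscr B_j$. This gives at least one zero of odd multiplicity in each of the $N+1$ cuts $\mathscr B_0,\dots,\mathscr B_N$. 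By \eqref{eq:PQ-leading}, $\deg\mathsf P_N=N+1$, so these account for all zeros. Hence each cut contains exactly one zero $p_j$, it is simple, and $\mathsf P_N$ has no other zeros, real or complex. In particular $\mathsf P_N$ does not vanish on the closed gaps $\overline{\mathscr G_\ell}$, so $\xi_0=\mathsf V_N(k_0)$ is well defined and real.

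For the second claim, note first that $\mathsf G_{k_0}=2\mathsf Q_N-\xi_0\mathsf P_N$ is a real polynomial. Its degree is $N+2$ with leading coefficient $8$, since the $\mathsf P_N$ term has degree only $N+1$. By linearity it still satisfies the zero-period identity on every $\mathscr B_j$. The sign-change argument above therefore gives at least one zero of odd multiplicity $s_j(k_0)\in\mathscr B_j$ for each $j=0,\dots,N$. In addition $\mathsf G_{k_0}(k_0)=0$ by the definition of $\xi_0$, and $k_0$ lies in a closed gap. I would treat the endpoint case $k_0=e_{2\ell}$ or $e_{2\ell-1}$ separately, observing that the cuts are open intervals, so $k_0$ is still distinct from the interior zeros $s_j$. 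This produces $N+2$ distinct real zeros for a polynomial of degree $N+2$. So they are all the zeros and each is simple; in particular the zero at $k_0$ is simple and no cut contains a second zero.

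The main obstacle is the constancy of the sign of $\mathscr R_{N,+}$ along each cut. I would verify it directly from $\mathscr R_N^2=\prod_{m=0}^{2N+1}(k-e_m)$. On $\mathscr B_j=(e_{2j+1},e_{2j})$ the number of factors with $k-e_m<0$ is exactly $2j+1$, which is odd, so $\mathscr R_N^2<0$ throughout the open cut. Hence $\mathscr R_{N,+}$ is purely imaginary, nonvanishing, and continuous, and so of one sign times $\ii$. The only remaining subtlety is integrability at the endpoints, where the square-root singularity is harmless. The parity count also guarantees that multiple zeros cannot hide inside a cut. Any zero forced by a sign change has odd multiplicity, and the total multiplicity budget is exactly exhausted, so every such multiplicity must be one.
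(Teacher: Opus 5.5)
Your proposal is correct and is essentially the paper's argument. The constant phase of $\mathscr R_{N,+}$ on each cut turns the zero-period identities \eqref{eq:PQ-zero-periods-k} into vanishing integrals against a positive weight, which forces an odd-order interior zero in each of the $N+1$ cuts; the degree counts ($N+1$ for $\mathsf P_N$, and $N+2$ for $\mathsf G_{k_0}$ including the extra zero at $k_0\notin\bigcup_j\mathscr B_j$) then give exactness and simplicity. Your explicit parity count showing $\mathscr R_N^2<0$ on $\mathscr B_j$, and your observation that $\mathsf P_N(k_0)\neq0$ so that $\xi_0$ is well defined and real, fill in details the paper leaves implicit.
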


\begin{proof}
For $k\in\mathscr B_j$, the boundary value of the radical has the form
\begin{equation}\label{eq:R-cut-phase}
 \mathscr R_{N,+}(k)
 =\ii\varepsilon_j
 \sqrt{\left|\prod_{m=0}^{2N+1}(k-e_m)\right|},
 \qquad \varepsilon_j\in\{1,-1\},
\end{equation}
with $\varepsilon_j$ constant on the whole open interval.  Therefore the
first identity in \eqref{eq:PQ-zero-periods-k} is equivalent, after
multiplication by the nonzero constant $\ii\varepsilon_j$, to
\begin{equation*}
 \int_{\mathscr B_j}\mathsf P_N(k)
 \frac{dk}{\sqrt{\left|\prod_{m=0}^{2N+1}(k-e_m)\right|}}=0.
\end{equation*}
The weight is strictly positive in the interior of $\mathscr B_j$.  Since
$\mathsf P_N$ is not identically zero, the vanishing integral forces it to
take both signs and hence to possess an interior zero in every cut.  The
$N+1$ cuts are disjoint and $\deg\mathsf P_N=N+1$, so these are all the
zeros of $\mathsf P_N$.  If one of them had multiplicity larger than one,
the total multiplicity would exceed $N+1$; hence every zero is simple.

The same period argument applies to
$\mathsf G_{k_0}/\mathscr R_N\,dk$, because its cut periods are the same
linear combination of the two families in
\eqref{eq:PQ-zero-periods-k}.  Thus $\mathsf G_{k_0}$ has an interior zero
in each of the $N+1$ cuts.  It also vanishes at $k_0$ by definition.  The
point $k_0$ lies in the closure of a spectral gap and therefore outside the
interior of every $\mathscr B_j$, so these $N+2$ zeros are distinct.  From
\eqref{eq:PQ-leading},
\begin{equation*}
 \mathsf G_{k_0}(k)=8k^{N+2}+O(k^{N+1}),
\end{equation*}
so $\deg\mathsf G_{k_0}=N+2$.  The $N+2$ distinct zeros therefore exhaust
the degree, and each of them, including $k_0$, is simple.
\end{proof}

The preceding zero count gives an exact factorization:
\begin{equation}\label{eq:velocity-factorization}
 \mathsf V_N(k)-\mathsf V_N(k_0)
 =4(k-k_0)\prod_{j=0}^{N}
 \frac{k-s_j(k_0)}{k-p_j}.
\end{equation}
Indeed, the numerator on the left has leading coefficient $8$, while $\mathsf P_N$ has leading coefficient $2$.  If $k$ is outside the interiors of all $\mathscr B_j$, then $p_j$ and $s_j(k_0)$ lie in the same interval $\mathscr B_j$, and therefore
\begin{equation}\label{eq:positive-ratio-factors}
 \frac{k-s_j(k_0)}{k-p_j}>0,
 \qquad j=0,\ldots,N.
\end{equation}

\begin{theorem}[Strict velocity ordering on the fixed full curve]\label{thm:velocity-ordering}
For every endpoint configuration satisfying \eqref{eq:e-intro}, the following statements hold.
\begin{enumerate}[label=(\roman*)]
\item The function $\mathsf V_N$ is real analytic and strictly increasing on each $\mathscr G_\ell$:
\begin{equation}\label{eq:Vk-increasing}
 \mathsf V_N'(k)>0,
 \qquad k\in\mathscr G_\ell,
 \quad \ell=1,\ldots,N.
\end{equation}
Equivalently,
\begin{equation}\label{eq:V-decreasing}
 \frac{\dd}{\dd\vartheta}V_N(\ee^{\ii\vartheta})<0,
 \qquad \ee^{\ii\vartheta}\in\Gamma_\ell.
\end{equation}
\item The endpoint velocities are strictly ordered:
\begin{equation}\label{eq:global-order}
 v_1>v_2>\cdots>v_{2N},
 \qquad v_m=V_N(\eta_m)=\mathsf V_N(e_m).
\end{equation}
\item Every sector \eqref{eq:P0-intro}--\eqref{eq:Ml-intro} is nonempty.  For every $\xi\in\mathcal M_\ell$ there is a unique $\alpha_\ell(\xi)\in\Gamma_\ell$ satisfying \eqref{eq:alpha-intro}; this switching point is simple.
\end{enumerate}
\end{theorem}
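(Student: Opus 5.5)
The plan is to derive all three statements from the exact factorization \eqref{eq:velocity-factorization} together with the sign property \eqref{eq:positive-ratio-factors}, both of which rest on Lemma~\ref{lem:band-zeros}.

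First I record some regularity. By Lemma~\ref{lem:band-zeros}, every zero $p_j$ of $\mathsf P_N$ lies in the open cut $\mathscr B_j$. Hence $\mathsf P_N$ does not vanish on the closure of any gap $\overline{\mathscr G_\ell}=[e_{2\ell},e_{2\ell-1}]$. So $\mathsf V_N=2\mathsf Q_N/\mathsf P_N$ is real analytic in a neighbourhood of each closed gap, and the endpoint velocities $v_m=\mathsf V_N(e_m)$ are finite. The identity $v_m=V_N(\eta_m)$ holds because $k(\eta_m)=\Rea\eta_m=e_m$ and $V_N=\mathsf V_N\circ k$.

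Fix $k_0\in\overline{\mathscr G_\ell}$. Let $k$ be any point outside the interiors of all cuts, that is, any point in the union of the closed gaps. Two cases for a given cut $\mathscr B_j$:
\begin{itemize}
\item if $k$ is an endpoint of $\mathscr B_j$, then $k-s_j(k_0)$ and $k-p_j$ are both nonzero and have the same sign, since $s_j(k_0)$ and $p_j$ are interior points of $\mathscr B_j$;
\item otherwise $\mathscr B_j$ lies entirely on one side of $k$, and the same conclusion holds.
\end{itemize}
Thus every factor in \eqref{eq:positive-ratio-factors} is strictly positive, and \eqref{eq:velocity-factorization} gives
\[
\operatorname{sign}\bigl(\mathsf V_N(k)-\mathsf V_N(k_0)\bigr)=\operatorname{sign}(k-k_0).
\]

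For (i), I take $k_0\in\mathscr G_\ell$, divide \eqref{eq:velocity-factorization} by $k-k_0$ and let $k\to k_0$. This yields
\[
\mathsf V_N'(k_0)=4\prod_{j=0}^N\frac{k_0-s_j(k_0)}{k_0-p_j}>0.
\]
Since $\mathsf V_N'$ equals this product of positive factors, it is nonzero; this is also consistent with the simplicity of the zero $k_0$ of $\mathsf G_{k_0}$ in Lemma~\ref{lem:band-zeros}. On $\Gamma_\ell$ we have $k=\cos\vartheta$ with $dk/d\vartheta=-\sin\vartheta<0$, because $0<\vartheta<\pi$. The chain rule then converts \eqref{eq:Vk-increasing} into \eqref{eq:V-decreasing}.

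For (ii), I apply the sign identity with $k_0=e_m$, which lies in the closure of a gap, and $k=e_{m+1}<e_m$, which also lies in a closed gap. This gives $v_{m+1}<v_m$.
\begin{itemize}
\item When $e_m$ and $e_{m+1}$ bound the same gap, this is just (i).
\item When they bound a cut $\mathscr B_j$ ($m=2j$), it is exactly the case where the global factorization is needed. Monotonicity inside gaps alone says nothing across a cut, which is why the argument uses the factorization rather than a derivative bound.
\end{itemize}
Chaining over $m$ gives \eqref{eq:global-order}.

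For (iii), nonemptiness of the sectors \eqref{eq:P0-intro}--\eqref{eq:Ml-intro} follows immediately from the strict ordering. For $\xi\in\mathcal M_\ell=(v_{2\ell},v_{2\ell-1})$, the map $\vartheta\mapsto V_N(\ee^{\ii\vartheta})$ is continuous and strictly decreasing on the closed arc $\Gamma^{\rm cl}_\ell$, running from $v_{2\ell-1}$ down to $v_{2\ell}$. The intermediate value theorem gives existence of a point $\alpha_\ell(\xi)\in\Gamma_\ell$ with $V_N(\alpha_\ell(\xi))=\xi$, and strict monotonicity gives uniqueness. Simplicity of the switching point means that $\xi-V_N$ has a simple zero at $\alpha_\ell$. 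This follows from $V_N'(\alpha_\ell)\neq0$ by (i), which is equivalent to $k_0=k(\alpha_\ell)$ being a simple zero of $\mathsf G_{k_0}$ in Lemma~\ref{lem:band-zeros}. Because $\PN(\alpha_\ell)\ne0$, it is then also a simple zero of $\phi_N'$ in \eqref{eq:phiN-prime}.

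The only delicate point is justifying positivity of all factors when $k$ or $k_0$ is an endpoint $e_m$, and this is exactly where the interiority of $p_j$ and $s_j(k_0)$ from Lemma~\ref{lem:band-zeros} is used.
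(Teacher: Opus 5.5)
Your proposal is correct and follows the paper's own proof: the factorization \eqref{eq:velocity-factorization}, with $p_j$ and $s_j(k_0)$ interior to the same cut, gives $\mathsf V_N'>0$ on every gap and the sign of $v_m-v_n$. The paper compares every pair $e_m>e_n$ at once rather than chaining consecutive pairs, which is equivalent, and part (iii) is handled identically by the intermediate value theorem and $\PN(\alpha_\ell)\neq0$ in \eqref{eq:phiN-prime}.
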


\begin{proof}
Because $\mathsf P_N$ has no zero in any spectral gap,
$\mathsf V_N=2\mathsf Q_N/\mathsf P_N$ is real analytic there.  Fix
$k_0\in\mathscr G_\ell$.  Dividing
\eqref{eq:velocity-factorization} by $k-k_0$ and letting $k\to k_0$ gives
\begin{equation}\label{eq:Vprime-product}
 \mathsf V_N'(k_0)=4\prod_{j=0}^{N}
 \frac{k_0-s_j(k_0)}{k_0-p_j}>0.
\end{equation}
Indeed, $p_j$ and $s_j(k_0)$ lie in the same open cut $\mathscr B_j$, while
$k_0$ lies outside that cut, so the numerator and denominator of every
factor have the same sign.  This proves \eqref{eq:Vk-increasing}.  Since
$k=\cos\vartheta$ and $0<\vartheta<\pi$,
\begin{equation*}
 \frac{\dd}{\dd\vartheta}V_N(\ee^{\ii\vartheta})
 =-\sin\vartheta\,\mathsf V_N'(\cos\vartheta)<0,
\end{equation*}
which proves \eqref{eq:V-decreasing}.

For the global ordering, take $1\le m<n\le2N$ and use
\eqref{eq:velocity-factorization} with $k_0=e_n$ and $k=e_m$.  Lemma~\ref{lem:band-zeros}
applies because $e_n$ belongs to the closure of a spectral gap.  Moreover,
$e_m$ is never an interior point of a quotient cut.  Since the two points
$p_j$ and $s_j(e_n)$ are interior to the same cut, both
$e_m-p_j$ and $e_m-s_j(e_n)$ are nonzero and have the same sign.  Hence
\begin{equation}\label{eq:velocity-difference-product}
 v_m-v_n
 =4(e_m-e_n)\prod_{j=0}^{N}
 \frac{e_m-s_j(e_n)}{e_m-p_j}>0,
\end{equation}
because $e_m>e_n$.  This proves the pairwise inequalities in
\eqref{eq:global-order}.

Every interval in \eqref{eq:P0-intro}--\eqref{eq:Ml-intro} is therefore
nonempty.  On $\Gamma_\ell$, the continuous function $V_N$ decreases
strictly from $v_{2\ell-1}$ to $v_{2\ell}$, so the intermediate value theorem
gives a unique $\alpha_\ell(\xi)$ for every
$\xi\in\mathcal M_\ell$.  Finally,
\begin{equation*}
 \frac{\dd}{\dd z}V_N(z)
 =\mathsf V_N'(k(z))\,k'(z),
 \qquad
 k'(z)=\frac12(1-z^{-2}),
\end{equation*}
and both factors are nonzero at an interior point of $\Gamma_\ell$.  Since
$\PN$ has no zero on a spectral arc, formula \eqref{eq:phiN-prime} shows that
$\phi_N'(z;\xi)$ has a simple zero at $z=\alpha_\ell(\xi)$.
\end{proof}

\section{Triangular factorizations and the full Szeg\H{o} function}\label{sec:factor}

Set
\begin{equation}\label{eq:c-functions}
 c_\rho(z)=\frac{2\rho(z)}{1+r(z)\rho(z)},
 \qquad
 c_r(z)=\frac{2r(z)}{1+r(z)\rho(z)}.
\end{equation}
The upper jump \eqref{eq:upper-jump} has the two exact factorizations
\begin{align}
 J_M&=L_\rho
 \begin{pmatrix}0&-\ii c_\rho\ee^{-\ii\Phi}\\-\ii c_\rho^{-1}\ee^{\ii\Phi}&0\end{pmatrix}L_\rho,
 \label{eq:rho-factor}\\
 L_\rho&=\begin{pmatrix}
 1&0\\[-1mm]
 -\ii\dfrac{r\rho-1}{2\rho}\ee^{\ii\Phi}&1
 \end{pmatrix},\label{eq:Lrho-def}
\end{align}
and
\begin{align}
 J_M&=U_r
 \begin{pmatrix}0&-\ii c_r^{-1}\ee^{-\ii\Phi}\\-\ii c_r\ee^{\ii\Phi}&0\end{pmatrix}U_r,
 \label{eq:r-factor}\\
 U_r&=\begin{pmatrix}
 1&-\ii\dfrac{r\rho-1}{2r}\ee^{-\ii\Phi}\\[-1mm]
 0&1
 \end{pmatrix}.\label{eq:Ur-def}
\end{align}

For any fixed noncritical $\xi$, decompose every upper band into
\begin{equation}\label{eq:band-decomp}
 \Gamma_j^r(\xi)=\{z\in\Gamma_j:V_N(z)>\xi\},
 \qquad
 \Gamma_j^\rho(\xi)=\{z\in\Gamma_j:V_N(z)<\xi\}.
\end{equation}
In the pure sector $\mathcal P_\ell$,
\begin{equation}\label{eq:pure-decomp}
 \Gamma_j^r=\Gamma_j\ (j\le\ell),
 \qquad
 \Gamma_j^\rho=\Gamma_j\ (j>\ell).
\end{equation}
In the mixed sector $\mathcal M_\ell$,
\begin{equation}\label{eq:mixed-decomp}
 \begin{aligned}
 &\Gamma_j^r=\Gamma_j,&&j<\ell,\\
 &\Gamma_\ell^r=(\eta_{2\ell-1},\alpha_\ell),\qquad
 \Gamma_\ell^\rho=(\alpha_\ell,\eta_{2\ell}),\\
 &\Gamma_j^\rho=\Gamma_j,&&j>\ell.
 \end{aligned}
\end{equation}
Thus the active arc changes factorization once, while all later arcs remain present and carry $\rho$-lenses.

Define the scalar logarithmic datum
\begin{equation}\label{eq:ellxi}
 \ell_\xi(z)=
 \begin{cases}
 -\log c_r(z),&z\in\displaystyle\bigcup_{j=1}^N\Gamma_j^r(\xi),\\[1mm]
 \log c_\rho(z),&z\in\displaystyle\bigcup_{j=1}^N\Gamma_j^\rho(\xi).
 \end{cases}
\end{equation}
For $z$ on a lower arc, set
\begin{equation}\label{eq:ell-sharp}
 \ell_\xi^\sharp(z):=\ol{\ell_\xi(\ol z)}.
\end{equation}
Under the real dNLS reduction this is the same datum transported by $z\mapsto z^{-1}$.

\begin{definition}[Full genus-$N$ Szeg\H{o} function]\label{def:deltaN}
The scalar function $\delta_N(z;\xi)$ is analytic and nonzero in the complement of all bands and gaps and, in a mixed sector, away from the switching points $\alpha_\ell^{\pm1}$.  At a switching point it has the standard bounded power behavior used in the parabolic-cylinder model.  It satisfies
\begin{equation}\label{eq:delta-sym}
 \delta_N(z;\xi)\delta_N(z^{-1};\xi)=1,
 \qquad
 \delta_N(z;\xi)=\delta_{N,\infty}(\xi)+O(z^{-1}),\quad z\to\infty,
\end{equation}
and has jumps
\begin{align}
 \delta_{N,+}(z)\delta_{N,-}(z)&=\ee^{\ell_\xi(z)},
 &&z\in\bigcup_{j=1}^N\Gamma_j,\label{eq:delta-upper}\\
 \delta_{N,+}(z)\delta_{N,-}(z)&=\ee^{-\ell_\xi^\sharp(z)},
 &&z\in\bigcup_{j=1}^N\ol\Gamma_j,\label{eq:delta-lower}\\
 \frac{\delta_{N,+}(z)}{\delta_{N,-}(z)}&=\ee^{\ii\Delta_{N,m}(\xi)},
 &&z\in\mathfrak g_m\cup\ol{\mathfrak g}_m,
 \quad m=1,\ldots,N.\label{eq:delta-gap}
\end{align}
The constants $\Delta_{N,m}$ are fixed by the $N$ zero $a$-period conditions.  We write
\begin{equation}\label{eq:DeltaN}
 \Delta_N(\xi)=\bigl(\Delta_{N,1}(\xi),\ldots,\Delta_{N,N}(\xi)\bigr)^T.
\end{equation}
\end{definition}

An explicit Cauchy representation is
\begin{align}
 \delta_N(z;\xi)=\exp\Bigg\{\frac{\RN(z)}{2\pi\ii}\Bigg[&
 \sum_{j=1}^{N}\int_{\Gamma_j}
 \frac{\ell_\xi(s)}{R_{N,+}(s)(s-z)}\dd s
 -\sum_{j=1}^{N}\int_{\ol\Gamma_j}
 \frac{\ell_\xi^\sharp(s)}{R_{N,+}(s)(s-z)}\dd s\nonumber\\
 &+\sum_{m=1}^{N}\int_{\mathfrak g_m\cup\ol{\mathfrak g}_m}
 \frac{\ii\Delta_{N,m}(\xi)}{\RN(s)(s-z)}\dd s\Bigg]\Bigg\}.
 \label{eq:delta-Cauchy}
\end{align}
Here $R_{N,+}$ denotes the boundary value from the left side of each oriented band, both in the upper and lower half-planes; on the gaps, $R_N$ is single-valued in the chosen cut plane.
The moment equations eliminating the polynomial part at infinity are, for $p=0,\ldots,N-1$,
\begin{align}
 0={}&\sum_{j=1}^{N}\int_{\Gamma_j}
 \frac{\ell_\xi(s)s^p}{R_{N,+}(s)}\dd s
 -\sum_{j=1}^{N}\int_{\ol\Gamma_j}
 \frac{\ell_\xi^\sharp(s)s^p}{R_{N,+}(s)}\dd s
 +\sum_{m=1}^{N}\int_{\mathfrak g_m\cup\ol{\mathfrak g}_m}
 \frac{\ii\Delta_{N,m}(\xi)s^p}{\RN(s)}\dd s.
 \label{eq:delta-moments}
\end{align}
In a pure sector $\mathcal P_\ell$, the data $\Delta_N$ and $\delta_{N,\infty}$ are independent of $\xi$.  In a mixed sector $\mathcal M_\ell$, they depend on $\xi$ only through $\alpha_\ell(\xi)$.

\begin{proposition}[Existence and uniqueness of the full Szeg\H{o} factor]
\label{prop:Szego}
Under Assumption~\ref{ass:steepest}, the linear system \eqref{eq:delta-moments} has a unique real solution $\Delta_N(\xi)$.  Formula \eqref{eq:delta-Cauchy} then defines the unique scalar solution of Definition~\ref{def:deltaN} with the prescribed normalization and reciprocal symmetry.  In a mixed sector $\mathcal M_\ell$, locally in each component of a punctured neighborhood of $\alpha_\ell$,
\begin{equation}\label{eq:delta-local-switch}
 \delta_N(z;\xi)=\delta_{\ell,0}(z;\xi)
 (z-\alpha_\ell)^{\ii\nu_\ell\varepsilon},
 \qquad \delta_{\ell,0}(\alpha_\ell;\xi)\neq0,
\end{equation}
where $\varepsilon=\pm1$ depends only on the local sector and $\nu_\ell$ is given in \eqref{eq:nu}.
\end{proposition}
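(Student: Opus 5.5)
The plan is to reduce the scalar problem to a linear system governed by the period matrix of $\CN$. Then we check that the Cauchy formula meets every requirement of Definition~\ref{def:deltaN}, and finally read off the local behaviour at $\alpha_\ell$ from the logarithmic singularity of $\ell_\xi$ there.

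\textbf{Step 1: the moment system.} Taking logarithms, $\log\delta_N$ must satisfy additive jumps: $\ell_\xi$ on upper bands, $-\ell_\xi^\sharp$ on lower bands, and $\ii\Delta_{N,m}$ on gaps. Dividing by $\RN$ converts the band conditions (product of boundary values) into ordinary additive jumps, since $R_{N,+}=-R_{N,-}$ on bands. Plemelj--Sokhotski then shows that \eqref{eq:delta-Cauchy} has the right jumps for any choice of $\Delta_N$. Near infinity, $\RN(z)\sim z^{2N}$ multiplies a Cauchy integral that is $O(z^{-1})$. Boundedness at $\infty$ therefore requires the coefficients of $z^{-1},\dots,z^{-N}$ in the expansion of the bracket to vanish, together with enough of the higher moments.

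The reciprocal symmetry roughly halves the count. Because $\RN(z^{-1})=z^{-2N}\RN(z)$ up to sign, the moments $p\ge N$ are tied to $p<N$. I would check by the substitution $s\mapsto s^{-1}$ that the remaining conditions are exactly \eqref{eq:delta-moments} for $p=0,\dots,N-1$. This step also produces $\delta_N(z)\delta_N(z^{-1})=1$, since the datum is odd under $s\mapsto s^{-1}$: $\ell_\xi$ and $\ell_\xi^\sharp$ enter with opposite signs.

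\textbf{Step 2: unique solvability.} In \eqref{eq:delta-moments} the unknowns enter through the $N\times N$ matrix
\[
A_{pm}=\int_{\mathfrak g_m\cup\ol{\mathfrak g}_m}\frac{s^p}{\RN(s)}\,\dd s .
\]
After pushing down to the quotient curve as in Lemma~\ref{lem:quotient-differentials}, these are (up to the factor $2^N$ and a change of basis from $s^p$ to $k$-polynomials of degree $<N$) the $b$-type periods of the holomorphic differentials $k^q\,\dd k/\mathscr R_N(k)$ over the gap cycles. This matrix is nondegenerate. If a combination $\sum_m c_m A_{pm}$ vanished for all $p$, the holomorphic differential obtained from the dual pairing would have vanishing periods over a full half-basis of cycles. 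By Riemann's bilinear relations, with $\operatorname{Im}\tau>0$, such a differential is identically zero.

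Reality of $\Delta_N$ comes from the Schwarz symmetry. Under $s\mapsto\bar s$ the system is mapped to its complex conjugate, so uniqueness forces the solution to be real. Uniqueness of $\delta_N$ itself is a Liouville argument: the ratio of two solutions, after division by $\RN$ in log form, has no jumps, is bounded, and equals $1$ by the normalization.

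\textbf{Step 3: behaviour at the switching point.} This is the main obstacle. At $\alpha_\ell$ the datum $\ell_\xi$ jumps from $-\log c_r$ to $\log c_\rho$. The Cauchy integral of a function with a jump discontinuity produces the local term
\[
\frac{[\ell]}{2\pi\ii}\log(z-\alpha_\ell),\qquad [\ell]=\log c_\rho(\alpha_\ell)+\log c_r(\alpha_\ell)=\log\bigl(c_rc_\rho\bigr)(\alpha_\ell).
\]
By \eqref{eq:crcrho-product} this quantity is real and negative. Hence the exponent $\ii\nu_\ell\varepsilon$ is purely imaginary, with $\nu_\ell=-\frac{1}{2\pi}\log(c_rc_\rho)(\alpha_\ell)>0$. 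The factor $\varepsilon=\pm1$ records the sign flip of $R_{N,\pm}$ across the band, which differs between the two sides of $\Gamma_\ell$.

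The remaining care is to show that the rest of the integral is analytic and nonzero at $\alpha_\ell$, which gives $\delta_{\ell,0}(\alpha_\ell)\neq0$. Here I would use the analytic continuation of $c_r$ and $c_\rho$ into the lens neighbourhoods. Writing $\ell_\xi$ minus the step function as analytic on each side lets one deform the contour, so only the explicit logarithmic term is singular. The endpoint behaviour at the $\eta_m$ is handled the same way: the $\RN$ prefactor gives bounded behaviour there, consistent with the $L^2$ class. The zero-winding hypothesis of Assumption~\ref{ass:steepest} ensures that all logarithms are single-valued.
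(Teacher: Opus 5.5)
Your proposal is correct and follows essentially the same route as the paper. After pairing each upper gap with its reciprocal lower gap, the gap coefficient matrix of \eqref{eq:delta-moments} is a period matrix of the $N$ reciprocal-invariant holomorphic differentials on the quotient curve, and it is nonsingular because a holomorphic differential whose periods vanish on one half of a canonical homology basis is zero. Reality comes from the Schwarz symmetry, and the factor $(z-\alpha_\ell)^{\ii\nu_\ell\varepsilon}$ comes from separating the constant one-sided values of $\ell_\xi$ at $\alpha_\ell$.

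The differences are minor. Your ``$b$-type gap cycles'' are the ovals $\widetilde a_m$ around the quotient cuts $\Sigma_m$, which are $a$-cycles in the paper's $k$-plane basis; this does not affect the argument. Your reduction of the $2N-1$ moment conditions to $N$ via $s\mapsto s^{-1}$, and your Liouville uniqueness step, are more explicit than the paper's sketch.
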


\begin{proof}
After pairing each upper gap with its reciprocal lower gap, the coefficient matrix of the unknowns $\Delta_{N,m}$ in \eqref{eq:delta-moments} is, up to nonzero orientation constants, the $a$-period matrix of the $N$ reciprocal-invariant holomorphic differentials on the quotient curve.  These differentials form a basis of the genus-$N$ holomorphic differentials.  If a linear combination of the columns vanished, the descended holomorphic differential would have all quotient $a$-periods zero and hence would vanish identically.  The matrix is therefore nonsingular.  Schwarz and reciprocal symmetry of the right-hand side imply that the solution is real.  The Sokhotski--Plemelj formulas give \eqref{eq:delta-upper}--\eqref{eq:delta-gap}; the moment conditions remove the polynomial part at infinity, and the reciprocal normalization fixes the remaining multiplicative constant.  Since \eqref{eq:delta-Cauchy} is an exponential, it has no zeros.  Finally, in a mixed sector the datum $\ell_\xi$ has one jump at $\alpha_\ell$.  Separating its constant one-sided values in the Cauchy integral gives the logarithmic term in \eqref{eq:delta-local-switch}; its coefficient is
$-\frac{1}{2\pi}\log(c_r(\alpha_\ell)c_\rho(\alpha_\ell))=\nu_\ell$.  This is the standard local Szeg\H{o} reduction used in full-jump stationary-point parametrices; compare \cite{YanGengWei2026,FullCH2026}.
\end{proof}

\section{All-band opening of lenses}\label{sec:lens}

This section corrects the key geometric distinction between a half gas and a full gas.  In the half gas, the genus-$\ell$ $g$-function only produces a leading model on the first $\ell$ populated arcs.  Here the $g$-function is built on the fixed full curve, and every one of the $N$ arcs must be opened and retained in the reduced model.

\subsection{Post-\texorpdfstring{$g$}{g} and post-Szeg\H{o} factorization}

Introduce
\begin{equation}\label{eq:S-transform}
 S(z)=\delta_{N,\infty}^{-\sigma_3}\ee^{-\ii t g_{N,\infty}\sigma_3}
 M(z)\ee^{\ii t g_N(z;\xi)\sigma_3}\delta_N(z;\xi)^{\sigma_3}.
\end{equation}
Since $\theta+2g_N=2\phi_N$, the jump on an upper $r$-portion is
\begin{equation}\label{eq:S-r-factor}
 J_S(z)=\mathcal U_{r,-}(z)(-\ii\sigma_1)\mathcal U_{r,+}(z),
\end{equation}
where
\begin{equation}\label{eq:U-pm}
 \mathcal U_{r,\pm}(z)=
 \begin{pmatrix}
 1&\displaystyle \ii\frac{1-r(z)\rho(z)}{2r(z)}
 \delta_{N,\pm}(z)^{-2}\ee^{-2\ii t\phi_{N,\pm}(z;\xi)}\\[2mm]
 0&1
 \end{pmatrix}.
\end{equation}
On an upper $\rho$-portion,
\begin{equation}\label{eq:S-rho-factor}
 J_S(z)=\mathcal L_{\rho,-}(z)(-\ii\sigma_1)\mathcal L_{\rho,+}(z),
\end{equation}
where
\begin{equation}\label{eq:L-pm}
 \mathcal L_{\rho,\pm}(z)=
 \begin{pmatrix}
 1&0\\[1mm]
 \displaystyle \ii\frac{1-r(z)\rho(z)}{2\rho(z)}
 \delta_{N,\pm}(z)^2\ee^{2\ii t\phi_{N,\pm}(z;\xi)}&1
 \end{pmatrix}.
\end{equation}
The lower factors are obtained from
\begin{equation}\label{eq:lower-factors}
 \mathcal A^{\rm low}_{\pm}(z)=
 \sigma_1\ol{\mathcal A_{\mp}(\ol z)}\sigma_1,
 \qquad \mathcal A\in\{\mathcal U_r,\mathcal L_\rho\},
\end{equation}
and their central jump is $+\ii\sigma_1$.

\subsection{The fixed-genus sign table}

Orient every upper band counterclockwise and define
\begin{equation}\label{eq:kappa}
 \kappa_j(\ee^{\ii\vartheta})=
 \frac{\ii\ee^{\ii\vartheta}\PN(\ee^{\ii\vartheta})}
 {2R_{N,+}(\ee^{\ii\vartheta})}.
\end{equation}

\begin{lemma}[Positivity of the quasimomentum density]\label{lem:kappa-positive}
For every $j=1,\ldots,N$,
\begin{equation}\label{eq:kappa-positive-claim}
 \kappa_j(\ee^{\ii\vartheta})>0,
 \qquad \ee^{\ii\vartheta}\in\Gamma_j.
\end{equation}
\end{lemma}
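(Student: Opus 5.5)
The plan is to rewrite $\kappa_j$ as a derivative of the Abelian integral $\phi_N^{(1)}$ along the arc, push it down to the quotient curve with Lemma~\ref{lem:quotient-differentials}, and then read off its sign from the zero count in Lemma~\ref{lem:band-zeros}. Along $z=\ee^{\ii\vartheta}$ we have $\dd z=\ii z\,\dd\vartheta$. Hence, by \eqref{eq:dphi1} and \eqref{eq:kappa},
\[
 \kappa_j(\ee^{\ii\vartheta})\,\dd\vartheta=\tfrac12\,d\phi^{(1)}_{N,+}\big|_{\Gamma_j},
\]
so $\kappa_j$ is half the $\vartheta$-derivative of the $+$ boundary value of $\phi_N^{(1)}$ on the arc. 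By \eqref{eq:differential-pushdown} and $\dd k=-\sin\vartheta\,\dd\vartheta$, this becomes
\[
 \kappa_j(\ee^{\ii\vartheta})=-\frac{\sin\vartheta}{2}\,
 \frac{\mathsf P_N(k)}{\mathscr R_N(k)}\Big|_{k=\cos\vartheta},
\]
where $\mathscr R_N$ is evaluated on the branch that \eqref{eq:R-pushdown} assigns to the $+$ side of $\Gamma_j$.

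\emph{Reality and constant sign.} The point $k=\cos\vartheta$ runs through the open gap $\mathscr G_j$ of the quotient curve. There $\prod_{m}(k-e_m)$ has constant nonzero sign and $\mathscr R_N$ has no cut, so its boundary value is real and nonvanishing. By Lemma~\ref{lem:band-zeros}, all zeros $p_0,\dots,p_N$ of $\mathsf P_N$ lie in the open cuts $\mathscr B_0,\dots,\mathscr B_N$, so $\mathsf P_N$ is real and nonvanishing on $\overline{\mathscr G_j}$. Together with $\sin\vartheta>0$, this shows that $\kappa_j$ is real, continuous and zero-free on $\Gamma_j$. Its sign is therefore constant along each arc, and it suffices to determine it by counting.

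\emph{Sign count.} On $\mathscr G_j=(e_{2j},e_{2j-1})$ exactly $j$ zeros of $\mathsf P_N$ lie to the right, namely $p_0,\dots,p_{j-1}$. Since the leading coefficient is $2>0$ by \eqref{eq:PQ-leading}, $\operatorname{sgn}\mathsf P_N=(-1)^j$ there. For $\mathscr R_N$, I will continue the branch $\mathscr R_N\sim k^{N+1}$ from $k>1$ to $\mathscr G_j$ through a half-plane. Each cut $\mathscr B_0,\dots,\mathscr B_{j-1}$ that is passed contributes two branch points, each with phase $\ee^{\pm\ii\pi/2}$, so this branch is $(-1)^j\,|\mathscr R_N|$ on $\mathscr G_j$, independently of which half-plane is used. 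The ratio $\mathsf P_N/\mathscr R_N$ is then positive on the principal branch.

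\emph{Main obstacle: matching the branch to the $+$ side.} The remaining and delicate step is to decide whether the $+$ side of $\Gamma_j$ corresponds to the principal branch of $\mathscr R_N$ or to its negative. By convention the $+$ side is the left of the counterclockwise arc, i.e.\ the inside of the unit disk. Near $z=0$, identity \eqref{eq:R-pushdown} with $\RN(0)=1$ gives $\mathscr R_N(k(z))\sim -z^{-N-1}/2^{N+1}=-k^{N+1}$. So the interior of the disk carries the \emph{second} sheet, $-\mathscr R_N$. I will check this by continuing \eqref{eq:R-pushdown} from $z=0$ along a radius to $\ee^{\ii\vartheta}$ inside the disk. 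Such a path avoids all cuts of $\RN$, and its image in the $k$-plane lies in the lower half-plane, since $\Ima k<0$ for $|z|<1$, $\Ima z>0$. This yields
\[
 \mathsf P_N/\mathscr R_{N}\big|_{+}=-\,|\mathsf P_N/\mathscr R_N|,
\]
and the extra minus sign cancels the factor $-\sin\vartheta$, giving $\kappa_j>0$.

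Because sign slips are easy here, I will also cross-check the result on a degenerate configuration. As all arcs shrink, $d\phi^{(1)}_N\to(1+z^{-2})\,\dd z$. Then $\kappa\,\dd\vartheta\to\frac12\cdot 2\cos\vartheta\,\ii\,\dd\vartheta$, which must be compared with the appropriate branch. Alternatively, I will verify directly that the sign is independent of $j$, since consecutive gaps differ by one zero $p_j$ and by one cut's worth of $\mathscr R_N$ sign change, which cancel. Either check confirms that the sign is uniform in $j$ and positive.
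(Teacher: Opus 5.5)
Your proposal is correct and follows the paper's proof essentially step for step. You use the same pushdown formula $\kappa_j=-\tfrac{\sin\vartheta}{2}\,\mathsf P_N(k)/\mathscr R_N^{\rm in}(k)$, the same identification of the plus side with the branch $\mathscr R_N^{\rm in}\sim -k^{N+1}$ obtained from $\RN(0)=1$, and the same counts on $\mathscr G_j$: sign $(-1)^j$ for $\mathsf P_N$ and $(-1)^{j-1}$ for the inner branch. Your continuation along a radius, whose image lies in $\Ima k<0$, justifies the branch choice a bit more cleanly than the paper's continuation along the real axis. The closing cross-checks are unnecessary, and the degenerate one would not fix the sign by itself, since $\ii\cos\vartheta$ is imaginary on the circle.
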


\begin{proof}
For a counterclockwise oriented upper arc, the plus side is the side
$|z|<1$.  Under the quotient map this side corresponds to the branch
$\mathscr R_N^{\rm in}$ for which, by \eqref{eq:R-pushdown} and
$\RN(0)=1$,
\begin{equation}\label{eq:Rin-infinity}
 \mathscr R_N^{\rm in}(k)\sim-k^{N+1}
\end{equation}
at the point at infinity represented by $z\to0$.  Let
$k=\cos\vartheta\in\mathscr G_j$.  Continuing from $k>1$ to
$\mathscr G_j$ along the real axis crosses precisely the cuts
$\mathscr B_0,\ldots,\mathscr B_{j-1}$, and the radical changes sign at
each crossing.  Hence
\begin{equation}\label{eq:Rin-sign}
 \operatorname{sgn}\mathscr R_N^{\rm in}(k)=(-1)^{j-1},
 \qquad k\in\mathscr G_j.
\end{equation}

The polynomial $\mathsf P_N$ has positive leading coefficient and, by
Lemma~\ref{lem:band-zeros}, exactly one simple zero in every cut.  Moving
from $+\infty$ to $\mathscr G_j$ crosses exactly the zeros
$p_0,\ldots,p_{j-1}$ and no others.  Therefore
\begin{equation}\label{eq:PN-gap-sign}
 \operatorname{sgn}\mathsf P_N(k)=(-1)^j,
 \qquad k\in\mathscr G_j.
\end{equation}
Using \eqref{eq:PQ-pushdown} and \eqref{eq:R-pushdown}, and recalling that
$z=\ee^{\ii\vartheta}$, we obtain
\begin{equation}\label{eq:kappa-k-form}
 \kappa_j(\ee^{\ii\vartheta})
 =-\frac{\sin\vartheta}{2}
 \frac{\mathsf P_N(k)}{\mathscr R_N^{\rm in}(k)}.
\end{equation}
The ratio in \eqref{eq:kappa-k-form} is negative by
\eqref{eq:Rin-sign}--\eqref{eq:PN-gap-sign}, while
$\sin\vartheta>0$ on the upper semicircle.  Thus
$\kappa_j>0$ on the whole open band.
\end{proof}

From \eqref{eq:phiN-prime},
\begin{equation}\label{eq:tangent-sign}
 \frac{\dd}{\dd\vartheta}\phi_{N,+}(\ee^{\ii\vartheta};\xi)
 =\kappa_j(\ee^{\ii\vartheta})
 \bigl(\xi-V_N(\ee^{\ii\vartheta})\bigr).
\end{equation}
The next proposition converts this tangential identity into the two-sided sign distribution needed for opening lenses.

\begin{proposition}[Sign distribution on all bands]\label{prop:signs}
Let $\xi$ stay in a compact subset of one open sector.
\begin{enumerate}[label=(\roman*)]
\item If $\xi\in\mathcal P_\ell$, then
\begin{equation}\label{eq:sign-pure}
 \Ima\phi_N<0\quad\text{adjacent to }\Gamma_j\ (j\le\ell),
 \qquad
 \Ima\phi_N>0\quad\text{adjacent to }\Gamma_j\ (j>\ell).
\end{equation}
\item If $\xi\in\mathcal M_\ell$, then
\begin{equation}\label{eq:sign-mixed}
 \begin{aligned}
 &\Ima\phi_N<0&&\text{adjacent to }\Gamma_j,\ j<\ell,\\
 &\Ima\phi_N<0&&\text{adjacent to }(\eta_{2\ell-1},\alpha_\ell),\\
 &\Ima\phi_N>0&&\text{adjacent to }(\alpha_\ell,\eta_{2\ell}),\\
 &\Ima\phi_N>0&&\text{adjacent to }\Gamma_j,\ j>\ell.
 \end{aligned}
\end{equation}
\end{enumerate}
Hence $\mathcal U_{r,\pm}-I$ and $\mathcal L_{\rho,\pm}-I$ are exponentially small on the corresponding lens lips, away from endpoints and, in a mixed sector, away from $\alpha_\ell$.
\end{proposition}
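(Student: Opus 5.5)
The plan is to reduce the two-sided sign statement to a boundary computation on each open band, and then extend it into a one-sided neighborhood by a Cauchy--Riemann argument.

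First, I will show that $\phi_{N}$ is real on the bands. On every upper band, \eqref{eq:gN-band} together with $\theta+2g_N=2\phi_N$ gives $\phi_{N,+}+\phi_{N,-}=0$. The reciprocal and Schwarz symmetries of the real-normalized differentials then give $\phi_{N,-}(z)=-\ol{\phi_{N,+}(z)}$ up to a real additive constant; the constant vanishes because the base point is $\eta_1$ and the $a$-periods are zero. Hence $\Rea\phi_{N,\pm}$ is constant on each band, i.e.\ $\phi_{N,+}$ is purely imaginary there. Equivalently, the integrand $\ii z\PN/R_{N,+}$ is real on $|z|=1$, which is exactly the content of Lemma~\ref{lem:kappa-positive}. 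I will fix the convention so that \eqref{eq:tangent-sign} concerns the real-valued quantity $\ii^{-1}\phi_{N,+}$, i.e.\ $\Ima\phi_{N,+}$, and note that $\Ima\phi_{N,+}$ itself vanishes at the endpoints only up to a constant. This bookkeeping (which of $\Rea$ or $\Ima$ of $\phi_{N,\pm}$ is constant along the band) is routine but must be consistent with \eqref{eq:U-pm}--\eqref{eq:L-pm}.

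The main step is the normal derivative. Since $\phi_N$ is analytic off the cuts, its boundary value from the plus side (the side $|z|<1$) extends analytically across the open band. Parametrize $z=\ee^{\ii\vartheta}(1-\epsilon)$ with small $\epsilon>0$. The Cauchy--Riemann equations convert the tangential derivative \eqref{eq:tangent-sign} into a normal derivative:
\[
\frac{\partial}{\partial\epsilon}\Ima\phi_N\Big|_{\epsilon=0}
=\pm\,\kappa_j(\ee^{\ii\vartheta})\bigl(\xi-V_N(\ee^{\ii\vartheta})\bigr),
\]
where the sign is fixed once by the orientation (counterclockwise tangent, inward normal). Since $\Ima\phi_{N,+}$ vanishes on the band, a first-order Taylor expansion gives the sign of $\Ima\phi_N$ in a thin inner neighborhood as the sign of $\kappa_j(\xi-V_N)$ times that fixed orientation sign. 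For the outer side $|z|>1$, I will use $\phi_{N,-}=-\phi_{N,+}$ continued analytically. The normal direction reverses as well, so the sign of $\Ima\phi_N$ on the outer side is the same as on the inner side. This is the "two-sided" content of the statement, and the same argument applies to the lower lens lips via the reflection \eqref{eq:lower-factors}.

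Next, I will insert the sign data. By Lemma~\ref{lem:kappa-positive}, $\kappa_j>0$. By Theorem~\ref{thm:velocity-ordering}, $V_N$ is strictly decreasing along $\Gamma_j$ from $v_{2j-1}$ to $v_{2j}$, and the endpoint velocities are globally ordered. In $\mathcal P_\ell$, it follows that $V_N>\xi$ on $\Gamma_j$ for $j\le\ell$ and $V_N<\xi$ on $\Gamma_j$ for $j>\ell$. In $\mathcal M_\ell$, the sign of $\xi-V_N$ flips only at $\alpha_\ell(\xi)$, where the zero is simple. This gives \eqref{eq:sign-pure}--\eqref{eq:sign-mixed}, provided the orientation sign is matched so that the $r$-portion (where $V_N>\xi$) gives $\Ima\phi_N<0$. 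I will verify this match by checking that $\ee^{-2\ii t\phi_N}$ in \eqref{eq:U-pm} decays there. That consistency check is the part I expect to be the main obstacle, since one must track the conventions of the $+$/$-$ sides and the real-versus-imaginary normalization simultaneously.

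Finally, for uniformity I will take $\xi$ in a compact subset of the sector. Then $|\xi-V_N|$ is bounded below on each band away from fixed small disks around the endpoints and, in the mixed case, around $\alpha_\ell$. The Taylor remainder is uniformly controlled, so there are lens lips at distance of order $\epsilon$ on which $|\Ima\phi_N|\ge c\,\epsilon$. The factors $\delta_N^{\pm2}$ and $(1-r\rho)/(2r)$ or $(1-r\rho)/(2\rho)$ are bounded there by Proposition~\ref{prop:Szego} and Assumption~\ref{ass:steepest}. Hence $\mathcal U_{r,\pm}-I$ and $\mathcal L_{\rho,\pm}-I$ are $O(\ee^{-ct})$ on those lips.
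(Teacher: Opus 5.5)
The architecture of your proof is the paper's: convert \eqref{eq:tangent-sign} by Cauchy--Riemann into a normal derivative of $\Ima\phi_N$, get the outer side from $\phi_{N,-}=-\phi_{N,+}$ plus the reversed normal, then insert $\kappa_j>0$ and the velocity ordering, with uniformity on compact subsectors. But the two steps that actually determine the sign are not right as written.

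\textbf{The reality bookkeeping is backwards.} From $\phi_{N,+}+\phi_{N,-}=0$ and $\phi_{N,-}=-\overline{\phi_{N,+}}+C$ you get $2\ii\,\Ima\phi_{N,+}=-C$. So $\Ima\phi_{N,\pm}$ is constant, in fact zero, on each band, and $\phi_{N,+}$ is \emph{real} there, not purely imaginary. This is also what the reality of $\kappa_j$ says, since $\partial_\vartheta\phi_{N,+}=\kappa_j(\xi-V_N)$ with a real right-hand side.

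Your proposed ``convention'' that \eqref{eq:tangent-sign} concerns $\Ima\phi_{N,+}$ contradicts that formula. If it held, Cauchy--Riemann would give the normal derivative of $\Rea\phi_N$, and the sign argument would collapse. Your main step silently uses the correct fact $\Ima\phi_{N,+}=0$, so the first paragraph must be corrected to agree with it.

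\textbf{The Cauchy--Riemann sign is left undetermined, and your check is circular.} You leave the sign as ``$\pm$'' and propose to fix it by checking that $\ee^{-2\ii t\phi_N}$ in \eqref{eq:U-pm} decays on the $r$-portion. That decay is equivalent to $\Ima\phi_N<0$, which is exactly the claim being proved. Since the proposition is a sign statement, this is the one thing the proof must compute. It takes one line: with $z=\ee^{\ii\vartheta}(1-\epsilon)$, $\epsilon\downarrow0$,
\[
 \partial_\epsilon\phi_N\big|_{\epsilon=0^+}
 =-\ee^{\ii\vartheta}\phi_{N,+}'(\ee^{\ii\vartheta})
 =\ii\,\partial_\vartheta\phi_{N,+}(\ee^{\ii\vartheta})
 =\ii\,\kappa_j\bigl(\xi-V_N\bigr),
\]
so $\partial_\epsilon\Ima\phi_N=+\kappa_j(\xi-V_N)$. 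This is the paper's identity $\partial_{\mathbf n_+}\Ima\phi_{N,+}=\partial_s\Rea\phi_{N,+}$, because the left normal of a counterclockwise arc is the inward one. Hence $V_N>\xi$ forces $\Ima\phi_N<0$ on both sides. Only afterwards does one conclude that $\mathcal U_{r,\pm}-I$ decays; that is a consequence, not a verification.

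With these two corrections your argument is the paper's proof. Your added remark that $\delta_N^{\pm2}$ and the coefficients $(1-r\rho)/(2r)$, $(1-r\rho)/(2\rho)$ stay bounded on the lips is a detail the paper leaves implicit.
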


\begin{proof}
The velocity inequalities follow first from Theorem~\ref{thm:velocity-ordering}.
If $\xi\in\mathcal P_\ell$, then the whole range of $V_N$ on every
$\Gamma_j$ with $j\le\ell$ lies above $\xi$, whereas the whole range on
every $\Gamma_j$ with $j>\ell$ lies below $\xi$.  If
$\xi\in\mathcal M_\ell$, the same statement holds on the earlier and later
bands, and strict monotonicity gives
\begin{equation*}
 V_N>\xi\quad\hbox{on }(\eta_{2\ell-1},\alpha_\ell),
 \qquad
 V_N<\xi\quad\hbox{on }(\alpha_\ell,\eta_{2\ell}).
\end{equation*}

We now convert these inequalities into a two-sided sign statement.  The
real normalization of the second-kind differentials implies that
$\phi_{N,+}$ is real on every open upper band; equivalently this follows by
integrating \eqref{eq:tangent-sign}, whose right-hand side is real, from a
branch point and using that all intervening periods are real.  Parameterize
the band counterclockwise by arc length $s$, let $\mathbf t$ be its positive
unit tangent, and let $\mathbf n_+$ be the unit normal pointing into the plus
side $|z|<1$.  The Cauchy--Riemann equations give
\begin{equation}\label{eq:normal-CR-plus}
 \partial_{\mathbf n_+}\Ima\phi_{N,+}
 =\partial_s\Rea\phi_{N,+}
 =\kappa_j\bigl(\xi-V_N\bigr).
\end{equation}
On the minus side, the inward normal is $\mathbf n_-=-\mathbf n_+$.  The two boundary values satisfy $\phi_{N,-}=-\phi_{N,+}+c_j$ with a real band constant $c_j$; hence their imaginary parts and tangential derivatives have the required opposite relation, and the same normal formula follows,
\begin{equation}\label{eq:normal-CR-minus}
 \partial_{\mathbf n_-}\Ima\phi_{N,-}
 =\kappa_j\bigl(\xi-V_N\bigr).
\end{equation}
Consequently, uniformly on a closed subarc that avoids the endpoints and,
in a mixed sector, the switching point,
\begin{equation}\label{eq:normal-sign-expansion}
 \Ima\phi_N(z+h\mathbf n_\pm;\xi)
 =h\,\kappa_j(z)\bigl(\xi-V_N(z)\bigr)+O(h^2),
 \qquad h\downarrow0.
\end{equation}
Lemma~\ref{lem:kappa-positive} shows that the sign in
\eqref{eq:normal-sign-expansion} is precisely the sign of $\xi-V_N$ on both
sides of the band.  Combining this with the velocity inequalities proves
\eqref{eq:sign-pure} and \eqref{eq:sign-mixed}.

For $\xi$ in a compact subsector, the quantities
$|\xi-V_N|$ and $\kappa_j$ are bounded away from zero on closed subarcs
outside the local discs.  The lens lips can therefore be chosen inside the
strict sign neighborhoods furnished by \eqref{eq:normal-sign-expansion}.
It follows that $\mathcal U_{r,\pm}-I$ and
$\mathcal L_{\rho,\pm}-I$ are $O(\ee^{-ct})$ there for some $c>0$.
\end{proof}

\begin{table}[htbp]
\centering
\small
\renewcommand{\arraystretch}{1.35}
\begin{tabular}{c|c|c|c}
\toprule
sector & bands $j<\ell$ & active band $j=\ell$ & bands $j>\ell$\\
\midrule
$\mathcal P_\ell$ & $r$ on all & $r$ on all & $\rho$ on all\\
$\mathcal M_\ell$ & $r$ on all & $r$ on $(\eta_{2\ell-1},\alpha_\ell)$; $\rho$ on $(\alpha_\ell,\eta_{2\ell})$ & $\rho$ on all\\
\bottomrule
\end{tabular}
\caption{Fixed-genus-$N$ sign and factorization table.  The entries for $\mathcal P_0$ and $\mathcal P_N$ are understood by omitting the empty groups.  The $r$-factorization carries $\ee^{-2\ii t\phi_N}$ and is used where $\Ima\phi_N<0$; the $\rho$-factorization carries $\ee^{2\ii t\phi_N}$ and is used where $\Ima\phi_N>0$.}
\label{tab:signs}
\end{table}

\subsection{Crossed opening on the active band and full opening on the remaining bands}

For a pure sector, open an ordinary two-sided lens around every $\Gamma_j$, using $\mathcal U_r$ for $j\le\ell$ and $\mathcal L_\rho$ for $j>\ell$.  For a mixed sector $\mathcal M_\ell$, use full $r$-lenses around $\Gamma_j$, $j<\ell$, and full $\rho$-lenses around $\Gamma_j$, $j>\ell$.  Split the active band as in \eqref{eq:mixed-decomp}.  At $\alpha_\ell$, join the outer lip of the $r$-lens to the inner lip of the $\rho$-lens and the inner lip of the $r$-lens to the outer lip of the $\rho$-lens.  Thus the two lens systems cross at the switching point.  This is the unit-circle analogue of the crossed opening in the intermediate sector of a full CH or full KdV soliton gas.

Let $D_{j,\pm}^r$ and $D_{j,\pm}^\rho$ denote the corresponding lens domains adjacent to the plus and minus sides of the oriented upper bands.  Define
\begin{equation}\label{eq:T-transform}
 T(z)=S(z)
 \begin{cases}
 \mathcal U_{r,+}(z)^{-1},&z\in D_{j,+}^r,\\
 \mathcal U_{r,-}(z),&z\in D_{j,-}^r,\\
 \mathcal L_{\rho,+}(z)^{-1},&z\in D_{j,+}^\rho,\\
 \mathcal L_{\rho,-}(z),&z\in D_{j,-}^\rho,\\
 \text{the reflected factors from \eqref{eq:lower-factors}},&z\text{ in a lower lens},\\
 I,&\text{elsewhere}.
 \end{cases}
\end{equation}

Figure~\ref{fig:all-band-lenses} is drawn for a representative case $N=5$, $\ell=3$.  It includes the full lenses on the later arcs $j>\ell$, which were absent from the half-gas Figure~8 construction.

\begin{figure}[htbp]
\centering
\begin{tikzpicture}[scale=3.35,>=Stealth,line cap=round,line join=round,
 band/.style={black,line width=1.35pt,postaction={decorate},decoration={markings,mark=at position .55 with {\arrow{Stealth[length=2.2mm]}}}},
 rlip/.style={red!78!black,line width=.82pt,postaction={decorate},decoration={markings,mark=at position .55 with {\arrow{Stealth[length=1.7mm]}}}},
 rholip/.style={blue!78!black,line width=.82pt,postaction={decorate},decoration={markings,mark=at position .55 with {\arrow{Stealth[length=1.7mm]}}}},
 guide/.style={gray!65,dashed,line width=.45pt}]

% Unit circle and real axis.
\draw[guide] (0,0) circle (1);
\draw[gray!55,line width=.35pt] (-1.18,0)--(1.18,0);
\fill (0,0) circle (.8pt); \node[below right,scale=.55] at (0,0) {$0$};
\node[right,scale=.55] at (1,0) {$1$};
\node[left,scale=.55] at (-1,0) {$-1$};

% Representative upper endpoint angles: N=5, active ell=3.
\def\aone{13}\def\atwo{25}
\def\athree{38}\def\afour{51}
\def\afive{65}\def\aalpha{73}\def\asix{82}
\def\aseven{98}\def\aeight{111}
\def\anine{127}\def\aten{143}
\def\rout{1.105}\def\rin{0.895}

% Helper: bands.
\foreach \aa/\bb in {\aone/\atwo,\athree/\afour,\afive/\asix,\aseven/\aeight,\anine/\aten}{
  \draw[band] (\aa:1) arc[start angle=\aa,end angle=\bb,radius=1];
  \draw[band] (-\bb:1) arc[start angle=-\bb,end angle=-\aa,radius=1];
}

% r-lenses on j=1,2 (upper and reflected lower).
\foreach \aa/\bb in {\aone/\atwo,\athree/\afour}{
  \draw[rlip] (\aa:\rout) arc[start angle=\aa,end angle=\bb,radius=\rout];
  \draw[rlip] (\aa:\rin) arc[start angle=\aa,end angle=\bb,radius=\rin];
  \draw[red!78!black,line width=.65pt] (\aa:\rin)--(\aa:\rout) (\bb:\rin)--(\bb:\rout);
  \draw[rlip] (-\bb:\rout) arc[start angle=-\bb,end angle=-\aa,radius=\rout];
  \draw[rlip] (-\bb:\rin) arc[start angle=-\bb,end angle=-\aa,radius=\rin];
  \draw[red!78!black,line width=.65pt] (-\aa:\rin)--(-\aa:\rout) (-\bb:\rin)--(-\bb:\rout);
}

% Active r-part (eta_5 to alpha-epsilon) and rho-part (alpha+epsilon to eta_6).
\def\am{71.7}\def\ap{74.3}
\draw[rlip] (\afive:\rout) arc[start angle=\afive,end angle=\am,radius=\rout];
\draw[rlip] (\afive:\rin) arc[start angle=\afive,end angle=\am,radius=\rin];
\draw[red!78!black,line width=.65pt] (\afive:\rin)--(\afive:\rout);
\draw[rholip] (\ap:\rout) arc[start angle=\ap,end angle=\asix,radius=\rout];
\draw[rholip] (\ap:\rin) arc[start angle=\ap,end angle=\asix,radius=\rin];
\draw[blue!78!black,line width=.65pt] (\asix:\rin)--(\asix:\rout);
% Crossed joining at alpha.
\draw[red!78!black,line width=.8pt] (\am:\rout)--(\ap:\rin);
\draw[blue!78!black,line width=.8pt] (\am:\rin)--(\ap:\rout);
% Lower reflected active opening.
\draw[rlip] (-\am:\rout) arc[start angle=-\am,end angle=-\afive,radius=\rout];
\draw[rlip] (-\am:\rin) arc[start angle=-\am,end angle=-\afive,radius=\rin];
\draw[red!78!black,line width=.65pt] (-\afive:\rin)--(-\afive:\rout);
\draw[rholip] (-\asix:\rout) arc[start angle=-\asix,end angle=-\ap,radius=\rout];
\draw[rholip] (-\asix:\rin) arc[start angle=-\asix,end angle=-\ap,radius=\rin];
\draw[blue!78!black,line width=.65pt] (-\asix:\rin)--(-\asix:\rout);
\draw[red!78!black,line width=.8pt] (-\am:\rout)--(-\ap:\rin);
\draw[blue!78!black,line width=.8pt] (-\am:\rin)--(-\ap:\rout);

% rho-lenses on j=4,5, including all later arcs.
\foreach \aa/\bb in {\aseven/\aeight,\anine/\aten}{
  \draw[rholip] (\aa:\rout) arc[start angle=\aa,end angle=\bb,radius=\rout];
  \draw[rholip] (\aa:\rin) arc[start angle=\aa,end angle=\bb,radius=\rin];
  \draw[blue!78!black,line width=.65pt] (\aa:\rin)--(\aa:\rout) (\bb:\rin)--(\bb:\rout);
  \draw[rholip] (-\bb:\rout) arc[start angle=-\bb,end angle=-\aa,radius=\rout];
  \draw[rholip] (-\bb:\rin) arc[start angle=-\bb,end angle=-\aa,radius=\rin];
  \draw[blue!78!black,line width=.65pt] (-\aa:\rin)--(-\aa:\rout) (-\bb:\rin)--(-\bb:\rout);
}

% Endpoints and labels.
\foreach \ang in {\aone,\atwo,\athree,\afour,\afive,\asix,\aseven,\aeight,\anine,\aten}{
  \fill (\ang:1) circle (.72pt); \fill (-\ang:1) circle (.72pt);
}
\fill (\aalpha:1) circle (.92pt); \fill (-\aalpha:1) circle (.92pt);
\node[scale=.53,anchor=south east] at (\aalpha:1.03) {$\alpha_\ell$};
\node[scale=.53,anchor=north east] at (-\aalpha:1.03) {$\alpha_\ell^{-1}$};
\node[scale=.55] at (19:1.25) {$\Gamma_1$};
\node[scale=.55] at (44.5:1.25) {$\Gamma_{\ell-1}$};
\node[scale=.55] at (73.5:1.30) {$\Gamma_\ell$};
\node[scale=.55] at (104.5:1.25) {$\Gamma_{\ell+1}$};
\node[scale=.55] at (135:1.25) {$\Gamma_N$};
\node[scale=.50,red!70!black] at (50:.58) {$r$-lenses, $j<\ell$};
\node[scale=.50,blue!70!black] at (120:.59) {$\rho$-lenses, $j>\ell$};
\node[scale=.48,align=center] at (73:.47) {crossed\\opening};
\node[scale=.48,red!70!black] at (67:1.43) {$r$};
\node[scale=.48,blue!70!black] at (80:1.43) {$\rho$};
\end{tikzpicture}
\caption{All-band lens opening in the mixed sector $\mathcal M_\ell$ (representative case $N=5$, $\ell=3$).  The active band $\Gamma_\ell$ is split at $\alpha_\ell$ and the $r$- and $\rho$-lenses are joined crosswise.  Crucially, full $\rho$-lenses are also opened around every remaining band $\Gamma_j$, $j>\ell$.  The lower configuration is fixed by the dNLS symmetry.}
\label{fig:all-band-lenses}
\end{figure}

\begin{proposition}[Full-contour reduction]\label{prop:full-reduction}
Let $T$ be defined by \eqref{eq:T-transform}.  Uniformly for $\xi$ in a compact subset of a pure or mixed sector, its jumps satisfy
\begin{equation}\label{eq:T-jumps}
 J_T(z)=
 \begin{cases}
 -\ii\sigma_1,&z\in\displaystyle\bigcup_{j=1}^N\Gamma_j,\\[1mm]
 \ii\sigma_1,&z\in\displaystyle\bigcup_{j=1}^N\ol\Gamma_j,\\[1mm]
 \ee^{\ii\vartheta_{N,m}\sigma_3},&z\in\mathfrak g_m\cup\ol{\mathfrak g}_m,
 \quad m=1,\ldots,N,\\[1mm]
 I+O(\ee^{-ct}),&z\text{ on all lens lips outside local discs},
 \end{cases}
\end{equation}
where
\begin{equation}\label{eq:vartheta}
 \vartheta_{N,m}=t\Omega_{N,m}(\xi)+\Delta_{N,m}(\xi).
\end{equation}
In a mixed sector, the two central factorizations on the portions of $\Gamma_\ell$ have the same middle factor $-\ii\sigma_1$.  Hence the switching point does not break the central model contour: the jump $-\ii\sigma_1$ extends across the whole of $\Gamma_\ell$.  Therefore the nondecaying model contour contains all $\Gamma_j$, $j=1,\ldots,N$.
\end{proposition}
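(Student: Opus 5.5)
The argument runs in three stages: the jumps of $S$, the jumps of $T$ after the lenses are opened, and a check on the lens lips and at the switching point.

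\emph{Jumps of $S$.} Off the bands, conjugation by $\ee^{\ii tg_N\sigma_3}\delta_N^{\sigma_3}$ acts on $J_M=I$ only through the boundary-value differences. On a gap $\mathfrak g_m$ (or its reflection) the jump of $S$ is therefore
\[
\bigl(\ee^{\ii t g_{N,-}}\delta_{N,-}\bigr)^{-\sigma_3}\bigl(\ee^{\ii t g_{N,+}}\delta_{N,+}\bigr)^{\sigma_3}.
\]
By \eqref{eq:gN-gap} and \eqref{eq:delta-gap} this equals $\ee^{\ii(t\Omega_{N,m}+\Delta_{N,m})\sigma_3}$, which gives \eqref{eq:vartheta}. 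The orientation and sign conventions on the lower gaps are matched by the reciprocal/Schwarz symmetry, as recorded after \eqref{eq:OmegaN-vector} and in \eqref{eq:delta-sym}. On an upper band I will verify \eqref{eq:S-r-factor} and \eqref{eq:S-rho-factor} directly. Start from the factorizations \eqref{eq:r-factor} and \eqref{eq:rho-factor} and conjugate by the $\pm$ boundary values. By \eqref{eq:gN-band} one has $g_{N,+}+g_{N,-}=-\theta$, and by \eqref{eq:delta-upper} one has $\delta_{N,+}\delta_{N,-}=\ee^{\ell_\xi}$. Consider first an $r$-portion, where $\ell_\xi=-\log c_r$. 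The off-diagonal middle factor becomes
\[
\begin{pmatrix}0&-\ii c_r^{-1}\ee^{-\ii t(\theta+g_{N,+}+g_{N,-})}(\delta_{N,+}\delta_{N,-})^{-1}\\ \ast&0\end{pmatrix}=-\ii\sigma_1 .
\]
The outer $U_r$ factors become $\mathcal U_{r,\mp}$, using $\theta+2g_{N,\pm}=2\phi_{N,\pm}$. The $\rho$-portion is the same computation with $\ell_\xi=\log c_\rho$. The lower bands then follow from \eqref{eq:lower-factors} and the symmetry \eqref{eq:M-symmetry}, which turns $-\ii\sigma_1$ into $+\ii\sigma_1$.

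\emph{Jumps of $T$.} The transformation \eqref{eq:T-transform} multiplies by $\mathcal U_{r,+}^{-1}$ on the plus side and by $\mathcal U_{r,-}$ on the minus side, and similarly for $\mathcal L_\rho$. It therefore strips the outer triangular factors, and the band jump is exactly the middle factor $-\ii\sigma_1$ (respectively $\ii\sigma_1$ on the lower bands). This uses the analytic continuation of $r,\rho,\delta_N$ into the lens domains, which Assumption~\ref{ass:steepest} and Definition~\ref{def:deltaN} provide. On each lens lip the jump is the corresponding triangular factor. By Proposition~\ref{prop:signs}, these factors are $I+O(\ee^{-ct})$ uniformly outside small discs around the endpoints and $\alpha_\ell^{\pm1}$, since $|\xi-V_N|$ and $\kappa_j$ are bounded below there and $\delta_N^{\pm2}$ is bounded. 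The gaps are not touched by $T$, so the gap jumps of $S$ persist.

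\emph{The switching point.} This is the only delicate part. I must check that the crossed junction produces no extra non-decaying jump on the short segments joining the lips near $\alpha_\ell$. I also must check that on $\Gamma_\ell$ the two central factors coincide, so that $\alpha_\ell$ is not a jump discontinuity of the model. The coincidence is immediate, since both computations above give literally $-\ii\sigma_1$; the switch of $\ell_\xi$ from $-\log c_r$ to $\log c_\rho$ is absorbed exactly into $\delta_N$. The crossing segments lie inside the local disc around $\alpha_\ell$, which is excluded from the statement. There they are governed by the local behaviour \eqref{eq:delta-local-switch} and are left to the parabolic-cylinder parametrix. Finally, the lenses around $\Gamma_j$ with $j>\ell$ give only exponentially small lip jumps, so every $\Gamma_j$ carries the jump $-\ii\sigma_1$. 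This yields the concluding assertion of the proposition.
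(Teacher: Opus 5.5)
Your proposal is correct and follows the paper's route. The central factor is $-\ii\sigma_1$ whichever triangular factorization is used, and \eqref{eq:T-transform} moves the outer factors onto the lens lips, where Proposition~\ref{prop:signs} makes them $I+O(\ee^{-ct})$; the gap jumps come from \eqref{eq:gN-gap} and \eqref{eq:delta-gap}. You supply more detail than the paper's short proof, namely the explicit conjugation check of \eqref{eq:S-r-factor}--\eqref{eq:S-rho-factor} and the observation that the crossing segments at $\alpha_\ell$ lie inside the parabolic-cylinder disc.
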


\begin{proof}
Equations \eqref{eq:S-r-factor} and \eqref{eq:S-rho-factor} show that the central factor is independent of which triangular factorization is used.  The transformation \eqref{eq:T-transform} moves the two outer triangular factors to the lens lips.  Proposition~\ref{prop:signs} makes these lip jumps exponentially close to the identity.  The gap jumps come from \eqref{eq:gN-gap} and \eqref{eq:delta-gap}.  The same argument applies to every band, including all $j>\ell$; no spectral arc is discarded.
\end{proof}

\section{The fixed genus-\texorpdfstring{$N$}{N} model problem}\label{sec:model}

\subsection{Model problem in the \texorpdfstring{$z$}{z}-plane}

\begin{RHP}[Fixed full model in the $z$-plane]\label{rhp:zmodel}
Find a $2\times2$ matrix $Y_N(z)=Y_N(z;x,t)$ such that:
\begin{enumerate}[label=(\roman*)]
\item $Y_N$ is analytic in the complement of all upper and lower bands, all upper and lower gaps, and $z=0$.
\item
\begin{equation}\label{eq:Y-normalization}
 Y_N(z)=I+O(z^{-1}),\quad z\to\infty,
 \qquad
 Y_N(z)=\frac{\sigma_1}{z}+O(1),\quad z\to0.
\end{equation}
\item
\begin{equation}\label{eq:Y-symmetry}
 Y_N(z)=\sigma_1\ol{Y_N(\ol z)}\sigma_1
 =z^{-1}Y_N(z^{-1})\sigma_1.
\end{equation}
\item Its jumps are
\begin{equation}\label{eq:Y-jumps}
 Y_{N,+}=Y_{N,-}
 \begin{cases}
 -\ii\sigma_1,&z\in\displaystyle\bigcup_{j=1}^N\Gamma_j,\\
 \ii\sigma_1,&z\in\displaystyle\bigcup_{j=1}^N\ol\Gamma_j,\\
 \ee^{\ii\vartheta_{N,m}\sigma_3},&z\in\mathfrak g_m\cup\ol{\mathfrak g}_m,
 \quad m=1,\ldots,N.
 \end{cases}
\end{equation}
\end{enumerate}
\end{RHP}

The RHP is the same in all $2N+1$ sectors.  Only $\Delta_N$, and therefore the vector $\vartheta_N$, changes with the factorization data.

\subsection{Joukowski folding and completion of all model curves}

Let
\begin{equation}\label{eq:z-k}
 z=z(k)=k+\sqrt{k^2-1},\qquad z(k)\sim2k\quad(k\to\infty),
\end{equation}
and set
\begin{equation}\label{eq:F-def}
 F_N(k)=\frac{1}{\sqrt{1-z(k)^{-2}}}\,Y_N(z(k)).
\end{equation}
The $N$ unit-circle spectral bands become gaps in the $k$-plane, while their $N+1$ complementary arcs become the $N+1$ cuts
\begin{align}
 \Sigma_0&=(e_1,1),\label{eq:Sigma0}\\
 \Sigma_m&=(e_{2m+1},e_{2m}),\qquad m=1,\ldots,N-1,\label{eq:Sigmam}\\
 \Sigma_N&=(-1,e_{2N}).\label{eq:SigmaN}
\end{align}
Figure~\ref{fig:full-joukowski} displays both the complete post-lens contour in the $z$-plane and all $N+1$ model cuts in the $k$-plane (the drawing uses $N=4$ only to keep the labels legible).

\begin{figure}[htbp]
\centering
\begin{tikzpicture}[scale=.96,>=Stealth,line cap=round,line join=round,
 zband/.style={black,line width=1.35pt,postaction={decorate},decoration={markings,mark=at position .55 with {\arrow{Stealth[length=2mm]}}}},
 kgap/.style={red!70!black,line width=2.0pt},
 kband/.style={blue!72!black,line width=2.4pt},
 mapline/.style={gray!55,dashed,line width=.55pt}]
% z-plane panel (representative N=4; all four bands retained).
\begin{scope}[xshift=-4.9cm,yshift=.15cm,scale=2.25]
 \draw[gray!65,dashed] (0,0) circle (1);
 \draw[gray!55] (-1.12,0)--(1.12,0);
 \fill (0,0) circle (.7pt);
 \node[scale=.55,below] at (0,-1.14) {$z$-plane: all $\Gamma_j$ remain jumps};
 \node[scale=.55,right] at (1,0) {$1$};
 \node[scale=.55,left] at (-1,0) {$-1$};
 \def\bA{14}\def\bB{29}\def\bC{48}\def\bD{64}\def\bE{84}\def\bF{101}\def\bG{124}\def\bH{145}
 \foreach \aa/\bb in {\bA/\bB,\bC/\bD,\bE/\bF,\bG/\bH}{
  \draw[zband] (\aa:1) arc[start angle=\aa,end angle=\bb,radius=1];
  \draw[zband] (-\bb:1) arc[start angle=-\bb,end angle=-\aa,radius=1];
  \fill (\aa:1) circle (.65pt); \fill (\bb:1) circle (.65pt);
  \fill (-\aa:1) circle (.65pt); \fill (-\bb:1) circle (.65pt);
 }
 \node[scale=.52] at (21.5:1.19) {$\Gamma_1$};
 \node[scale=.52] at (56:1.20) {$\Gamma_2$};
 \node[scale=.52] at (92.5:1.20) {$\Gamma_3$};
 \node[scale=.52] at (134.5:1.20) {$\Gamma_4$};
\end{scope}

% Arrow between panels.
\draw[->,line width=.8pt] (-1.9,.15)--(-.45,.15);
\node[above,scale=.76] at (-1.18,.15) {$k=\frac12(z+z^{-1})$};

% k-plane panel, representative N=4.  Blue intervals are the N+1 model cuts.
\begin{scope}[xshift=4.15cm,yshift=.15cm]
 \draw[->,line width=.75pt] (-4.85,0)--(4.95,0) node[right] {$k$};
 \foreach \x in {-4.5,-3.5,-2.5,-1.5,-.5,.5,1.5,2.5,3.5,4.5}{\draw (\x,.09)--(\x,-.09);}
 \node[below,scale=.59] at (-4.5,-.09) {$-1$};
 \node[below,scale=.59] at (-3.5,-.09) {$e_8$};
 \node[below,scale=.59] at (-2.5,-.09) {$e_7$};
 \node[below,scale=.59] at (-1.5,-.09) {$e_6$};
 \node[below,scale=.59] at (-.5,-.09) {$e_5$};
 \node[below,scale=.59] at (.5,-.09) {$e_4$};
 \node[below,scale=.59] at (1.5,-.09) {$e_3$};
 \node[below,scale=.59] at (2.5,-.09) {$e_2$};
 \node[below,scale=.59] at (3.5,-.09) {$e_1$};
 \node[below,scale=.59] at (4.5,-.09) {$1$};
 % Correct alternating band-gap conversion, from left to right.
 \draw[kband] (-4.5,0)--(-3.5,0);  % Sigma_4
 \draw[kgap]  (-3.5,0)--(-2.5,0);  % image Gamma_4
 \draw[kband] (-2.5,0)--(-1.5,0);  % Sigma_3
 \draw[kgap]  (-1.5,0)--(-.5,0);   % image Gamma_3
 \draw[kband] (-.5,0)--(.5,0);     % Sigma_2
 \draw[kgap]  (.5,0)--(1.5,0);     % image Gamma_2
 \draw[kband] (1.5,0)--(2.5,0);    % Sigma_1
 \draw[kgap]  (2.5,0)--(3.5,0);    % image Gamma_1
 \draw[kband] (3.5,0)--(4.5,0);    % Sigma_0
 \node[above,scale=.56,blue!72!black] at (-4.0,.04) {$\Sigma_4$};
 \node[above,scale=.56,blue!72!black] at (-2.0,.04) {$\Sigma_3$};
 \node[above,scale=.56,blue!72!black] at (0,.04) {$\Sigma_2$};
 \node[above,scale=.56,blue!72!black] at (2.0,.04) {$\Sigma_1$};
 \node[above,scale=.56,blue!72!black] at (4.0,.04) {$\Sigma_0$};
 \node[scale=.62,blue!72!black] at (0,-.9) {blue: all $N+1$ model cuts};
 \node[scale=.62,red!70!black] at (0,-1.25) {red: Joukowski images of all $\Gamma_j$ (finite-gap gaps)};
\end{scope}
\end{tikzpicture}
\caption{Complete contour reduction.  In the $z$-plane the reduced model has the constant central jump on every $\Gamma_j$, including all $j>\ell$.  Under the Joukowski map, these $N$ arcs become finite-gap gaps (red), while the complementary arcs become the $N+1$ real cuts $\Sigma_0,\ldots,\Sigma_N$ (blue) of the fixed genus-$N$ model.}
\label{fig:full-joukowski}
\end{figure}

\begin{RHP}[Fixed full model in the $k$-plane]\label{rhp:kmodel}
Find $F_N(k)$ analytic in $\C\setminus\bigcup_{m=0}^{N}\Sigma_m$, normalized by $F_N(k)=I+O(k^{-1})$ at infinity, and satisfying
\begin{equation}\label{eq:F-jumps}
 F_{N,+}(k)=F_{N,-}(k)
 \begin{cases}
 \begin{pmatrix}0&-\ii\\-\ii&0\end{pmatrix},&k\in\Sigma_0,\\[4mm]
 \begin{pmatrix}0&-\ii\ee^{\ii\vartheta_{N,m}}\\
 -\ii\ee^{-\ii\vartheta_{N,m}}&0\end{pmatrix},&k\in\Sigma_m,
 \quad m=1,\ldots,N.
 \end{cases}
\end{equation}
It also satisfies $F_N(k)=\sigma_1\ol{F_N(\ol k)}\sigma_1$.
\end{RHP}

\subsection{Canonical \texorpdfstring{$k$}{k}-plane cycles and the Baker--Akhiezer construction}\label{subsec:BA-model}

The notation in this subsection is intrinsic to the quotient $k$-surface and is kept distinct from the cycles used in the $z$-plane $g$-function construction.  Put
\begin{equation}\label{eq:k-branch-order}
 e_0=1>e_1>\cdots>e_{2N}>e_{2N+1}=-1
\end{equation}
and compactify the affine curve
\begin{equation}\label{eq:XN-curve}
 \mathscr X_N:\qquad
 w^2=\prod_{\nu=0}^{2N+1}(k-e_\nu)=\mathscr R_N(k)^2
\end{equation}
by the two points $\infty_\pm$.  The first sheet $\mathscr X_N^+$ is fixed by
$w=\mathscr R_N(k)\sim k^{N+1}$ at $\infty_+$, and
$P^*=(k,-w)$ denotes the hyperelliptic involution.  We realize
$\mathscr X_N$ by gluing two copies of the $k$-plane along
\begin{equation}\label{eq:k-cuts-all}
 \Sigma_m=[e_{2m+1},e_{2m}],\qquad m=0,\ldots,N.
\end{equation}
Thus $\Sigma_0=[e_1,e_0]$ is the reference cut and the remaining
$N$ cuts are $\Sigma_1,\ldots,\Sigma_N$.

For $j=1,\ldots,N$, let $\widetilde a_j$ be the positive oval around
$\Sigma_j$.  Let $\widetilde b_j$ start on the upper bank of
$\Sigma_0$, run on the first sheet to the upper bank of $\Sigma_j$, cross
$\Sigma_j$ to the second sheet, return to $\Sigma_0$ on the second sheet,
and cross $\Sigma_0$ back to the first sheet.  The orientations are chosen so
that
\begin{equation}\label{eq:k-cycle-intersection}
 \widetilde a_j\circ\widetilde a_m=0,
 \qquad
 \widetilde b_j\circ\widetilde b_m=0,
 \qquad
 \widetilde a_j\circ\widetilde b_m=\delta_{jm}.
\end{equation}
Figure~\ref{fig:k-canonical-basis} gives the projection of this basis to the
$k$-plane.  Solid red pieces are traversed on the first sheet and dashed red
pieces on the second sheet.

\begin{figure}[htbp]
\centering
\begin{tikzpicture}[x=1.08cm,y=.98cm,>=Stealth,line cap=round,line join=round,
 cut/.style={blue!72!black,line width=2.8pt},
 acycle/.style={blue!72!black,line width=1.05pt},
 bcycle/.style={red!78!black,line width=1.05pt},
 bback/.style={red!78!black,dashed,line width=1.0pt},
 bridge/.style={red!78!black,line width=.82pt}]
 % A representative genus-N configuration, with only the extreme cycles shown.
 \draw[->,line width=.72pt] (-5.15,0)--(5.18,0) node[right] {$k$};
 \foreach \x/\lab in {-4.7/{e_{2N+1}=-1},-3.8/{e_{2N}},-2.8/{e_{2N-1}},-1.9/{e_{2N-2}},-0.9/{\cdots},0/{\cdots},1/{e_3},1.9/{e_2},2.9/{e_1},3.8/{e_0=1}}{
   \fill (\x,0) circle (1.08pt);
   \node[below=2.5pt,scale=.62] at (\x,0) {$\lab$};
 }
 \draw[cut] (-4.7,0)--(-3.8,0) node[midway,above=3pt,scale=.66] {$\Sigma_N$};
 \draw[cut] (-2.8,0)--(-1.9,0) node[midway,above=3pt,scale=.66] {$\Sigma_{N-1}$};
 \draw[cut] (-0.9,0)--(0,0) node[midway,above=3pt,scale=.66] {$\cdots$};
 \draw[cut] (1,0)--(1.9,0) node[midway,above=3pt,scale=.66] {$\Sigma_1$};
 \draw[cut] (2.9,0)--(3.8,0) node[midway,above=3pt,scale=.66] {$\Sigma_0$};

 % Positive a-ovals.  The omitted ovals are arranged in the same way.
 \draw[acycle,postaction={decorate},decoration={markings,mark=at position .18 with {\arrow{Stealth[length=1.9mm]}}}]
   (-4.25,0) ellipse (.66 and .48);
 \node[blue!72!black,scale=.72] at (-4.25,.70) {$\widetilde a_N$};
 \draw[acycle,postaction={decorate},decoration={markings,mark=at position .18 with {\arrow{Stealth[length=1.9mm]}}}]
   (1.45,0) ellipse (.66 and .48);
 \node[blue!72!black,scale=.72] at (1.45,.70) {$\widetilde a_1$};

 % Inner b_1 cycle: the upper part is on X_N^+, the lower return on X_N^-.
 \draw[bcycle,postaction={decorate},decoration={markings,mark=at position .58 with {\arrow{Stealth[length=1.9mm]}}}]
   (3.18,.11) .. controls (2.92,.72) and (2.06,.76) .. (1.45,.11);
 \draw[bridge] (1.45,.11)--(1.45,-.11);
 \draw[bback,postaction={decorate},decoration={markings,mark=at position .42 with {\arrow{Stealth[length=1.9mm]}}}]
   (1.45,-.11) .. controls (2.06,-.64) and (2.92,-.62) .. (3.18,-.11);
 \draw[bridge] (3.18,-.11)--(3.18,.11);
 \node[red!78!black,scale=.72] at (2.32,.84) {$\widetilde b_1$};

 % Outer b_N cycle.  Its projection is nested strictly outside b_1, so the b-cycles do not meet.
 \draw[bcycle,postaction={decorate},decoration={markings,mark=at position .60 with {\arrow{Stealth[length=1.9mm]}}}]
   (3.67,.14) .. controls (3.48,1.62) and (-2.55,1.76) .. (-4.25,.14);
 \draw[bridge] (-4.25,.14)--(-4.25,-.14);
 \draw[bback,postaction={decorate},decoration={markings,mark=at position .40 with {\arrow{Stealth[length=1.9mm]}}}]
   (-4.25,-.14) .. controls (-2.55,-1.48) and (3.48,-1.36) .. (3.67,-.14);
 \draw[bridge] (3.67,-.14)--(3.67,.14);
 \node[red!78!black,scale=.72] at (-.60,1.73) {$\widetilde b_N$};

 % Legend and sheet labels.
 \node[scale=.61,anchor=west] at (-4.9,-1.80)
 {\textcolor{red!78!black}{solid}: first sheet $\mathscr X_N^+$};
 \node[scale=.61,anchor=west] at (-1.35,-1.80)
 {\textcolor{red!78!black}{dashed}: second sheet $\mathscr X_N^-$};
 \node[scale=.58,blue!72!black] at (3.42,.92) {reference cut};
\end{tikzpicture}
\caption{A nonintersecting projection of the canonical homology basis
$\{\widetilde a_j,\widetilde b_j\}_{j=1}^N$ on the fixed hyperelliptic
$k$-surface.  Only the extreme pairs $(\widetilde a_1,\widetilde b_1)$ and
$(\widetilde a_N,\widetilde b_N)$ are displayed.  The $b$-cycles are drawn as
nested paths with distinct crossing points on the reference cut $\Sigma_0$;
hence their projections do not intersect one another.  Solid red portions
lie on $\mathscr X_N^+$ and dashed red portions lie on $\mathscr X_N^-$.}
\label{fig:k-canonical-basis}
\end{figure}

Let $\dd\widetilde\omega_m$, $m=1,\ldots,N$, be the normalized holomorphic
differentials
\begin{equation}\label{eq:k-holo-differentials}
 \dd\widetilde\omega_m(P)
 =\frac{\sum_{q=0}^{N-1}c_{mq}k^q}{w(P)}\,\dd k,
 \qquad
 \oint_{\widetilde a_j}\dd\widetilde\omega_m=\delta_{jm}.
\end{equation}
The $k$-plane period matrix and Abel map are denoted by new symbols,
\begin{equation}\label{eq:k-period-matrix}
 (\widetilde\tau_N)_{jm}
 :=\oint_{\widetilde b_j}\dd\widetilde\omega_m,
 \qquad
 \widetilde{\mathbf J}_N(P)
 :=\int_{P_0}^{P}
 (\dd\widetilde\omega_1,\ldots,\dd\widetilde\omega_N)^T,
 \qquad P_0=(e_0,0).
\end{equation}
The path in \eqref{eq:k-period-matrix} is taken in the canonical polygon
obtained by cutting $\mathscr X_N$ along the cycles in
Figure~\ref{fig:k-canonical-basis}.  Then $\widetilde\tau_N$ is symmetric and
$\Ima\widetilde\tau_N>0$.  For this real $M$-curve and the displayed real canonical basis one may choose the normalized differentials so that $\widetilde\tau_N=\ii Y_N$ with $Y_N$ real symmetric positive definite.  With the above path convention, the boundary values
of the Abel map on the first sheet obey
\begin{align}
 \widetilde{\mathbf J}_{N,+}(k)+
 \widetilde{\mathbf J}_{N,-}(k)&=0,
 &&k\in\Sigma_0,\label{eq:Abel-jump-Sigma0}\\
 \widetilde{\mathbf J}_{N,+}(k)+
 \widetilde{\mathbf J}_{N,-}(k)&=-\widetilde{\boldsymbol\tau}_{N,j},
 &&k\in\Sigma_j,\quad j=1,\ldots,N,\label{eq:Abel-jump-Sigmaj}
\end{align}
modulo $\mathbb Z^N$, where $\widetilde{\boldsymbol\tau}_{N,j}$ denotes the
$j$th column of $\widetilde\tau_N$.  To agree with the notation used in the
statement of the main theorem, from now on we set
\begin{equation}\label{eq:identify-k-notation}
 \tauN:=\widetilde\tau_N,
 \qquad
 \JN(P):=\widetilde{\mathbf J}_N(P),
 \qquad
 \mathbf J_\infty:=\JN(\infty_+).
\end{equation}

\paragraph{The Baker--Akhiezer divisor.}
Define the two monic polynomials
\begin{equation}\label{eq:Pi-even-odd}
 \Pi_{\rm e}(k):=\prod_{m=0}^{N}(k-e_{2m}),
 \qquad
 \Pi_{\rm o}(k):=\prod_{m=0}^{N}(k-e_{2m+1}),
\end{equation}
and the fourth-root function
\begin{equation}\label{eq:gammaN-BA}
 \gamma_N(k):=
 \left(\frac{\Pi_{\rm e}(k)}{\Pi_{\rm o}(k)}\right)^{1/4},
 \qquad
 \gamma_N(k)=1+O(k^{-1}),\quad k\to\infty.
\end{equation}
Its branch cuts are precisely $\Sigma_0,\ldots,\Sigma_N$, and, with every cut
oriented from left to right,
\begin{equation}\label{eq:gamma-boundary}
 \gamma_{N,+}(k)=\ii\gamma_{N,-}(k),
 \qquad k\in\bigcup_{m=0}^{N}\Sigma_m.
\end{equation}
For $j=1,\ldots,N$, put
\begin{equation}\label{eq:k-spectral-gaps}
 \mathcal G_j=(e_{2j},e_{2j-1}).
\end{equation}
The symbols used in the velocity proof for the zeros of $\mathsf P_N$ are
\emph{not} used here.  Instead, let $\pi_j^{\rm BA}$ be the unique zero in
$\mathcal G_j$ of
\begin{equation}\label{eq:BA-root-equation}
 \Pi_{\rm e}(k)-\Pi_{\rm o}(k)=0,
\end{equation}
and let
\begin{equation}\label{eq:BA-points}
 P_j^{\rm BA}:=
 \bigl(\pi_j^{\rm BA},\mathscr R_N(\pi_j^{\rm BA})\bigr)
 \in\mathscr X_N^+.
\end{equation}
Thus $P_j^{\rm BA}$, rather than the zeros of the second-kind polynomial
$\mathsf P_N$, are the divisor points entering the Baker--Akhiezer matrix.

\begin{lemma}[Location and meaning of the Baker--Akhiezer points]
\label{lem:BA-points}
Equation \eqref{eq:BA-root-equation} has exactly one simple root
$\pi_j^{\rm BA}$ in each gap $\mathcal G_j$.  On the first sheet these points
are precisely the finite zeros of $\gamma_N-\gamma_N^{-1}$.  More
intrinsically, the meromorphic function
\begin{equation}\label{eq:chiN-meromorphic}
 \chi_N(P):=
 \frac{\gamma_N(P)-\gamma_N(P)^{-1}}
      {\gamma_N(P)+\gamma_N(P)^{-1}}
 =\frac{w(P)-\Pi_{\rm o}(k)}{w(P)+\Pi_{\rm o}(k)}
\end{equation}
has divisor
\begin{equation}\label{eq:chiN-divisor}
 (\chi_N)=
 \infty_+ +\sum_{j=1}^{N}P_j^{\rm BA}
 -\infty_- -\sum_{j=1}^{N}(P_j^{\rm BA})^*.
\end{equation}
\end{lemma}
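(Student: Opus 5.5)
The proof splits into three parts: a sign count for the root location, an algebraic identity for the zeros of $\gamma_N-\gamma_N^{-1}$, and a divisor computation for $\chi_N$ read off from its behavior at infinity and on the finite part of the curve.

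\textbf{Step 1: one simple root in each gap.} Both $\Pi_{\rm e}$ and $\Pi_{\rm o}$ are monic of degree $N+1$, so $\Pi_{\rm e}-\Pi_{\rm o}$ has degree at most $N$. It suffices to find a sign change on each $\mathcal G_j$. At the endpoint $e_{2j}$ we have $\Pi_{\rm e}=0$, so $\Pi_{\rm e}-\Pi_{\rm o}=-\Pi_{\rm o}(e_{2j})$. At $e_{2j-1}$ we have $\Pi_{\rm o}=0$, so $\Pi_{\rm e}-\Pi_{\rm o}=\Pi_{\rm e}(e_{2j-1})$. Counting the zeros to the right gives $\operatorname{sgn}\Pi_{\rm o}(e_{2j})=(-1)^j$ (zeros $e_1,\dots,e_{2j-1}$ lie to the right) and $\operatorname{sgn}\Pi_{\rm e}(e_{2j-1})=(-1)^j$ (zeros $e_0,\dots,e_{2j-2}$). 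So the endpoint values are $-(-1)^j$ and $(-1)^j$, which have opposite signs. The intermediate value theorem then gives a root in each of the $N$ disjoint open gaps. Since the degree is at most $N$, these are all the roots, each is simple, and the degree is exactly $N$.

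\textbf{Step 2: the roots are the zeros of $\gamma_N-\gamma_N^{-1}$.} Write $\gamma_N-\gamma_N^{-1}=\gamma_N^{-1}(\gamma_N^2-1)$. Then $\gamma_N^2=\pm1$ is equivalent to $\gamma_N^4=\Pi_{\rm e}/\Pi_{\rm o}=1$, i.e. to $\Pi_{\rm e}=\Pi_{\rm o}$. On $\mathcal G_j$ (outside the cuts) the function $\gamma_N$ is real and positive. This follows by continuation from $k>1$, where it is positive: crossing a cut multiplies $\gamma_N$ by $\ii$ through \eqref{eq:gamma-boundary}, and the two crossings of $\Sigma_{m-1}$'s endpoints change $\Pi_{\rm e}/\Pi_{\rm o}$ by a positive factor. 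Hence $\gamma_N^2=1$ exactly at $\pi_j^{\rm BA}$. Off the real axis, $\gamma_N^2=1$ would force $\Pi_{\rm e}=\Pi_{\rm o}$ at a nonreal $k$, which Step 1 rules out. The one point needing care is a root on a cut, where $\gamma_N^2=-1$ may occur; this is also excluded by Step 1.

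\textbf{Step 3: the divisor of $\chi_N$.} I would first verify the algebraic identity. Multiplying numerator and denominator by $\gamma_N\sqrt{\Pi_{\rm o}}\,\Pi_{\rm o}^{1/2}$ gives
\[
\chi_N=\frac{\gamma_N^2-1}{\gamma_N^2+1}=\frac{\sqrt{\Pi_{\rm e}}-\sqrt{\Pi_{\rm o}}}{\sqrt{\Pi_{\rm e}}+\sqrt{\Pi_{\rm o}}}.
\]
Multiplying top and bottom by $\sqrt{\Pi_{\rm o}}$ and using $w=\sqrt{\Pi_{\rm e}\Pi_{\rm o}}$ with the branch $w\sim k^{N+1}$ yields $(w-\Pi_{\rm o})/(w+\Pi_{\rm o})$. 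This expression is visibly a rational function of $(k,w)$, hence meromorphic on $\mathscr X_N$. Under the involution $w\mapsto-w$ it becomes its reciprocal, so the zeros of $\chi_N$ and the poles of $\chi_N^*$ correspond.

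The divisor is then computed as follows.
\begin{itemize}
\item \emph{Behavior at infinity.} The difference $w-\Pi_{\rm o}=(\Pi_{\rm e}\Pi_{\rm o}-\Pi_{\rm o}^2)/(w+\Pi_{\rm o})$ shows that $\chi_N=\Pi_{\rm o}(\Pi_{\rm e}-\Pi_{\rm o})/(w+\Pi_{\rm o})^2$. At $\infty_+$, where $w+\Pi_{\rm o}\sim2k^{N+1}$, this is $\sim k^{N+1}\cdot ck^N/(4k^{2N+2})=O(k^{-1})$, a simple zero. By the reciprocal relation there is a simple pole at $\infty_-$.
\item \emph{Finite zeros.} The finite zeros of $w-\Pi_{\rm o}$ satisfy $w^2=\Pi_{\rm o}^2$, i.e. $\Pi_{\rm o}(\Pi_{\rm e}-\Pi_{\rm o})=0$. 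The branch points $e_{2m+1}$ cancel between numerator and denominator, where $w$ and $\Pi_{\rm o}$ both vanish. Checking the local parameter $\sqrt{k-e}$, numerator and denominator both vanish to order one there, so these points contribute neither zeros nor poles. This is the step I expect to be the main obstacle and would verify carefully.
\item \emph{The points $\pi_j^{\rm BA}$.} At $k=\pi_j^{\rm BA}$ we have $w=\pm\Pi_{\rm o}$, so exactly one of the two sheets carries the zero. Sign bookkeeping shows $w=\mathscr R_N=\Pi_{\rm o}\gamma_N^2$ on the first sheet, and there $\gamma_N^2=1$. So the zero is at $P_j^{\rm BA}$ and the pole at $(P_j^{\rm BA})^*$.
\end{itemize}

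As a consistency check, the degree count gives $N+1$ zeros and $N+1$ poles, matching \eqref{eq:chiN-divisor}.
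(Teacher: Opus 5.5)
Your proof is correct and follows the paper's route: a sign change together with $\deg(\Pi_{\rm e}-\Pi_{\rm o})\le N$ locates the roots, the identity $\gamma_N^2=w/\Pi_{\rm o}$ on $\mathscr X_N^+$ identifies them with the zeros of $\gamma_N-\gamma_N^{-1}$, and the divisor comes from the expansions at $\infty_\pm$ and the relation $\chi_N(P^*)=1/\chi_N(P)$; your explicit checks at the branch points $e_{2m+1}$ and at $\pi_j^{\rm BA}$ (via $\chi_N=\Pi_{\rm o}(\Pi_{\rm e}-\Pi_{\rm o})/(w+\Pi_{\rm o})^2$) supply details the paper leaves implicit. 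One correction in Step 2: positivity of $\gamma_N$ on the gaps does not come from ``crossing cuts'', since crossing a cut would change the branch; it follows because each factor $\bigl((k-e_{2m})/(k-e_{2m+1})\bigr)^{1/4}$, normalized to $1$ at infinity, is positive on the real axis away from $\Sigma_m$.
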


\begin{proof}
On $\mathcal G_j$ the quotient $\Pi_{\rm e}/\Pi_{\rm o}$ is positive and
continuous, tends to $0$ at $e_{2j}$, and tends to $+\infty$ at
$e_{2j-1}$.  Hence \eqref{eq:BA-root-equation} has at least one root in every
$\mathcal G_j$.  Since the leading terms of the two monic degree-$N+1$
polynomials cancel, $\deg(\Pi_{\rm e}-\Pi_{\rm o})\le N$.  The $N$ disjoint
gaps therefore contain all its zeros, one in each gap, and every zero is
simple.  On $\mathscr X_N^+$ one has
$\gamma_N^2=w/\Pi_{\rm o}$; this gives \eqref{eq:chiN-meromorphic} and shows
that the finite zeros are exactly the points \eqref{eq:BA-points}.  The
expansions at $\infty_\pm$ and the hyperelliptic involution then give
\eqref{eq:chiN-divisor}.
\end{proof}

Let $\widetilde{\mathbf K}_N$ be the vector of Riemann constants associated
with the base point $P_0$ and the basis
$\{\widetilde a_j,\widetilde b_j\}$.  Define
\begin{equation}\label{eq:dN-rigorous}
 \mathbf d_N
 :=\sum_{j=1}^{N}\JN(P_j^{\rm BA})-\widetilde{\mathbf K}_N.
\end{equation}
We use the sign convention for the vector of Riemann constants for which Riemann's vanishing theorem identifies the zero divisor of
$\Theta(\JN(P)-\mathbf d_N)$.  Abel's theorem applied to \eqref{eq:chiN-divisor}, together with Riemann's vanishing theorem, yields the identity
\begin{equation}\label{eq:dN-Abel-identity}
 \mathbf d_N\equiv-\JN(\infty_+)
 \pmod{\mathbb Z^N+\tauN\mathbb Z^N}.
\end{equation}
In particular, the theta denominators below have the Baker--Akhiezer divisor
$P_1^{\rm BA}+\cdots+P_N^{\rm BA}$, and the zeros of
$\gamma_N-\gamma_N^{-1}$ cancel their apparent poles.  We choose representatives in the Jacobian so that \eqref{eq:dN-Abel-identity} is exact; period-lattice multipliers are then absorbed into the diagonal normalization.  Since $\tauN=\ii Y_N$ with $Y_N>0$, Poisson summation gives
\begin{equation}\label{eq:theta-positive-real}
 \ThetaN(\mathbf x;\ii Y_N)
 =\frac{1}{\sqrt{\det Y_N}}
 \sum_{\mathbf m\in\mathbb Z^N}
 \exp\!\left[-\pi(\mathbf m+\mathbf x)^TY_N^{-1}(\mathbf m+\mathbf x)\right]>0,
 \qquad \mathbf x\in\mathbb R^N.
\end{equation}
Thus the normalizing theta factors below do not meet the theta divisor for the real phase vectors arising in the theorem.  Standard background on these divisor identities may be found in \cite{FarkasKra1992,Fay1973}.

\paragraph{Explicit Baker--Akhiezer matrix.}
Define
\begin{equation}\label{eq:thetaN-rigorous}
 \ThetaN(\mathbf z;\tauN)=
 \sum_{\mathbf n\in\mathbb Z^N}
 \exp\!\left(
 \pi\ii\mathbf n^T\tauN\mathbf n+2\pi\ii\mathbf n^T\mathbf z
 \right),
 \qquad \Theta(\mathbf z):=\ThetaN(\mathbf z;\tauN),
\end{equation}
and set
\begin{equation}\label{eq:ZN-rigorous}
 \mathbf Z_N
 :=\frac{1}{2\pi}
 \bigl(\vartheta_{N,1},\ldots,\vartheta_{N,N}\bigr)^T
 =\frac{t\Omega_N(\xi)+\Delta_N(\xi)}{2\pi}.
\end{equation}
Introduce the diagonal normalization
\begin{equation}\label{eq:normalizerN-rigorous}
 \mathcal N_N(\mathbf Z_N)=
 \begin{pmatrix}
 \dfrac{\Theta(\mathbf J_\infty+\mathbf d_N)}
 {\Theta(-\mathbf Z_N+\mathbf J_\infty+\mathbf d_N)}&0\\[4mm]
 0&\dfrac{\Theta(-\mathbf J_\infty-\mathbf d_N)}
 {\Theta(-\mathbf Z_N-\mathbf J_\infty-\mathbf d_N)}
 \end{pmatrix}.
\end{equation}
For $P=k^+\in\mathscr X_N^+$, write $\JN(k):=\JN(P)$ and define
\begin{align}
 \mathcal B_{11}(k)&=
 \frac{\gamma_N(k)+\gamma_N(k)^{-1}}2
 \frac{\Theta(-\mathbf Z_N+\JN(k)+\mathbf d_N)}
      {\Theta(\JN(k)+\mathbf d_N)},\label{eq:B11N-rigorous}\\
 \mathcal B_{12}(k)&=-
 \frac{\gamma_N(k)-\gamma_N(k)^{-1}}2
 \frac{\Theta(-\mathbf Z_N-\JN(k)+\mathbf d_N)}
      {\Theta(-\JN(k)+\mathbf d_N)},\label{eq:B12N-rigorous}\\
 \mathcal B_{21}(k)&=-
 \frac{\gamma_N(k)-\gamma_N(k)^{-1}}2
 \frac{\Theta(-\mathbf Z_N+\JN(k)-\mathbf d_N)}
      {\Theta(\JN(k)-\mathbf d_N)},\label{eq:B21N-rigorous}\\
 \mathcal B_{22}(k)&=
 \frac{\gamma_N(k)+\gamma_N(k)^{-1}}2
 \frac{\Theta(-\mathbf Z_N-\JN(k)-\mathbf d_N)}
      {\Theta(-\JN(k)-\mathbf d_N)}.\label{eq:B22N-rigorous}
\end{align}
The model solution is
\begin{equation}\label{eq:FN-solution-rigorous}
 F_N(k)=\mathcal N_N(\mathbf Z_N)
 \begin{pmatrix}
  \mathcal B_{11}(k)&\mathcal B_{12}(k)\\
  \mathcal B_{21}(k)&\mathcal B_{22}(k)
 \end{pmatrix}.
\end{equation}

\begin{proposition}[Solution and expansion of the fixed model]
\label{prop:model}
The matrix \eqref{eq:FN-solution-rigorous} is the unique solution of
RHP~\ref{rhp:kmodel}.  Moreover, the inverse Joukowski transformation is the piecewise formula
\begin{equation}\label{eq:YN-from-FN}
 Y_N(z)=
 \begin{cases}
 \sqrt{1-z^{-2}}\,
 F_N\!\left(\dfrac{z+z^{-1}}2\right),&|z|>1,\\[3mm]
 z^{-1}\sqrt{1-z^2}\,
 F_N\!\left(\dfrac{z+z^{-1}}2\right)\sigma_1,&|z|<1,
 \end{cases}
\end{equation}
with compatible branches on the two domains.  It solves RHP~\ref{rhp:zmodel}.  As $z\to\infty$,
\begin{equation}\label{eq:YN-expansion}
 Y_N(z)=I+\frac{Y_{N,1}}{z}+O(z^{-2}),
\end{equation}
and
\begin{equation}\label{eq:YN21}
 (Y_{N,1})_{21}=\AN\TN(\mathbf Z_N),
\end{equation}
where $\AN$ and $\TN$ are given by \eqref{eq:AN-intro} and
\eqref{eq:TN-intro}.
\end{proposition}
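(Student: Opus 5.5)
The plan has four steps, carried out in order.

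\textbf{Step 1: jumps of the Baker--Akhiezer matrix.} First I verify directly that \eqref{eq:FN-solution-rigorous} satisfies the jumps \eqref{eq:F-jumps}. On every cut $\Sigma_m$, \eqref{eq:gamma-boundary} gives $\gamma_{N,+}=\ii\gamma_{N,-}$. Hence
\[
(\gamma+\gamma^{-1})_+=\ii(\gamma-\gamma^{-1})_-,\qquad (\gamma-\gamma^{-1})_+=\ii(\gamma+\gamma^{-1})_-,
\]
so the scalar prefactors of the two columns are exchanged up to the factor $\ii$. For the theta quotients I use \eqref{eq:Abel-jump-Sigma0}--\eqref{eq:Abel-jump-Sigmaj}. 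On $\Sigma_0$ one has $\JN_{,+}=-\JN_{,-}$, which exchanges $\Theta(-\mathbf Z_N+\JN+\mathbf d_N)$ with $\Theta(-\mathbf Z_N-\JN+\mathbf d_N)$. This reproduces the constant jump $-\ii\sigma_1$ once the sign in $\mathcal B_{12},\mathcal B_{21}$ is accounted for. On $\Sigma_j$, the shift by $-\widetilde{\boldsymbol\tau}_{N,j}$ combined with the quasi-periodicity
\[
\Theta(\mathbf z+\widetilde{\boldsymbol\tau}_{N,j})=\ee^{-\pi\ii\tau_{jj}-2\pi\ii z_j}\Theta(\mathbf z)
\]
produces, in the ratio of numerator and denominator, exactly the factor $\ee^{\pm2\pi\ii Z_{N,j}}=\ee^{\pm\ii\vartheta_{N,j}}$. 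The $\mathbf d_N$ and $\tau_{jj}$ contributions cancel between the numerator and the denominator. The main bookkeeping here is the sign and orientation conventions.

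\textbf{Step 2: analyticity, normalization, symmetry and uniqueness.} The zeros of the denominators $\Theta(\pm\JN(k)\pm\mathbf d_N)$ are, by Riemann's vanishing theorem and \eqref{eq:dN-rigorous}, exactly at the Baker--Akhiezer points or their involutes. Lemma~\ref{lem:BA-points} shows that on the first sheet the off-diagonal prefactor $\gamma_N-\gamma_N^{-1}$ vanishes exactly at $P_j^{\rm BA}$. For the diagonal entries, the denominators with $+\JN+\mathbf d_N$ vanish at the involuted points; these lie on the second sheet, so they are not seen, using \eqref{eq:dN-Abel-identity}. Endpoint behaviour is at most fourth-root. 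The normalization $F_N\to I$ follows from $\gamma_N\to1$, $\JN(k)\to\mathbf J_\infty$, and the definition of $\mathcal N_N$. Here \eqref{eq:theta-positive-real} guarantees that the normalizing thetas are nonzero, because $\mathbf J_\infty+\mathbf d_N\equiv0$ and $\mathbf Z_N$ is real. The Schwarz symmetry follows from reality of the curve, the basis, and $\mathbf Z_N$. Uniqueness is the standard argument: $\det F_N\equiv1$ by Liouville (jump determinants equal one, endpoint singularities removable), and then $F_N\widetilde F_N^{-1}$ is entire and tends to $I$.

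\textbf{Step 3: lifting to the $z$-plane.} I define $Y_N$ by \eqref{eq:YN-from-FN}. For $|z|>1$ this inverts \eqref{eq:F-def}. Both the jump on the upper arcs, which are the images of gaps where $F_N$ has no jump but the two regions $|z|\gtrless1$ meet, and the symmetry $Y_N(z)=z^{-1}Y_N(z^{-1})\sigma_1$ come from the choice of the interior formula. The jump $-\ii\sigma_1$ on $\Gamma_j$ arises from comparing the two formulas at a point of the circle. There $\sqrt{1-z^{-2}}$ and $z^{-1}\sqrt{1-z^2}$ differ by a factor $\pm\ii$ after branch matching, which yields $\pm\ii\sigma_1$ on upper and lower arcs. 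The gap jumps in the $z$-plane come from the cuts $\Sigma_m$ through $\sigma_1$-conjugation. The behaviour $Y_N\sim\sigma_1/z$ at $0$ is immediate from the factor $z^{-1}$ and $F_N(\infty)=I$. Fixing branches consistently is the most delicate step of the lift.

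\textbf{Step 4: the residue $(Y_{N,1})_{21}$.} Since $k\sim z/2$ and $\sqrt{1-z^{-2}}=1+O(z^{-2})$, we get
\[
(Y_{N,1})_{21}=\tfrac12\lim_{k\to\infty}k\,(F_N)_{21}(k).
\]
As $k\to\infty$, $\gamma_N-\gamma_N^{-1}\sim\frac{1}{2k}\sum_m(e_{2m+1}-e_{2m})\cdot(-1)$, giving $\tfrac12(\gamma-\gamma^{-1})\sim -\AN/k$ up to a factor that I will check. Evaluating the theta ratio at $\mathbf J_\infty$ and substituting $\mathbf d_N=-\mathbf J_\infty$ turns $\mathcal N_{22}\mathcal B_{21}$ into
\[
\frac{\Theta(\mathbf J_\infty+\mathbf d_N\ \text{shifted})\,\Theta(0)}{\Theta(2\mathbf J_\infty)\,\Theta(\mathbf Z_N)},
\]
using evenness of $\Theta$. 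This is $\TN(\mathbf Z_N)$, which gives \eqref{eq:YN21}. I expect the main obstacle to be reconciling the constant factors (the $1/2$ from $z\sim2k$ against the factor in $\gamma-\gamma^{-1}$) so that exactly $\AN$ emerges.
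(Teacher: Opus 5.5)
Your route matches the paper's proof step for step. The jumps come from $\gamma_{N,+}=\ii\gamma_{N,-}$ together with \eqref{eq:Abel-jump-Sigma0}--\eqref{eq:Abel-jump-Sigmaj} and theta quasi-periodicity. The apparent poles are removed via Lemma~\ref{lem:BA-points} and Riemann's vanishing theorem. Then come normalization by $\mathcal N_N$, Liouville for uniqueness, the piecewise lift \eqref{eq:YN-from-FN}, and the expansion at $\infty_+$ with $\mathbf d_N=-\mathbf J_\infty$.

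Steps 1--3 are sound. With principal branches, $z^{-1}\sqrt{1-z^2}/\sqrt{1-z^{-2}}=-\ii$ on the upper unit circle and $+\ii$ on the lower one, which produces the band jumps. You should also note that $\Sigma_0$ is a cut whose preimage, the arc from $1$ to $\eta_1$ and its reflection, carries no jump in RHP~\ref{rhp:zmodel}. The same factor $\mp\ii$ is what makes the $-\ii\sigma_1$ jump of $F_N$ cancel there.

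The real problem is Step 4, which is exactly where \eqref{eq:YN21} is decided, and your constants there are wrong. Since $z=2k+O(k^{-1})$ and $\sqrt{1-z^{-2}}=1+O(z^{-2})$,
\[
(Y_{N,1})_{21}=\lim_{z\to\infty}z\,(F_N)_{21}(k(z))=2\lim_{k\to\infty}k\,(F_N)_{21}(k),
\]
not $\tfrac12\lim_{k\to\infty}k\,(F_N)_{21}(k)$.

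Next, $\Pi_{\rm e}/\Pi_{\rm o}=1-2\AN k^{-1}+O(k^{-2})$, so $\gamma_N=1-\AN/(2k)+O(k^{-2})$ and $\tfrac12(\gamma_N-\gamma_N^{-1})=-\AN/(2k)+O(k^{-2})$. Your first expansion of $\gamma_N-\gamma_N^{-1}$ has the wrong sign, and your second line is off by a factor $2$.

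Now combine the minus sign in \eqref{eq:B21N-rigorous}, $\JN(k)=\mathbf J_\infty+O(k^{-1})$, $-\mathbf d_N=\mathbf J_\infty$, $(\mathcal N_N)_{22}=\Theta(0)/\Theta(-\mathbf Z_N)$, and evenness of $\Theta$. This gives
\[
\lim_{k\to\infty}k\,(F_N)_{21}(k)=\frac{\AN}{2}\,\frac{\Theta(-\mathbf Z_N+2\mathbf J_\infty)\,\Theta(0)}{\Theta(2\mathbf J_\infty)\,\Theta(\mathbf Z_N)}=\frac{\AN}{2}\,\TN(\mathbf Z_N).
\]
The factor $2$ from $z\sim 2k$ then yields exactly $\AN\TN(\mathbf Z_N)$. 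With your conversion factor $\tfrac12$ you would have obtained $\tfrac14\AN\TN(\mathbf Z_N)$. The ``reconciliation'' you anticipated is precisely these two factors of $2$ cancelling.

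Also, the numerator you describe as a shifted $\Theta(\mathbf J_\infty+\mathbf d_N)$ is $\Theta(-\mathbf Z_N+2\mathbf J_\infty)$. It is the value of $\Theta(-\mathbf Z_N+\JN(k)-\mathbf d_N)$ at $\infty_+$.
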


\begin{proof}
Lemma~\ref{lem:BA-points} and Riemann's vanishing theorem show that all
apparent poles of the four theta quotients in
\eqref{eq:B11N-rigorous}--\eqref{eq:B22N-rigorous} are removable.  We next
check the jumps.  From \eqref{eq:Abel-jump-Sigmaj} and theta
quasi-periodicity, for $k\in\Sigma_j$, $j\ge1$,
\begin{equation}\label{eq:theta-jump-check}
 \frac{\Theta(-\mathbf Z_N+J_{N,+}(k)+\mathbf d_N)}
      {\Theta(J_{N,+}(k)+\mathbf d_N)}
 =\ee^{-2\pi\ii(\mathbf Z_N)_j}
 \frac{\Theta(-\mathbf Z_N-J_{N,-}(k)+\mathbf d_N)}
      {\Theta(-J_{N,-}(k)+\mathbf d_N)}.
\end{equation}
The same identity without the exponential factor holds on $\Sigma_0$ by
\eqref{eq:Abel-jump-Sigma0}.  Combining \eqref{eq:theta-jump-check} with
$\gamma_{N,+}=\ii\gamma_{N,-}$ gives
\begin{equation*}
 F_{N,+}=F_{N,-}
 \begin{pmatrix}
 0&-\ii\ee^{\ii\vartheta_{N,j}}\\
 -\ii\ee^{-\ii\vartheta_{N,j}}&0
 \end{pmatrix}
 \quad(k\in\Sigma_j),
\end{equation*}
and gives the jump $-\ii\sigma_1$ on $\Sigma_0$.  The factor
\eqref{eq:normalizerN-rigorous} yields $F_N(k)=I+O(k^{-1})$ at infinity.
The Schwarz symmetry follows from the real branch points and the real
canonical basis.  These properties verify RHP~\ref{rhp:kmodel}; uniqueness
follows by the usual determinant and Liouville argument.

For $|z|>1$, formula \eqref{eq:YN-from-FN} is the direct inverse of \eqref{eq:F-def}.  For $|z|<1$ it is obtained from the reciprocal symmetry
$Y_N(z)=z^{-1}Y_N(z^{-1})\sigma_1$.  In particular, as $z\to0$, $k(z)\to\infty$ and the second line gives $Y_N(z)=z^{-1}\sigma_1+O(1)$; a single global square-root formula would not have the correct zero singularity.  Thus \eqref{eq:YN-from-FN} reverses the Joukowski folding and restores the prescribed $z=0$ behavior.  Expanding the Baker--Akhiezer matrix at
$\infty_+$ and using \eqref{eq:dN-Abel-identity} gives
\eqref{eq:YN-expansion}--\eqref{eq:YN21}.  The coefficient of the algebraic
factor is the trace constant
$\frac12\sum_{m=0}^{N}(e_{2m}-e_{2m+1})=\AN$, while the remaining quotient is
exactly $\TN(\mathbf Z_N)$.
\end{proof}

\section{Long-time asymptotic theorem}\label{sec:theorem}

A compact subset of a sector is called admissible if it has positive distance from all critical velocities $v_m$.  All estimates below are uniform for $\xi=x/t$ in admissible compact subsets.

\begin{theorem}[Full arbitrary-genus dNLS soliton gas]\label{thm:main}
Let Assumptions~\ref{ass:data} and~\ref{ass:steepest} hold, let the endpoints satisfy \eqref{eq:endpoints-intro}, and let $q(x,t)$ be the uniquely defined solution reconstructed from RHP~\ref{rhp:full} by Proposition~\ref{prop:solvability}.  Then, as $t\to+\infty$ with $\xi=x/t$ fixed away from the critical velocities,
\begin{equation}\label{eq:main-theorem}
 q(x,t)=\AN\ee^{-2\ii t g_{N,\infty}(\xi)}
 \delta_{N,\infty}(\xi)^{-2}
 \TN\!\left(\frac{t\Omega_N(\xi)+\Delta_N(\xi)}{2\pi}\right)
 +\mathcal E_N(x,t).
\end{equation}
More precisely:
\begin{enumerate}[label=(\roman*)]
\item If $\xi\in\mathcal P_\ell$, $\ell=0,\ldots,N$, the Szeg\H{o} datum is
\begin{equation}\label{eq:ell-pure-theorem}
 \begin{aligned}
 \ell_\xi&=-\log c_r&&\text{on }\Gamma_j,\quad j\le\ell,\\
 \ell_\xi&= \log c_\rho&&\text{on }\Gamma_j,\quad j>\ell,
 \end{aligned}
\end{equation}
with the empty groups omitted, and
\begin{equation}\label{eq:error-pure-theorem}
 \mathcal E_N(x,t)=O(t^{-1}).
\end{equation}
\item If $\xi\in\mathcal M_\ell$, $\ell=1,\ldots,N$, let $\alpha_\ell(\xi)$ be determined by \eqref{eq:alpha-intro}.  The Szeg\H{o} datum is
\begin{equation}\label{eq:ell-mixed-theorem}
 \ell_\xi(z)=
 \begin{cases}
 -\log c_r(z),&z\in\Gamma_j,\ j<\ell,\\
 -\log c_r(z),&z\in(\eta_{2\ell-1},\alpha_\ell),\\
 \log c_\rho(z),&z\in(\alpha_\ell,\eta_{2\ell}),\\
 \log c_\rho(z),&z\in\Gamma_j,\ j>\ell,
 \end{cases}
\end{equation}
and
\begin{equation}\label{eq:error-mixed-theorem}
 \mathcal E_N(x,t)=O(t^{-1/2}).
\end{equation}
\end{enumerate}
In every sector the leading term is a fixed genus-$N$ theta-functional solution on \eqref{eq:curve-intro}.  There is no genus cascade and no planar leading sector.
\end{theorem}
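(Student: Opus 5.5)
The plan is to chain the transformations already constructed and close with a small-norm argument. Starting from the unique solution $M$ of RHP~\ref{rhp:full} (Proposition~\ref{prop:solvability}), we apply the conjugation \eqref{eq:S-transform} with the fixed-curve $g$-function and the region-dependent Szeg\H{o} factor of Proposition~\ref{prop:Szego}, using the datum $\ell_\xi$ in \eqref{eq:ell-pure-theorem} or \eqref{eq:ell-mixed-theorem} according to the sector. We then open lenses as in \eqref{eq:T-transform}. Proposition~\ref{prop:full-reduction} gives the jumps \eqref{eq:T-jumps}: the constant central jumps $\mp\ii\sigma_1$ on all $\Gamma_j,\ol\Gamma_j$, the phase jumps $\ee^{\ii\vartheta_{N,m}\sigma_3}$ on the gaps, and $O(\ee^{-ct})$ lens-lip jumps outside small discs around the endpoints $\eta_m^{\pm1}$ and, in $\mathcal M_\ell$, around $\alpha_\ell^{\pm1}$. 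The global parametrix is $Y_N$ from Proposition~\ref{prop:model}. It is bounded, together with its inverse, uniformly in $t$, because by \eqref{eq:theta-positive-real} the theta denominators stay away from zero for real $\mathbf Z_N$. This uniform bound is what makes quasi-periodic dependence on $t$ harmless.

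Next we build local parametrices. At each endpoint $\eta_m$ the phase $\phi_N$ vanishes like $(z-\eta_m)^{3/2}$ on the fixed curve, since $\PN/\RN$ has a square-root singularity and $\xi\neq v_m$. The assumption that $a_j,b_j$ vanish at endpoints keeps $r$ and $\rho$ regular there. So after conjugating by $\delta_N^{\sigma_3}$ and $\ee^{\ii t\phi_N\sigma_3}$, the standard Airy parametrix matches $Y_N$ with error $I+O(t^{-1})$ on the disc boundaries. Alternatively, since the central jump is constant, one can use the Bessel/constant-jump endpoint model with the same $O(t^{-1})$ matching. In the mixed sector we place a parabolic-cylinder parametrix at $\alpha_\ell$ and reflect it to $\alpha_\ell^{-1}$. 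Theorem~\ref{thm:velocity-ordering}(iii) gives $\phi_N(z)-\phi_N(\alpha_\ell)\sim c(z-\alpha_\ell)^2$ with $c\neq0$. The local power behavior \eqref{eq:delta-local-switch} with real $\nu_\ell$ supplies the PC exponent; reality of $\nu_\ell$ comes from $0<c_rc_\rho<1$. The crossed lens geometry is exactly the four-ray configuration of the PC model, now sitting on top of the nontrivial central jump $-\ii\sigma_1$. I expect this step to be the main obstacle. The fix is to write the local solution as $Y_N(z)\,E(z)\,\Psi_{\rm PC}(\zeta)\cdots$ with $\zeta=t^{1/2}\sqrt{c}(z-\alpha_\ell)$. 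Here $E$ is analytic and absorbs the $-\ii\sigma_1$ jump by factoring out $Y_N$ locally, using the square-root factor $(\cdots)^{\pm\sigma_1/4}$-type conjugation, so that the residual local problem has only the four triangular PC jumps. Matching then yields $I+O(t^{-1/2})$ on the boundary of the disc.

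Define the error $\mathcal E=T\,(\text{parametrix})^{-1}$. Its jumps are $I+O(\ee^{-ct})$ on lens lips, $I+O(t^{-1})$ on the endpoint circles, and, in $\mathcal M_\ell$, $I+O(t^{-1/2})$ on the circles around $\alpha_\ell^{\pm1}$. The small-norm theory \cite{Zhou1989,DeiftZhou2002} applies after handling the $z=0$ normalization. We do this by noting that $\mathcal E$ has no singularity at $0$, because $T$ and $Y_N$ share the $\sigma_1/z$ behavior and the symmetries \eqref{eq:M-symmetry}, so $\mathcal E$ is an ordinary normalized problem. This gives $\mathcal E=I+O(t^{-1})$ or $I+O(t^{-1/2})$ uniformly in $z$ and uniformly on admissible compact sets. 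Uniformity in $\xi$ requires that $\alpha_\ell(\xi)$, $c$, $\nu_\ell$ and $\Delta_N(\xi)$ depend continuously on $\xi$, which follows from the simplicity of the switching point and the explicit formula \eqref{eq:delta-Cauchy}.

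Finally we reconstruct. Undoing \eqref{eq:S-transform} at $z=\infty$ gives $M=\ee^{\ii tg_{N,\infty}\sigma_3}\delta_{N,\infty}^{\sigma_3}\,\mathcal E\,Y_N\,\ee^{-\ii tg_N\sigma_3}\delta_N^{-\sigma_3}$ there. Taking $\lim zM_{21}$, the diagonal conjugations contribute $\ee^{-2\ii tg_{N,\infty}}\delta_{N,\infty}^{-2}$ times $(Y_{N,1})_{21}+(\mathcal E_1)_{21}$. By \eqref{eq:YN21}, $(Y_{N,1})_{21}=\AN\TN(\mathbf Z_N)$, which yields \eqref{eq:main-theorem} with the stated error orders. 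In the mixed sector the $O(t^{-1/2})$ correction is the residue of the PC matching term and is not cancelled. The absence of a genus cascade is immediate, because the model curve and the model problem RHP~\ref{rhp:zmodel} are the same in every sector; only $\Delta_N$ changes.
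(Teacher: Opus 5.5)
Your overall route is the paper's. The chain $M\mapsto S\mapsto T$, the fixed genus-$N$ outer model $Y_N$, local parametrices at $\eta_m^{\pm1}$ and $\alpha_\ell^{\pm1}$, the small-norm problem for $E=TP^{-1}$, and the reconstruction at $z=\infty$ all agree with Section~\ref{sec:proof}. Your remarks on the uniform boundedness of $Y_N$ via \eqref{eq:theta-positive-real} and on the regularity of the error at $z=0$ are more explicit than the paper.

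However, the endpoint analysis you lead with is wrong. By \eqref{eq:phiN-prime}, $\phi_N'=(\xi\PN-2\QN)/(2\RN)$, and at $\eta_m$ the numerator equals $\PN(\eta_m)(\xi-v_m)\neq0$. Indeed $\mathsf P_N(e_m)\neq0$ because all zeros of $\mathsf P_N$ lie in the open cuts (Lemma~\ref{lem:band-zeros}), and $\xi\neq v_m$ by admissibility. Since $\RN$ has a square-root zero at $\eta_m$, you get $\phi_N'\sim c(z-\eta_m)^{-1/2}$ and $\phi_N(z)-\phi_N(\eta_m)\sim 2c(z-\eta_m)^{1/2}$, which is a hard edge. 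The $(z-\eta_m)^{3/2}$ vanishing you assert occurs only on the excluded critical rays $\xi=v_m$ (Lemma~\ref{lem:band-zeros} with $k_0=e_m$). So an Airy parametrix cannot be matched to $Y_N$ here. You must commit to your fallback, the modified-Bessel model in the variable \eqref{eq:Bessel-coordinate}, which is what the paper uses and which gives the $I+O(t^{-1})$ matching. Also, the vanishing of $a_j,b_j$ is not what makes $r,\rho$ regular at the endpoints. Through $\beta_{j,+}$ in \eqref{eq:raw-to-final-data} it forces $r,\rho=O(|z-\eta|^{1/2})$ there. The endpoint behaviour of the data entering the Bessel model has to be taken from Assumption~\ref{ass:steepest}.

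At the switching point, your ansatz $Y_N E\,\Psi_{\rm PC}(\zeta)$ with $E$ analytic is set up the wrong way round. An analytic factor cannot carry the band jump. With $Y_N$ on the left, the band condition forces $\Psi_{+}=E^{-1}(\ii\sigma_1)E\,\Psi_{-}(-\ii\sigma_1)$, which is not a PC jump. Nor is a quarter-power $\sigma_1$-conjugation needed at an interior point of a band; those belong to the endpoints.

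A clean way to realize the paper's ``analytic conjugations'' is piecewise constant. In the disc about $\alpha_\ell$, set $\hat T=T$ on the outer side and $\hat T=T\,(\ii\sigma_1)$ on the inner (plus) side. This removes the band jump and replaces each inner lip jump $X$ by $\sigma_1X\sigma_1$. Glue $\phi_N$ outside to $-\phi_N$ inside (up to the real band constant) to get a single analytic phase. Then the upper-triangular jumps lie on the line joining the outer $r$-lip to the inner $\rho$-lip, and the lower-triangular ones on the other line. This is exactly the X-shaped PC contour with a nondegenerate critical point. The local parametrix is then $E\,\Psi_{\rm PC}(\zeta)\,(\text{scalar factors})\,W$, with $W=I$ outside and $W=-\ii\sigma_1$ inside, and $E=Y_NW^{-1}(\cdots)^{-1}$ is genuinely analytic. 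With these two corrections your argument coincides with the paper's proof.
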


\begin{remark}[Why every sector has the same genus]\label{rem:same-genus}
The full jump contains one growing exponential regardless of the sign of the bare phase.  A triangular factorization is therefore required on every spectral arc.  Proposition~\ref{prop:full-reduction} shows that after the triangular factors are moved to exponentially small lens lips, the same constant central jump remains on all $N$ upper arcs.  This is the precise reason that the reduced model has fixed genus $N$ in every one of the $2N+1$ sectors.
\end{remark}

\section{Proof of Theorem~\ref{thm:main}}\label{sec:proof}

\subsection{Global deformation}

The transformation $M\mapsto S$ is given by \eqref{eq:S-transform}.  Proposition~\ref{prop:signs} selects the decaying triangular factor on every component of every band.  The all-band transformation $S\mapsto T$ is given by \eqref{eq:T-transform}.  In a mixed sector the active band is opened crosswise at $\alpha_\ell$; all $j>\ell$ bands are also opened with the $\rho$-factorization.  Proposition~\ref{prop:full-reduction} then reduces the problem to the fixed model RHP~\ref{rhp:zmodel}, up to exponentially small jumps on all lens lips.

\subsection{Local parametrices}

Place discs around the $4N$ fixed endpoints $\eta_m^{\pm1}$, $m=1,\ldots,2N$.  Near a fixed endpoint, $\phi_N$ has square-root behavior.  A conformal coordinate of the form
\begin{equation}\label{eq:Bessel-coordinate}
 \zeta=-t^2\bigl(\phi_N(z;\xi)-\phi_N(\eta_m;\xi)\bigr)^2
\end{equation}
reduces the local problem to the modified-Bessel model.  The matching error is $I+O(t^{-1})$.  This is the standard hard-edge construction; see \cite{KuijlaarsEtAl2004,BertolaWangYanZhu2026}.

In a mixed sector $\mathcal M_\ell$, add discs around $\alpha_\ell(\xi)$ and $\alpha_\ell(\xi)^{-1}$.  Since
\begin{equation}\label{eq:switch-simple}
 \phi_N'(\alpha_\ell;\xi)=0,
 \qquad \phi_N''(\alpha_\ell;\xi)\neq0,
\end{equation}
there is a conformal coordinate
\begin{equation}\label{eq:PC-coordinate}
 \zeta=t^{1/2}(z-\alpha_\ell)\psi_\ell(z;\xi),
 \qquad \psi_\ell(\alpha_\ell;\xi)\neq0.
\end{equation}
Proposition~\ref{prop:Szego} separates the local scalar factor into an analytic nonzero factor and the power $(z-\alpha_\ell)^{\ii\nu_\ell}$, while the phase has the quadratic expansion
\[
 \phi_N(z;\xi)-\phi_N(\alpha_\ell;\xi)
 =\frac12\phi_N''(\alpha_\ell;\xi)(z-\alpha_\ell)^2
 +O((z-\alpha_\ell)^3).
\]
After analytic conjugations, the four crossed lens rays and the two portions of the original band give the standard six-ray parabolic-cylinder local RHP.  Its parameter is
\begin{equation}\label{eq:nu}
 \nu_\ell(\xi)=-\frac{1}{2\pi}
 \log\left(c_r(\alpha_\ell)c_\rho(\alpha_\ell)\right)
 =-\frac{1}{2\pi}\log\left(
 \frac{4r(\alpha_\ell)\rho(\alpha_\ell)}
 {(1+r(\alpha_\ell)\rho(\alpha_\ell))^2}\right)>0.
\end{equation}
The large-$\zeta$ expansion of the parabolic-cylinder model gives the matching error $I+O(t^{-1/2})$; see \cite{BoutetLenellsShepelsky2022,YanGengWei2026,FullCH2026}.

\subsection{Small-norm error problem}

Let $P^{(\infty)}=Y_N$ be the outer model and let $P$ denote the global parametrix obtained by inserting the endpoint and switching-point local parametrices.  Define
\begin{equation}\label{eq:error}
 E(z)=T(z)P(z)^{-1}.
\end{equation}
On every lens lip away from the discs,
\begin{equation}\label{eq:error-lips}
 J_E(z)=I+O(\ee^{-ct}).
\end{equation}
On endpoint-disc boundaries, $J_E=I+O(t^{-1})$.  In a mixed sector, the switching discs give $J_E=I+O(t^{-1/2})$.  Hence
\begin{align}
 E(z)&=I+\frac{E_1(x,t)}{z}+O(z^{-2}),\label{eq:E-expansion}\\
 E_1(x,t)&=O(t^{-1})&&\text{in }\mathcal P_\ell,\label{eq:E-pure}\\
 E_1(x,t)&=O(t^{-1/2})&&\text{in }\mathcal M_\ell.\label{eq:E-mixed}
\end{align}

\subsection{Reconstruction}

Undoing the transformations gives
\begin{equation}\label{eq:undo}
 M(z)=\delta_{N,\infty}^{\sigma_3}\ee^{\ii t g_{N,\infty}\sigma_3}
 E(z)P(z)\ee^{-\ii t g_N(z;\xi)\sigma_3}
 \delta_N(z;\xi)^{-\sigma_3}.
\end{equation}
Using \eqref{eq:gN-infty}, \eqref{eq:delta-sym}, and Proposition~\ref{prop:model}, the coefficient of $z^{-1}$ in the $(2,1)$ entry is
\begin{equation}\label{eq:leading-reconstruction}
 \AN\ee^{-2\ii t g_{N,\infty}(\xi)}\delta_{N,\infty}(\xi)^{-2}
 \TN\!\left(\frac{t\Omega_N(\xi)+\Delta_N(\xi)}{2\pi}\right),
\end{equation}
up to the contribution of $E_1$.  Equations \eqref{eq:E-pure}--\eqref{eq:E-mixed} and the reconstruction formula \eqref{eq:reconstruction} prove Theorem~\ref{thm:main}.

\section{Concluding remarks}

The full arbitrary-genus dNLS gas differs from the half gas at the global deformation level.  The active arc in a mixed sector is split and opened crosswise, but the later spectral arcs do not disappear: they are opened with the second triangular factorization and remain as constant-jump contours in the outer model.  This gives a fixed genus-$N$ finite-gap leading term in every sector.  The only sector dependence is carried by the Szeg\H{o} vector $\Delta_N$, the scalar $\delta_{N,\infty}$, and, in mixed sectors, the parabolic-cylinder local correction.

The analysis excludes neighborhoods of the endpoint velocities $v_m$.  At a critical ray the switching point collides with a fixed endpoint, and a separate transition parametrix is required.  

\section*{Data availability}
No new data were created or analyzed in this study.

\section*{Conflict of interest}
The authors declare no conflict of interest.

\section*{Acknowledgments}
This work is supported by the National Natural Science Foundation of China (Grant Nos. 12471234, 12401320, 12471240) and Science Foundation of Henan Academy of Sciences (Grant No. 20252319002).

\end{document}